\documentclass[algo2e, 10pt,twocolumn]{IEEEtran}
\usepackage{soul,xcolor,lipsum,enumitem}
\usepackage[linesnumbered,ruled,vlined]{algorithm2e}

\SetCommentSty{mycommfont}
\usepackage{algorithmicx,comment,verbatim}
\usepackage{tablefootnote}
\usepackage{algpseudocode}
\usepackage{amsfonts}
\usepackage{amsmath,empheq}
\usepackage{booktabs}
\usepackage{upgreek}
\usepackage{array}
\usepackage{amssymb}
\usepackage{amsthm}
\usepackage[ansinew]{inputenc} 
\usepackage[colorlinks=true, allcolors=blue]{hyperref}
\usepackage{dsfont}
\usepackage{mathtools}
\usepackage{graphicx}
\usepackage[skip=1pt,font=scriptsize]{subcaption}
\usepackage{import}
\usepackage{multirow}
\usepackage{cite}
\usepackage[export]{adjustbox}
\usepackage{breqn}
\usepackage{mathrsfs}
\usepackage{acronym}
\usepackage{balance}
\usepackage{setspace}
\usepackage{stackengine}
\usepackage{steinmetz}
\usepackage[skip=2pt,font=footnotesize]{caption}

\newcommand{\nn}{\nonumber}
\newcommand{\mc}{\mathrm c}
\newcommand{\mg}{\mathrm g}
\newcommand{\sigc}{\sigma_{\mathrm c}}
\newcommand{\sigg}{\sigma_{\mathrm g}}
\newcommand{\M}{\Phi}

\newcommand{\etas}{\eta_{(s)}}
\newcommand{\deltas}{\delta_{(s)}}
\newcommand{\gammas}{\gamma_{(s)}}
\newcommand{\etag}{\eta_{\mathrm{g}}}
\newcommand{\etad}{\eta_{\mathrm{dgd}}}
\newcommand{\bg}{b_{\mathrm{g}}}
\newcommand{\bd}{b_{\mathrm{dgd}}}
\newcommand{\Bg}{B_{\mathrm{g}}}
\newcommand{\Bd}{B_{\mathrm{dgd}}}

\newcommand{\x}{\mathbf x}
\newcommand{\DNR}{\mathrm{DNR}}

\newcommand{\GCR}{\mathrm{GCR}}
\newcommand{\xstar}{\mathbf x^\star}
\newcommand{\xloc}{\mathbf x^{\mathrm{loc}}}
\newcommand{\Eloc}{\mathrm{E}^{\mathrm{loc}}}
\newcommand{\W}{\mathbf W}

\newcommand{\I}{\mathbf I}
\newcommand{\err}{\varepsilon}
\newcommand{\errt}{\varepsilon_{\mathrm{th}}}
\newcommand{\Es}{\overline{\mathrm E}_{(s)}}
\newcommand{\E}[1]{\overline{\mathrm E}_{(#1)}}
\newcommand{\Esn}{\overline{\mathrm E}_{(s+1)}}
\newcommand{\init}{{\mathrm{init}}}
\newcommand{\gnd}{{\mathrm{gnd}}}
\newcommand{\dnr}{{\mathrm{dnr}}}
\newcommand{\cnd}{{\mathrm{cnd}}}

\makeatletter

\def\@seccntformat#1{\csname the#1\endcsname\quad}
\newcommand{\twocol}[1]{%
\if@twocolumn
#1
\fi}
\newcommand{\onecol}[1]{%
\if@twocolumn
\else
#1
\fi}
\newcommand{\onetwocol}[2]{%
\if@twocolumn
#2
\else
#1
\fi}
\makeatother

\newtheorem{theorem}{Theorem}
\newtheorem{assumption}{Assumption}
\newtheorem{lemma}{Lemma}
\theoremstyle{definition}

\newtheorem{corollary}{Corollary}
\newtheorem{remark}{Remark}
\theoremstyle{remark}

\usepackage{cancel}
\usepackage{bbm}

\graphicspath{{./figures/}}
\newcommand{\secref}[1]{Sec.~\ref{#1}}

\newcommand{\1}{\mathbf 1}

\newcommand\nm[1]{\textcolor{blue}{\bf NM:[ #1]}}

\newcommand\mainfig{\begin{figure*}[t]
\centering

\begin{subfigure}{0.48\linewidth}
    \centering
    \includegraphics[width=.9\linewidth]{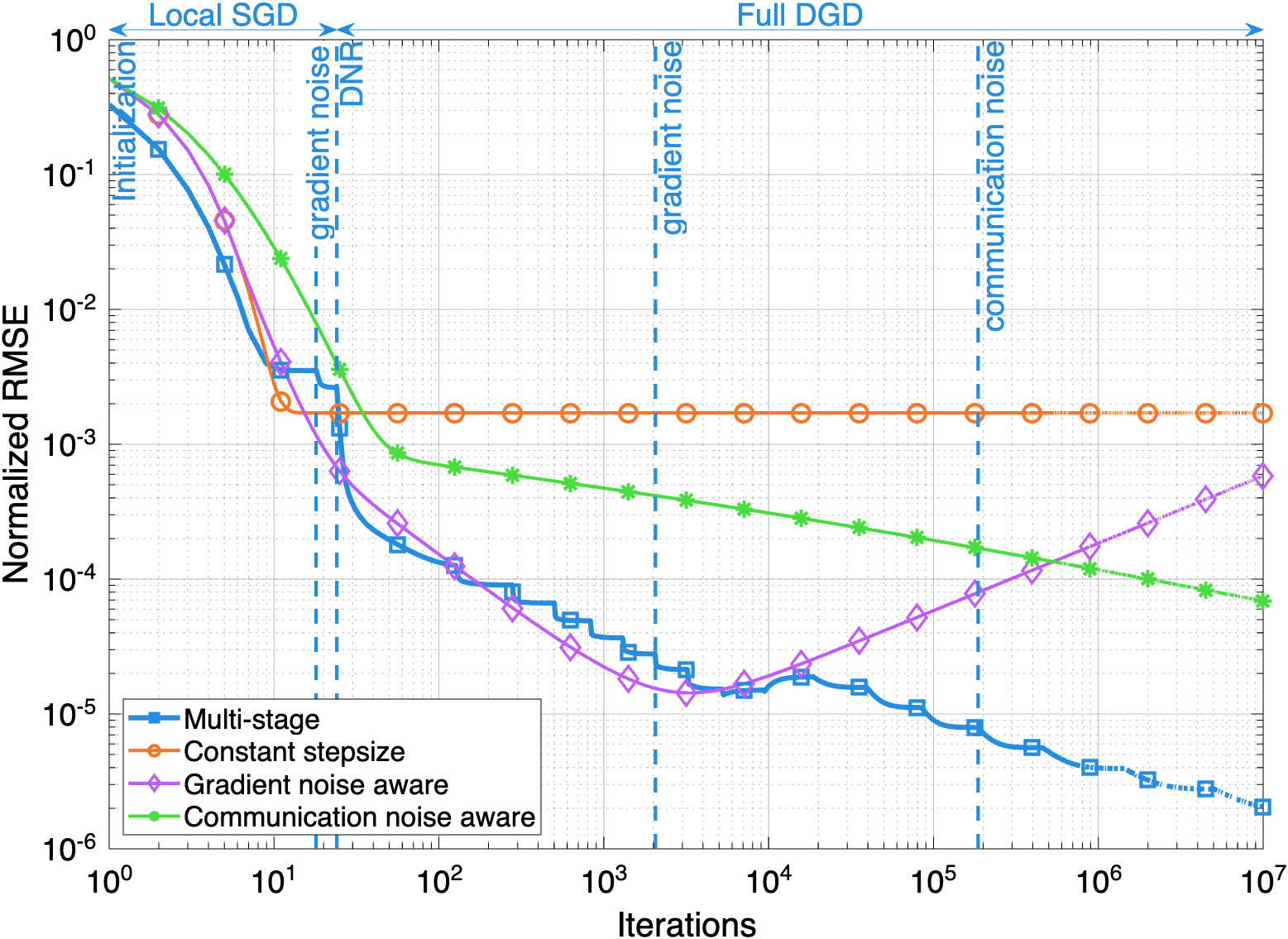}
    \label{fig:comp_set4}
\end{subfigure}
\hfill
\begin{subfigure}{0.48\linewidth}
    \centering
    \includegraphics[width=.9\linewidth]{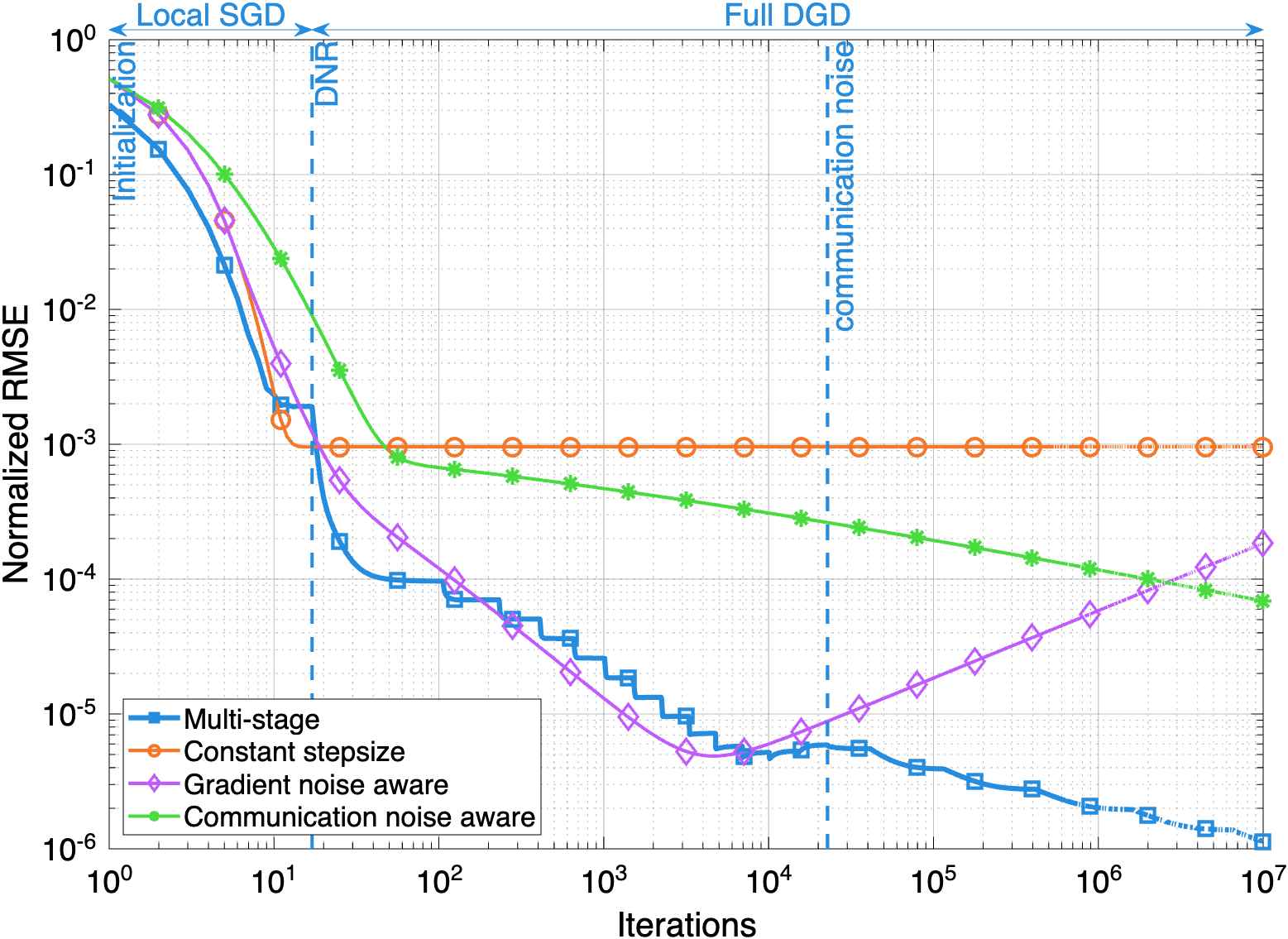}
    \label{fig:comp_set2}
\end{subfigure}

\vspace{-2mm}

\begin{subfigure}{0.48\linewidth}
    \centering
    \includegraphics[width=.9\linewidth]{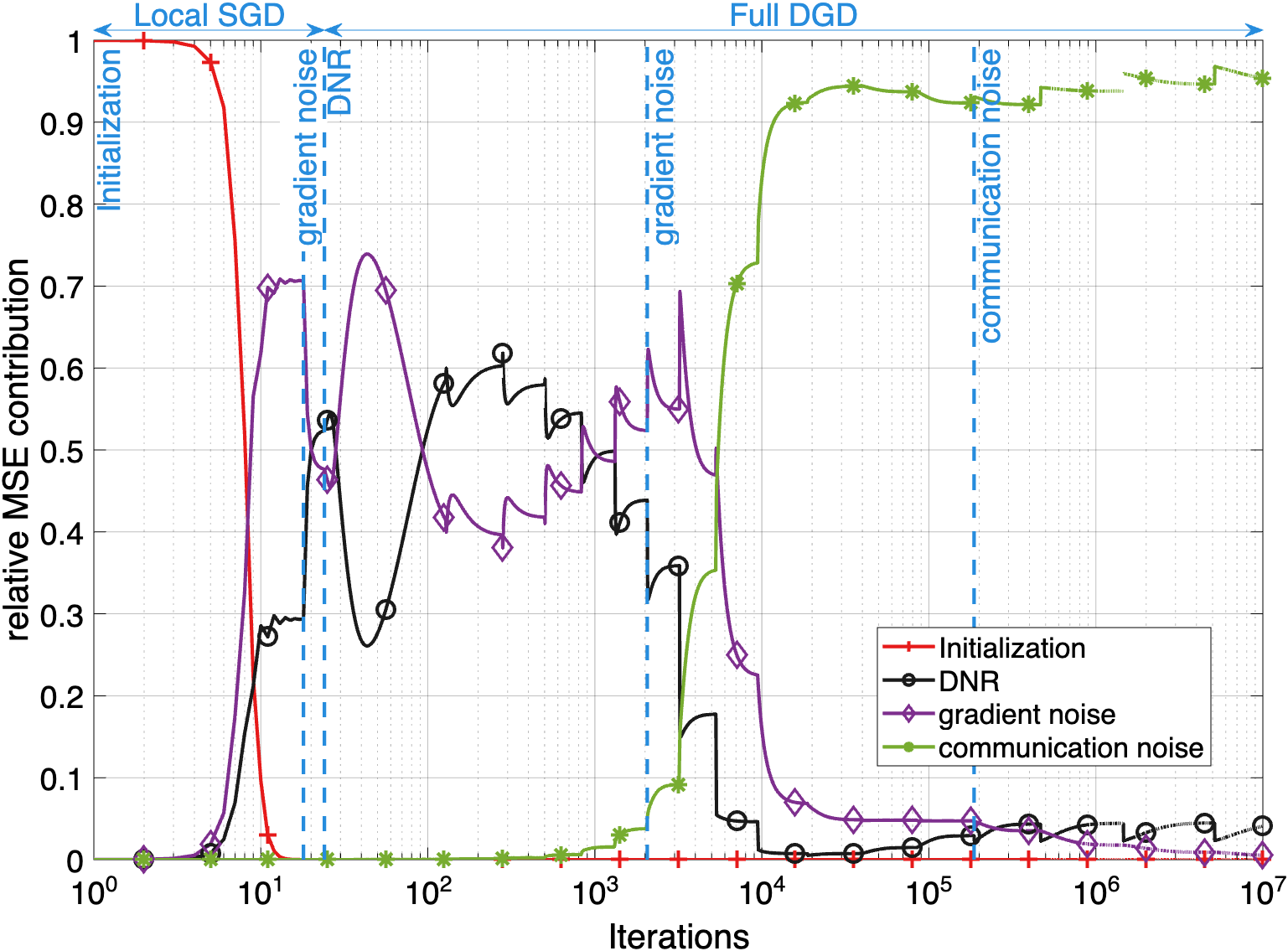}
    \caption{High-GCR, intermediate DNR-to-gradient-noise ratio setting: \(\DNR{\approx}1.2{\cdot} 10^6\), \(\GCR{=}  3.7{\cdot}10^6\), \onecol{\\}\(\sigc{\approx} 3{\cdot} 10^{-4}\|\nabla f(\xstar)\|\), \(\sigg{\approx} 1.4\|\nabla f(\xstar)\|\).}
    \label{fig:perc_set4}
\end{subfigure}
\hfill
\begin{subfigure}{0.48\linewidth}
    \centering
    \includegraphics[width=.9\linewidth]{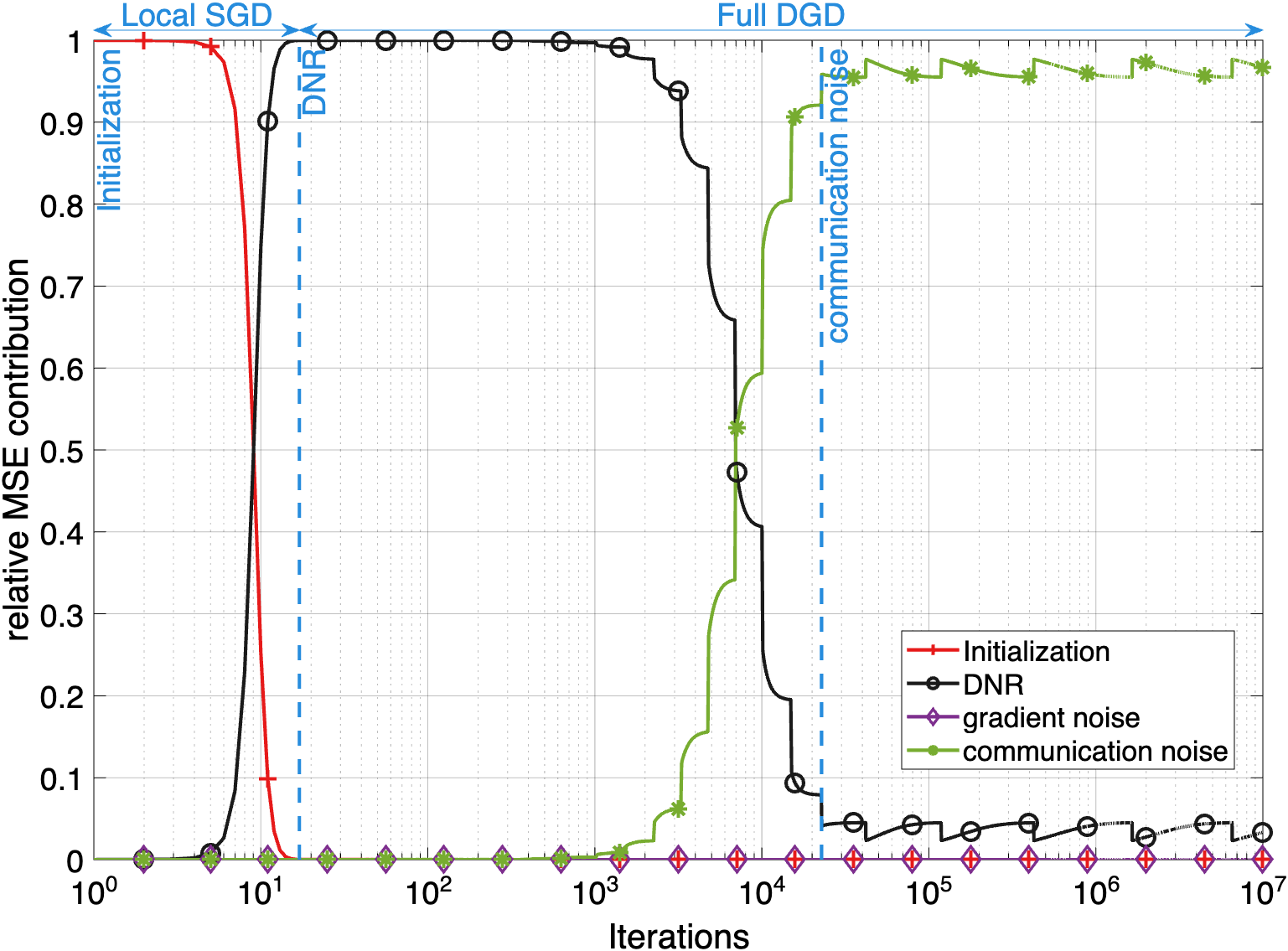}
    \caption{High DNR-to-gradient-noise ratio setting:\onecol{\\} \(\DNR{\approx}1.2\cdot 10^6\), \(\GCR{=}1\),
    \onecol{\\}
     \(\sigc{\approx} 
    9.1{\cdot} 10^{-5}\|\nabla f(\xstar)\|\), \(\sigg{\approx} 2.3{\cdot}10^{-4}\|\nabla f(\xstar)\|\).}
    \label{fig:perc_set2}
\end{subfigure}
\vspace{3mm}
\caption{
Top row: normalized RMSE under different stepsize schedules.
Bottom row: relative contributions of initialization, DNR, gradient noise, and communication noise to the overall MSE. The plots show also the phase and regime transitions under the proposed multi-stage schedule.
}
\twocol{\vspace{-3mm}}
\label{fig:set4}
\end{figure*}
}

\title{Decentralized Gradient Descent: \\Bottleneck Regimes and Budget Complexity}

\author{Nicol\`o Michelusi,~\IEEEmembership{Senior Member,~IEEE}
\thanks{N. Michelusi is with the School of Electrical, Computer and Energy Engineering, Arizona State University.}
\thanks{This research has been funded in part by NSF under grant CNS-2129015.}
}

\begin{document}

\setstcolor{red}
\setulcolor{red}
\setul{red}{2pt}
\maketitle

\onecol{\vspace{-15mm}}

\begin{abstract}
Decentralized gradient descent (DGD) is widely used for solving distributed optimization problems over networks of agents. While its convergence properties are well understood, less is known about the communication and computation resources required to attain a prescribed accuracy.
In this paper, we study DGD from a resource-aware perspective and characterize the communication-computation budget required to attain a target error level. We develop a bottleneck-centric framework in which different factors dominate the optimization dynamics at different error scales. Specifically, we identify operating regimes governed by initialization, objective heterogeneity and network connectivity, gradient noise, and communication noise. To capture these effects, we introduce two fundamental quantities: the gradient-Diversity-to-Network-connectivity Ratio (DNR) and the Gradient-to-Communication-noise Ratio (GCR). We show that these quantities determine the sequence of bottlenecks encountered during optimization and the corresponding budget-optimal operating strategy. Using a multi-stage analysis, we derive optimal stepsize selections and explicit budget-complexity bounds that quantify the budget resources required to attain a prescribed accuracy. The resulting expressions reveal how the overall budget decomposes into contributions associated with successive bottlenecks and provide insight into the fundamental tradeoffs among objective heterogeneity, network connectivity, gradient noise, and communication noise.
\end{abstract}
\onecol{\vspace{-5mm}}
\section{Introduction}

Many emerging systems consist of large collections of interconnected devices that must collaboratively solve optimization problems using locally available data. Examples arise in machine learning, distributed inference, estimation, remote sensing, and multi-agent control, where the objective is to learn a common model or estimate a global parameter from information dispersed across the network \cite{6494683,Nedic2018,YANG2019278}. In such settings, \(N\) devices seek to cooperatively solve
\begin{align}
\label{global}
x^\star= \arg\min_{x \in \mathbb{R}^d}\ \frac{1}{N} \sum_{i=1}^N f_i(x),
\tag*{\{P\}}  
\end{align}
where \(f_i:\mathbb{R}^d\mapsto\mathbb{R}\) is a local objective function known only to node \(i\). In machine-learning applications, \(f_i\) typically represents the empirical loss associated with a local dataset, whereas in distributed estimation problems it quantifies the mismatch between local measurements and the underlying parameter of interest \cite{10103556}.
The challenge is to solve \ref{global} efficiently despite distributed information and limited communication resources.

In many applications, including swarms of uncrewed aerial vehicles operating in remote areas \cite{9475989}, solving \ref{global} without centralized data aggregation is paramount \cite{8950073}.
Decentralized optimization algorithms, such as decentralized gradient descent (DGD)  \cite{Nedic2009grad,Yuan2016}, 
solve \ref{global} by alternating between local computation and information exchange with neighboring nodes. 
Over the past decade, a large body of work has established convergence guarantees for DGD and its variants under a variety of assumptions on the objective functions and network topology \cite{Nedic2009,Yuan2016}. 
Subsequent works have investigated stochastic gradients \cite{Lian17}, communication impairments 
 \cite{Kar2009,10947567,10680589,michelusi26,9562482,9772390,9716792,9838891}, as well as communication-efficient schemes based on quantization and compression \cite{8786146,Koloskova2019,9782148}. 
 Collectively, these works have revealed the key role of objective heterogeneity, network connectivity, gradient and communication noises.

While convergence rates are now relatively well understood,  less is known about the communication and computation resources required to attain a prescribed accuracy. This question is becoming increasingly important in large-scale and resource-constrained systems, where communication latency, transmission energy, and computational costs may dominate the overall optimization process. A key observation is that reducing the optimization error is not equally difficult across all stages of the algorithm: when the error is large, local computation may suffice, whereas at higher accuracy levels other factors, such as objective heterogeneity, gradient noise, or communication noise, may become dominant.
This suggests that the optimization process should be viewed as a sequence of stages, each governed by a different dominant factor and therefore requiring a different operating strategy.

This observation motivates the bottleneck-centric perspective adopted in this paper. 
 Rather than treating all error sources as simultaneously active throughout the optimization process, we identify the bottleneck governing convergence at each error scale and characterize the most cost-effective strategy for overcoming it.
 This viewpoint leads to a multi-stage formulation in which different bottlenecks become active at different stages of the optimization trajectory, giving rise to distinct operating regimes and fundamentally different budget scalings.

 
 Since the resulting budget-allocation problem is generally intractable, we instead seek a tractable characterization of the dominant cost components that determine the budget required to attain a prescribed accuracy. To this end, our analysis identifies two fundamental quantities governing the optimization dynamics: the gradient-Diversity-to-Network-connectivity Ratio (DNR) and the Gradient-to-Communication-noise Ratio (GCR). The former captures the tension between objective heterogeneity and network connectivity, whereas the latter captures the relative importance of gradient and communication noises. Together, these quantities determine which bottleneck is active at a given error scale and therefore dictate the budget-optimal operating mode and parameter selection.
 Building on this characterization, we derive explicit budget-complexity bounds that quantify the dependence of the required budget on the DNR, GCR, target accuracy, and noise levels, and reveal the contributions of successive bottlenecks. 
 Figure~\ref{fig:bottleneck} schematically summarizes the main budget-complexity characterization developed in this paper.
 
 More broadly, the proposed bottleneck-centric perspective opens a new direction for adaptive decentralized optimization frameworks.

\onecol{\vspace{-5mm}}
\subsection{Related Work}

Early works established convergence bounds for DGD under strongly convex and smooth objectives over connected networks, revealing the impact of objective heterogeneity and network topology on performance \cite{Nedic2009grad,Yuan2016}. Unlike centralized gradient descent, which converges linearly to the global optimum under strong convexity, DGD with a constant learning stepsize converges linearly only to a neighborhood of the optimum \cite{Yuan2016,10947567}. This bias stems from the combination of objective heterogeneity and decentralized information exchange.
To eliminate this bias, several works have considered decreasing learning stepsizes, under which the distance to the global optimum converges at a sublinear rate of \(\mathcal O(1/t)\) \cite{10947567}. \emph{These results reveal that, even in the absence of stochastic perturbations, objective heterogeneity and limited network connectivity fundamentally limit the convergence behavior of DGD.}

Several exact decentralized optimization methods, including EXTRA, exact diffusion, and gradient tracking \cite{Shi2015EXTRA,8491372,7798263,7398129}, have been proposed to eliminate the heterogeneity-induced bias of DGD. These methods may improve convergence performance but require additional communication, memory, and computational resources.


Several works have extended DGD to stochastic settings in which the exact local gradients are replaced by estimates obtained from minibatch sampling or noisy observations \cite{10947567,Lian17,Tang2018}.
In terms of the root mean squared error (RMSE) between the iterates and the global optimum, gradient noise slows the convergence rate of DGD from \(\mathcal O(1/t)\) to \(\mathcal O(1/\sqrt{t})\) in the strongly convex setting \cite{7405263}. These results suggest that \emph{gradient noise may become the dominant limitation once the error is sufficiently small, thereby inducing a bottleneck distinct from that associated with objective heterogeneity and network connectivity.}

Another line of work has investigated decentralized optimization under imperfect or constrained communications, including link failures, channel noise, fading, interference, unreliable links, rate-constrained networks, and quantized communications \cite{Kar2009,10680589,michelusi26,9562482,9716792,8786146}. 
Unlike gradient noise, communication noise directly affects the consensus process itself, creating a tradeoff between information exchange and noise accumulation,
typically controlled via a consensus stepsize\cite{8786146,10680589,10947567,9563232}. As a result, communication noise leads to a more severe degradation of the convergence behavior, yielding an RMSE decay of \(\mathcal O(1/\sqrt[4]{t})\) \cite{8786146,10947567}. This suggests that \emph{communication noise may eventually become the dominant bottleneck limiting further error reduction.}

Communication-efficient decentralized optimization has also been studied through quantization, compression, and related techniques \cite{Kovalev2020,9782148,Liao2021,Magnusson2020}. While these methods reduce communication overhead and can preserve favorable convergence guarantees, they primarily focus on algorithm design rather than on characterizing the budget implications of different bottlenecks.


The communication-computation tradeoff in decentralized optimization has also been recently investigated through multi-consensus and nested DGD methods \cite{berahas2018balancing,berahas2021convergence,choi2022convergence}. These approaches perform multiple consensus and/or gradient updates within each iteration  to improve  convergence. Their analyses demonstrate that the balance between communication and computation can significantly affect the efficiency of decentralized optimization and that increasing the amount of communication may substantially accelerate convergence. However, these works focus on the design and analysis of specific algorithmic architectures and do not explicitly characterize how the communication-computation budget should evolve as different impairments become active throughout the optimization process.

Taken together, the above works suggest that different impairments may dominate the optimization dynamics at different error scales. However, existing analyses largely focus on a specific impairment or algorithmic enhancement. In contrast, we develop a bottleneck-centric framework that characterizes how different bottlenecks become active throughout the optimization process and how they jointly determine the budget required to attain a prescribed accuracy.


\begin{figure*}[t]
\centering
\includegraphics[width=\onecol{.8}\textwidth]{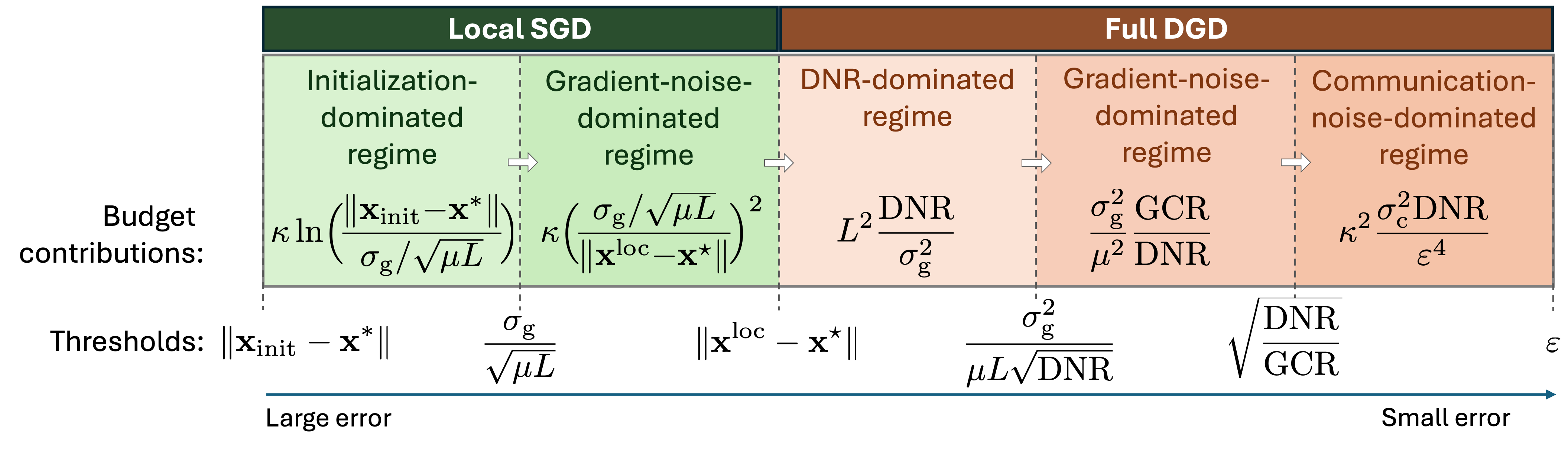}
\twocol{\vspace{-6mm}}
\caption{
Representative optimization trajectory under the proposed bottleneck-centric framework (high-GCR, intermediate-DNR setting studied in \secref{highGCRmedDNR}). As the error decreases, the dominant bottleneck changes, leading to distinct operating regimes and budget contributions. Thresholds and budget contributions are shown up to multiplicative constants.  See Table~\ref{tab:notation} for symbols.
}
\label{fig:bottleneck}
\end{figure*}

\subsection{Contributions}

The main contributions of this paper are summarized as follows:

\begin{itemize}[leftmargin=*]

\item 
We develop a resource-aware, bottleneck-centric framework for decentralized optimization and introduce a multi-stage formulation that partitions the optimization trajectory into successive error scales.

\item 
We introduce two fundamental quantities,  the gradient-Diversity-to-Network-connectivity Ratio (DNR) and the Gradient-to-Communication-noise Ratio (GCR), and show that they govern the sequence of bottlenecks encountered during optimization.

\item We identify the resulting bottleneck-dominated operating regimes and derive the corresponding budget-optimal operating mode and parameter selection across different error scales.

\item 
We derive explicit budget-complexity bounds that characterize the communication-computation resources required to attain a prescribed accuracy and quantify the contribution of successive bottlenecks to the overall budget.

\end{itemize}

The remainder of this paper is organized as follows. \secref{sysmo} introduces the system model and the multi-stage formulation. \secref{singlestage} develops the single-stage optimization framework and identifies the operating regimes associated with different bottlenecks. \secref{complexity} derives budget-complexity bounds. \secref{numres} presents numerical results, followed by concluding remarks in \secref{concl}. Proofs of the main theoretical results are deferred to the Appendix.
\onecol{\vspace{-5mm}}
\subsection{Notation}

Bold lowercase and uppercase letters denote vectors and matrices, respectively. 
\(\mathbf 0\) and \(\mathbf 1\) are
the all-zeros and all-ones vectors, respectively. \(\I_d\) is the \(d\times d\) identity matrix. Vector/matrix transpose is denoted by \((\cdot)^\top\), \(\|\cdot\|\) denotes the Euclidean norm, and \(\otimes\) denotes the Kronecker product. The gradient of a function \(f\) is denoted by \(\nabla f\), and \(\mathbb E[\cdot]\) denotes expectation. 
The main symbols are summarized in Table~\ref{tab:notation}.

\begin{table*}[t]
\centering
\caption{Summary of symbols}
\label{tab:notation}
\footnotesize
\setlength{\tabcolsep}{2pt}
\setlength{\arrayrulewidth}{0.8pt}
\resizebox{\textwidth}{!}{%
\twocol{\begin{tabular}{>{\raggedleft\arraybackslash}p{1cm} p{7.5cm} | >{\raggedleft\arraybackslash}p{1cm} p{7.5cm} }}
\onecol{\begin{tabular}{>{\raggedleft\arraybackslash}p{1cm} p{8.5cm} | >{\raggedleft\arraybackslash}p{1cm} p{8.5cm} }}
\noalign{\hrule height 0.8pt}
\noalign{\vskip 1pt}
\textbf{Symbol} & \textbf{Definition} &\textbf{Symbol} & \textbf{Definition}\\
\noalign{\hrule height 0.8pt}
$\mu$ & strong-convexity parameter (Assumption \ref{strongas})&$\xstar$ & global optimum (Eqs. \ref{xstar1} \& \ref{xstar}) \\
$L$ & smoothness parameter (Assumption \ref{strongas}) & $\xloc$ & local minima (Eqs. \ref{xloci} \& \ref{xloc})\\
$\kappa$ & condition number $L/\mu$&$\x_t$ & iterate at time $t$ (Eq. \ref{xtdyn})
\\
$N$ & number of agents&
$\W$ & mixing matrix (Eq. \ref{xtdyn} \& Assumption \ref{Was}) \\
$\sigc^2$ & communication noise variance bound (Eq. \ref{xtdyn} \& Assumption \ref{noiseas}) &
$E_t$ & root mean squared error at time $t$ (Eq. \ref{rmse})
\\
$\sigg^2$ & gradient noise variance bound (Eq. \ref{xtdyn} \& Assumption \ref{noiseas})
&
$\err$ & target error for the optimization problem (Eq. \ref{budopt})
\\
$\DNR$ & gradient-Diversity-to-Network-connectivity Ratio (Eq. \ref{DNR})&
$\Eloc$ & distance between local \& global optimal (Eq. \ref{Eloc})
\\
$\GCR$ & Gradient-to-Communication noise Ratio (Eq. \ref{GCR})
&
$\Es$ & error bound at the start of stage $s$ (Eq. \ref{barEs})
\\
$\M$ & stage-wise error decay factor (Eqs. \ref{deltasloc} \& \ref{deltasDGD})
&
$\eta_t,\etas$ & learning stepsize at time $t$ or stage $s$ (Eqs. \ref{xtdyn} \& \ref{ss_stage})
\\
$\bd$ & budget cost of Full DGD iteration
&
$\gamma_t,\gammas$ & consensus stepsize at time $t$ or stage $s$ (Eqs. \ref{xtdyn} \& \ref{ss_stage})
\\
$\bg$ & budget cost of Local SGD iteration &
$\deltas$ & number of iterations of stage $s$ (Eq. \ref{stagelength})
\\
\noalign{\vskip 1pt}
\noalign{\hrule height 0.8pt}
\end{tabular}}
\end{table*}

\section{System Model  and Multi-Stage Formulation}\label{sysmo}

We consider $N$ spatially distributed devices cooperatively solving the optimization problem
\begin{align}
\label{xstar1}
x^\star= \arg\min_{x \in \mathbb{R}^d}
F(x),
\end{align}
where
$F(x)\triangleq\frac{1}{N} \sum_{i=1}^N f_i(x)$ is the global objective,
 $f_i:\mathbb R^d\mapsto\mathbb R$ denotes the local loss function of node $i$, known only to node $i$.
 The nodes solve \eqref{xstar1} via decentralized gradient descent (DGD) over a communication graph. Let \(\mathcal N_i\) denote the set of neighbors of node \(i\) (including $i$ itself), and \(w_{ij}\) be the corresponding mixing weights, with \(w_{ij}=0\) whenever \(j\notin \mathcal N_i\). 
 At iteration $t$, node $i$
 updates its local parameter vector $x_{i,t}$
 by forming a weighted combination of the received neighbor models using the corresponding mixing weights, corrupted by communication noise, while simultaneously computing a noisy local gradient.
 This process yields the update
\begin{align}
x_{i,t+1}
= &(1-\gamma_t)x_{i,t}
+ \gamma_t \Big( \sum_{j=1}^N w_{ij} x_{j,t} + \epsilon_{\mc,i,t} \Big)
\twocol{\nn\\}&\onecol{\hspace{-5em}}
- \eta_t\big(\nabla f_i(x_{i,t}) + \epsilon_{\mg,i,t}\big),\ \forall i,
\label{local_update}
\end{align}
where:
\(\epsilon_{\mc,i,t}\) denotes the communication noise affecting the aggregation step;
\(\epsilon_{\mg,i,t}\) denotes the noise
injected during gradient computation, e.g., due to minibatch sampling;
$\gamma_t \in [0,1]$ is the consensus stepsize, controlling the rate of information diffusion across the network; \(\eta_t\geq 0\) is the learning  stepsize, controlling the incorporation of local gradient updates.
Defining the stacked vectors
$$
\x_t{\triangleq}\!\!
\begin{bmatrix}
\!x_{1,t}\!\\
\vdots\\
\!x_{N,t}\!
\end{bmatrix}\!\!,
\nabla f(\x_t){\triangleq}\!\!
\begin{bmatrix}
\!\nabla f_1(x_{1,t})\!\\
\vdots\\
\!\nabla f_N(x_{N,t})\!
\end{bmatrix}\!\!,
\boldsymbol{\epsilon}_{\mc,t}{\triangleq}\!\!
\begin{bmatrix}
\!\epsilon_{\mc,1,t}\!\\
\vdots\\
\!\epsilon_{\mc,N,t}\!
\end{bmatrix}\!\!,
\boldsymbol{\epsilon}_{\mg,t}{\triangleq}\!\!
\begin{bmatrix}
\!\epsilon_{\mg,1,t}\!\\
\vdots\\
\!\epsilon_{\mg,N,t}\!
\end{bmatrix}\!\!,
$$
the recursion \eqref{local_update} can be compactly expressed as
\begin{align}
\label{xtdyn}
\x_{t+1}
=&
(1-\gamma_t)\x_t
+\gamma_t[(\W\otimes\I_d)\x_t+\boldsymbol{\epsilon}_{\mc,t}]
\twocol{\nn\\}&\onecol{\hspace{-6em}}
-\eta_t(\nabla f(\x_t)+\boldsymbol{\epsilon}_{\mg,t}),
\end{align}
where \(\W=[w_{ij}] \in \mathbb{R}^{N\times N}\) is the mixing matrix (see Assumption \ref{Was}).
In other words, DGD combines a consensus step (a weighted average of the models of the neighbors), via inter-agent model exchanges, controlled by a consensus stepsize $\gamma_t$, with a local gradient descent step, controlled by a learning stepsize $\eta_t$. Both communications and gradient computations are affected by noise, characterized in Assumption \ref{noiseas}.

\begin{remark}
We adopt an abstract model of the communication channel, captured by the weight matrix $\W$ and the communication noise $\boldsymbol{\epsilon}_{\mc,t}$. For instance, in quantized DGD \cite{8786146}, $\W$ is typically constructed using Metropolis-Hastings weights \cite{6854643}, with neighbors defined by physical proximity. In this setting, $\boldsymbol{\epsilon}_{\mc,t}$ corresponds to the error induced by dithered quantization.
In DGD with link failures \cite{10947567,9772390}, $\W$ admits a similar interpretation, while $\boldsymbol{\epsilon}_{\mc,t}$ captures the mismatch between the instantaneous link realizations (successful or failed transmissions) and their average success probabilities.
In non-coherent over-the-air DGD \cite{10680589,michelusi26}, mixing occurs directly over the wireless channel by exploiting its average energy superposition property. In this case, $\W$ is dictated by average path loss conditions, whereas $\boldsymbol{\epsilon}_{\mc,t}$ arises from channel fading, non-coherent signal superposition, receiver noise, and interference.
\end{remark}

We assume that both communication and gradient computation consume resources, such as latency, energy, or a combination thereof, and therefore associate a cost with each operation. At each iteration, the network selects an operating mode specifying whether communication, computation, or both are performed. All modes are described by the unified recursion \eqref{xtdyn}, obtained by appropriately activating or suppressing its communication and computation components:
\begin{itemize}[leftmargin=*]
\item \textbf{Local SGD ($\gamma_t = 0$):}  
In this case, the update reduces to
\begin{align}
\x_{t+1}
=
\x_t
-
\eta_t \nabla f(\x_t)
-
\eta_t \boldsymbol{\epsilon}_{\mg,t},
\label{localSGDeq}
\end{align}
i.e., nodes perform a local gradient descent step without communication.  
The cost of this operation is denoted by $\bg$.

\item \textbf{Consensus ($\eta_t = 0$):}  
In this case, the update becomes
\[
\x_{t+1}
=
(1-\gamma_t)\x_t
+
\gamma_t (\mathbf{W}\otimes\I_d)\x_t
+
\gamma_t \boldsymbol{\epsilon}_{\mc,t},
\]
corresponding to one consensus round without a local gradient step,
with cost $b_{\mc}$.

\item \textbf{Full DGD ($\eta_t > 0$ and $\gamma_t > 0$):}  
In this case, the update includes both gradient computation and communication.  
The associated cost is denoted by $\bd$.

\end{itemize}

For example, if the cost represents latency, then
$\bd = \max\{b_{\mc}, \bg\}$ since communication and computation can be executed in parallel with DGD. 
On the other hand, if the cost represents energy consumption, then
$\bd= b_{\mc} + \bg$, due to the overall additive contribution of computation and communication energy costs. More generally, we assume that
$\bd\geq\max\{b_{\mc}, \bg\}$.

The network is given a target accuracy level \(\err\). During the optimization process, the network scheduler\footnote{The scheduler is introduced as a coordination abstraction. In practice, its decisions could be communicated over a low-rate, long-range control channel (e.g., an LPWAN \cite{7721743}). Since the proposed framework updates the operating mode only at stage transitions, such coordination is required only infrequently.} selects at each iteration one of the operating modes described above, together with the corresponding communication and learning stepsizes. The objective is to attain the prescribed accuracy using the minimum cumulative budget, leading to the optimization problem
\begin{align}
\label{budopt}
\min_{\{\gamma_t \ge 0,\, \eta_t \ge 0\}_{t=0}^{T-1},\, T \ge 0}
\quad \sum_{t=0}^{T-1} b_t
\quad\text{s.t.}\quad
E_T \le \err,
\end{align}
where \(b_t\) is the cost incurred at iteration \(t\), determined by the selected operating mode, \(T\) is the total number of iterations, and
\begin{align}
\label{rmse}
E_t \triangleq \sqrt{\mathbb E\!\left[\|\x_t-\xstar\|^2\right]}
\end{align}
denotes the root mean squared error (RMSE) between the iterates at time \(t\) and the global optimum \(\xstar \triangleq \mathbf 1_N \otimes x^\star\).
 In other words, we seek the most budget-efficient path to a target accuracy level \(\err\). 

Directly solving \eqref{budopt} is generally intractable because the exact RMSE $E_t$
 cannot be characterized in closed form as a function of the decision variables $\eta_t,\gamma_t$
 and the horizon $T$. Rather, it depends on the optimization dynamics determined by the objective functions, network topology, noise statistics, and stepsize schedule. We therefore analyze a computable upper bound  $E_t \le \bar E_t$.
Unlike the exact RMSE,  $\bar E_t$ admits an explicit characterization in terms of the problem parameters, including the strong convexity and smoothness constants (Assumption~\ref{strongas}), network connectivity (Assumption~\ref{Was}), initialization error, objective heterogeneity, noise statistics (Assumption~\ref{noiseas}), and the stepsize schedule. Moreover, 
$\bar E_t$
 provides a rigorous performance guarantee, ensuring that the actual RMSE $E_t$ remains below the prescribed bound.

Nevertheless, even after replacing \(E_t\) by \(\bar E_t\), the resulting budget-allocation problem remains analytically intractable. The bound \(\bar E_t\) depends on the entire optimization trajectory, so decisions made at a given iteration influence all future rounds, resulting in a highly coupled dynamic optimization problem.
A key observation is that the error typically evolves gradually, implying that the quantities governing convergence vary only slowly over time. Consequently, the budget-optimal communication-computation tradeoff is expected to remain nearly constant over a range of iterations.
 This observation suggests a multi-stage formulation. Rather than adapting the algorithm at every iteration, we partition the optimization trajectory into stages associated with different error scales. Within each stage, the error is allowed to vary only by a prescribed factor, while the communication and learning stepsizes remain fixed. The algorithm parameters are updated only when the error has decreased sufficiently to alter the dominant factors governing the convergence behavior. This approximation makes the design problem tractable while preserving the dominant mechanisms that determine the budget-optimal strategy.
 
 Specifically, we partition the optimization horizon into \(S\) stages. The \(s\)th stage starts at iteration \(k_s\), with \(k_0=0\) for the initial stage $s=0$,
and has duration $\delta_s$, so that the next stage starts at iteration
\begin{align}
\label{stagelength}
k_{s+1}= k_s+\delta_s.
\end{align}
Let \(\M>1\) denote the target error-reduction factor per stage. We define the error bound at the beginning of stage \(s\) as
\begin{align}
\label{barEs}
E_{k_s}
\le
\frac{\E{0}}{\M^s}
\triangleq
\Es.
\end{align}
Accordingly, the objective of stage $s$ is to reduce the error bound to
\begin{align}
\label{phidef}
E_{k_{s+1}}
\le
\frac{\Es}{\M}
\triangleq
\Esn.
\end{align}
Within stage \(s\), the stepsizes are held constant,
\begin{align}
\label{ss_stage}
\eta_t=\etas,\qquad
\gamma_t=\gammas,\qquad
t=k_s,\ldots,k_{s+1}-1,
\end{align}
and the operating mode, stepsizes \((\etas,\gammas)\), and stage length $\delta_s$,
are chosen to attain \eqref{phidef} with minimum budget cost.

Since the error bound decreases only from $\Es$ to $\Es/\M$ during a stage, the optimization process remains within the same error scale. Consequently, the relative importance of initialization, objective heterogeneity, limited network connectivity, and noise is expected to remain nearly unchanged, making parameter adaptation within the stage of limited benefit. The operating mode and stepsizes are therefore updated only at stage boundaries, where the error has decreased sufficiently for the optimal communication-computation tradeoff to change.

In the remainder of the paper, we restrict attention to \emph{Local SGD} (computation only) and \emph{Full DGD} (simultaneous communication and computation), leaving the treatment of consensus-only rounds to future work. Since consensus-only updates primarily reduce disagreement rather than optimality error, their analysis requires explicitly tracking the coupled evolution of consensus and optimality errors. While of independent interest, this would substantially complicate the analysis and is therefore beyond the scope of the present work.

To develop our multi-stage budget-limited framework,
we begin by introducing several assumptions and definitions
used in the analysis. 
\onecol{\vspace{-3mm}}
\subsection{Assumptions}

We make the following assumption on each local function $f_i$ and
 the mixing matrix $\W$.
 \begin{assumption}\label{strongas}
Each local objective function $f_i:\mathbb{R}^d\to\mathbb{R}$ is
$\mu$-strongly convex and has Lipschitz continuous gradients with parameter $L$ (i.e., it is $L$-smooth), where $0<\mu\le L$.
 It follows that $F(x)$ is also $\mu$-strongly convex and $L$-smooth, and the global minimizer $x^\star$ is unique.
\end{assumption}
  
\begin{assumption}
\label{Was}
The mixing matrix $\W\in\mathbb{R}^{N\times N}$ is
entry-wise  non-negative ($\W\ge 0$), symmetric ($\W^\top=\W$), and doubly stochastic
($\W\;\1=\1$ and $\1^\top{=}\1^\top\;\W$). Furthermore, the communication graph induced by the positive entries of \(\W\) is connected.
By the Perron-Frobenius Theorem \cite[Th.~8.4.4]{matrixanalysis}, the eigenvalues of $\W$ are real and satisfy
\[
1=\lambda_1>\lambda_2\ge\lambda_3\ge\dots\ge\lambda_N\ge -1.
\]

\end{assumption}

Let \(\{\mathcal F_t\}_{t\ge 0}\) denote the natural filtration generated by the iterates and all randomness up to time \(t\), i.e.,
\[
\mathcal F_t = \sigma\!\left(\x_0, 
\boldsymbol{\epsilon}_{\mc,0}, \boldsymbol{\epsilon}_{\mg,0},
\dots,
\boldsymbol{\epsilon}_{\mc,t-1}, \boldsymbol{\epsilon}_{\mg,t-1}
\right).
\]
Note that $\x_t$ is $\mathcal F_t$-measurable, i.e.,
it is fully determined by the history up to time  $t$, through the dynamics in \eqref{xtdyn}.
We impose the following assumptions on the noise processes.

\begin{assumption}
\label{noiseas}
The communication and gradient noises satisfy:
\begin{align}
&\mathbb E\left[\boldsymbol{\epsilon}_{\mc,t}\mid \mathcal F_t\right] = \mathbf 0,\qquad
\mathbb E\left[\|\boldsymbol{\epsilon}_{\mc,t}\|_2^2 \mid \mathcal F_t\right]
\le \sigc^2;\\
&\mathbb E\left[\boldsymbol{\epsilon}_{\mg,t}\mid \mathcal F_t\right] = \mathbf 0, \qquad
\mathbb E\left[\|\boldsymbol{\epsilon}_{\mg,t}\|_2^2 \mid \mathcal F_t\right]
\le \sigg^2 .
\end{align}
\end{assumption}
Here, $\sigc^2$ and $\sigg^2$ characterize the variance of the communication and computation noises, respectively. 
We do not require the communication and gradient noises to be conditionally uncorrelated,
i.e., $\mathbb E\left[\boldsymbol{\epsilon}_{\mg,t}^\top\cdot
\boldsymbol{\epsilon}_{\mc,t}
\mid \mathcal F_t\right] $ may be non-zero.
Unbiasedness of the gradient noise is guaranteed by a suitable minibatch sampling strategy.
On the other hand, unbiasedness of the communication noise can be achieved by suitable design of the encoding scheme, e.g.  via dithered quantization \cite{8786146},
and of the communication protocol (see for instance \cite{10680589,michelusi26}).
Assumption~\ref{noiseas} adopts a uniformly bounded noise variance model. Extensions to signal-dependent noise variance models (e.g., \cite{10947567}) are left for future work.


\subsection{Definitions}

We define the condition number of the optimization problem as
\(
\kappa \triangleq \frac{L}{\mu}
\)
($\ge 1$
since $L \ge \mu$). It measures the intrinsic difficulty of the optimization problem, with larger values corresponding to more ill-conditioned objectives, hence slower convergence of gradient-based methods.

We define the local minimizer at each agent as
\begin{align}
\label{xloci}
x_i^{\mathrm{loc}}=\arg\min_{x\in\mathbb R^d} f_i(x).
\end{align}
Under Assumption~\ref{strongas}, each $x_i^{\mathrm{loc}}$ exists and is unique.
We stack the local minimizers across the network into the vector
\begin{align}
\label{xloc}
\xloc =
\left[
x_1^{\mathrm{loc}\top},
x_2^{\mathrm{loc}\top},
\dots,
x_N^{\mathrm{loc}\top}
\right]^\top.
\end{align}
We define the global optimum stacked over the network as
\begin{align}
\label{xstar}
\xstar\triangleq \1_N\otimes x^\star.
\end{align}
We define the distance between local minima and the global optimum as
\begin{align}
\label{Eloc}
\Eloc \triangleq \|\xloc - \xstar\|.
\end{align}
We define the \emph{gradient heterogeneity} at the global optimum as
\[
\|\nabla f(\xstar)\|
=
\Big(\sum_{i=1}^N \|\nabla f_i(x^\star)\|^2 \Big)^{1/2}.
\]
Notably, if all local minima coincide with the global optimum (i.e., $\xloc=\xstar$), then
$\Eloc = 0$
and $\nabla f_i(x^\star)=\mathbf 0,\ \forall i,$
which implies $\|\nabla f(\xstar)\|=0$.
Hence, both \(\Eloc\) and \(\|\nabla f(\xstar)\|\) quantify the degree of \emph{objective heterogeneity} across the network. They are closely related as\footnote{Seen by using \(\nabla f(\xloc)=\mathbf 0\),
\(
\|\nabla f(\xstar)\|
=
\|\nabla f(\xstar)-\nabla f(\xloc)\|
\), followed by
 strong convexity and smoothness.}
\begin{align}
\mu\;\Eloc
\le
\|\nabla f(\xstar)\|
\le
L\;\Eloc.
\label{objheter}
\end{align}

We define the \emph{gradient-Diversity-to-Network-connectivity Ratio} (DNR) as
\begin{align}
\DNR\triangleq\Big(\frac{2\kappa}{L+\mu}\frac{\|\nabla f(\xstar)\|}{1-\lambda_2}\Big)^2,
\label{DNR}
\end{align}
where $\lambda_2$ is the second-largest eigenvalue of the mixing matrix $\W$ (Assumption \ref{Was}).
The DNR captures the relative difficulty of mitigating objective heterogeneity (captured by $\|\nabla f(\xstar)\|$) through DGD updates across the network (captured by $1-\lambda_2$).
Large values of $\DNR$ arise when objective heterogeneity is high, or the network is poorly connected ($\lambda_2$ close to $1$), making disagreement mitigation across agents more challenging.
Furthermore, a higher condition number $\kappa$ makes the problem harder to solve over a network, yielding larger DNR.

Finally, we define the \emph{Gradient-to-Communication noise Ratio} (GCR)
\begin{align}
\label{GCR}
\GCR
\triangleq
\frac{4\sigg^2}{(L+\mu)^2\sigc^2}.
\end{align}
It compares the relative impact of stochastic gradient noise and communication noise on the evolution of the iterates. 
Large values of \(\GCR\) indicate that gradient noise is relatively stronger than communication noise, whereas small values indicate a comparatively larger influence of the latter.
The scaling factor $4/(L+\mu)^2$ accounts for the effect of the learning and consensus stepsizes on the propagated noise. In particular, the ratio $\mathrm{var}(\eta\,\boldsymbol{\epsilon}_{\mg,t})/\mathrm{var}(\gamma\,\boldsymbol{\epsilon}_{\mc,t})$ coincides with the definition above when the stepsizes take their largest admissible values $\gamma{=}1$ and $\eta{=}2/(L+\mu)$, required for convergence of gradient descent with strongly convex objectives \cite{10947567}.

\section{Single-Stage Bottleneck Analysis}\label{singlestage}

With these definitions, we now analyze the convergence properties of a generic stage, in which the learning and consensus stepsizes are held fixed. These results are then combined in the multi-stage formulation developed in \secref{complexity}.

To initialize the stage-wise recursion, we require a computable upper bound on the initial error, 
$E_0\leq\E{0}$, whose derivation is deferred to the Appendix.
We therefore consider a generic stage $s\geq 0$, beginning at iteration $k_s$.
 The stage starts with error bound $E_{k_s}\leq\Es$
  and aims to reduce it to $E_{k_{s+1}}\leq\Es/\M$, with minimum budget cost. We analyze the Local-SGD and Full-DGD operating modes separately.

\subsection{Local SGD}
\label{locsgdan}
When communication is disabled (\(\gamma_t=0\)), 
the one-step update with fixed $\eta$ reduces to
\[
\mathbf x_{t+1}=\mathbf x_t-\eta\nabla f(\mathbf x_t)-\eta\boldsymbol{\epsilon}_{\mg,t}.
\]
Since no information is exchanged among agents, each node optimizes its local objective independently. Consequently, if this mode is maintained, the algorithm converges toward the local minima \(\xloc\), yielding a residual error $\xloc-\xstar$ due to objective heterogeneity.
The following lemma bounds the RMSE after $\delta$ iterations of a Local-SGD stage with fixed learning stepsize.

\begin{lemma}\label{lem:localSGD}
Consider a Local-SGD stage with learning stepsize \(0\le\eta\le2/(L+\mu)\). After \(\delta\) iterations,
the RMSE satisfies
\[
E_{k_s+\delta}
\le 
3\max\Big\{
(1-\eta\mu)^{\delta}\Es,
\frac{\sqrt{\eta(\kappa+1)}}{\sqrt{2L}}\sigg,\twocol{\]\[}
(1+(1-\eta\mu)^{\delta})\Eloc
\Big\}.
\]
\end{lemma}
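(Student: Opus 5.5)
The plan is to track the distance of the iterates to the local minima $\xloc$ rather than to $\xstar$, since Local SGD decouples across agents. Each agent then runs SGD on its own strongly convex objective $f_i$, and the bound on $E$ follows from the triangle inequality $\|\x_t-\xstar\|\le\|\x_t-\xloc\|+\Eloc$.

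\textbf{Step 1 (one-step contraction).} Write $\x_{t+1}-\xloc=\x_t-\xloc-\eta(\nabla f(\x_t)-\nabla f(\xloc))-\eta\boldsymbol\epsilon_{\mg,t}$, using $\nabla f(\xloc)=\mathbf 0$. Take the conditional expectation given $\mathcal F_t$; the cross term vanishes by Assumption~\ref{noiseas}. For $0\le\eta\le 2/(L+\mu)$, the standard co-coercivity inequality for $\mu$-strongly convex, $L$-smooth functions (Assumption~\ref{strongas}) gives $\|\mathbf y-\eta(\nabla f(\mathbf y)-\nabla f(\xloc))-\xloc\|^2\le(1-\eta\mu)^2\|\mathbf y-\xloc\|^2$, applied blockwise with $\mathbf y=\x_t$. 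Hence
\[
D_{t+1}^2\le(1-\eta\mu)^2D_t^2+\eta^2\sigg^2,\qquad D_t\triangleq\sqrt{\mathbb E\|\x_t-\xloc\|^2}.
\]
The contraction factor $(1-\eta\mu)^2$, rather than the weaker $1-2\eta\mu+\dots$, should follow from the sharp inequality $\langle\nabla f(\mathbf y)-\nabla f(\mathbf z),\mathbf y-\mathbf z\rangle\ge\frac{\mu L}{L+\mu}\|\mathbf y-\mathbf z\|^2+\frac{1}{L+\mu}\|\nabla f(\mathbf y)-\nabla f(\mathbf z)\|^2$. This is the step to check carefully. If it only yields $1-\eta\mu(2-\eta(L+\mu))$-type rates, I would instead use the mean-value form: the gradient difference equals $\mathbf H(\mathbf y-\mathbf z)$ with $\mu\mathbf I\preceq\mathbf H\preceq L\mathbf I$. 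Then $\|\mathbf I-\eta\mathbf H\|\le\max\{|1-\eta\mu|,|1-\eta L|\}=1-\eta\mu$, because $\eta\le 2/(L+\mu)$ implies $\eta L-1\le 1-\eta\mu$.

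\textbf{Step 2 (unroll).} Iterating over $\delta$ steps gives
\[
D_{k_s+\delta}^2\le(1-\eta\mu)^{2\delta}D_{k_s}^2+\frac{\eta^2\sigg^2}{1-(1-\eta\mu)^2}.
\]
Since $1-(1-\eta\mu)^2=\eta\mu(2-\eta\mu)$ and $\eta\mu\le 2\mu/(L+\mu)$, we have $2-\eta\mu\ge 2L/(L+\mu)$. The noise term is therefore at most $\eta\sigg^2(L+\mu)/(2L\mu)=\eta(\kappa+1)\sigg^2/(2L)$, which matches the middle term of the lemma. Taking square roots and using $\sqrt{a+b}\le\sqrt a+\sqrt b$ gives
\[
D_{k_s+\delta}\le(1-\eta\mu)^\delta D_{k_s}+\sqrt{\eta(\kappa+1)/(2L)}\,\sigg.
\]

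\textbf{Step 3 (return to $\xstar$).} Use $D_{k_s}\le E_{k_s}+\Eloc\le\Es+\Eloc$ and $E_{k_s+\delta}\le D_{k_s+\delta}+\Eloc$. This gives
\[
E_{k_s+\delta}\le(1-\eta\mu)^\delta\Es+\sqrt{\tfrac{\eta(\kappa+1)}{2L}}\,\sigg+(1+(1-\eta\mu)^\delta)\Eloc.
\]
Finally, bound the sum of three nonnegative terms by three times their maximum.
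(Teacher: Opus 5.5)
Your proposal is correct and follows the paper's proof step for step. The steps match: the one-step contraction toward $\xloc$, unrolling with $1-(1-\eta\mu)^2\ge 2\eta\mu L/(L+\mu)$, Minkowski's inequality twice to pass between $\xloc$ and $\xstar$, and finally $a+b+c\le 3\max\{a,b,c\}$. The only difference is that the paper imports the one-step inequality from \cite{10947567}, whereas you derive it. Your primary route works as written: the sharp co-coercivity inequality leaves the term $\eta(\eta-\frac{2}{L+\mu})\|\nabla f(\mathbf y)-\nabla f(\mathbf z)\|^2\le\eta(\eta-\frac{2}{L+\mu})\mu^2\|\mathbf y-\mathbf z\|^2$, and this yields exactly $(1-\eta\mu)^2$. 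The Hessian-based fallback, which would need twice differentiability not granted by Assumption~\ref{strongas}, is therefore unnecessary.
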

\begin{proof}
See Appendix \ref{proofoflem:localSGD}.
\end{proof}
This bound reveals that the error after $\delta$ rounds of 
Local SGD is the worst among three terms:
\begin{enumerate}[leftmargin=*]
  \item  the term \((1-\eta\mu)^{\delta}\Es\), accounting for the
   impact of the initialization error of stage $s$;
  \item the propagated gradient-noise term, \(\propto\sigg\),
  \item the mismatch between local and global optima, \(\Eloc\), due to objective heterogeneity.
\end{enumerate}
The worst among these error terms acts as a 
 \emph{bottleneck} in the error dynamics.

\medskip\noindent\textbf{Reducing the error by a factor $\M$:}
We wish to choose the stepsize \(\eta\) and the length \(\delta\) of stage $s$, 
so that  \(E_{k_s+\delta}\le \Es/\M\) (the error is reduced by a factor $\M$ by the end of the stage) with minimum budget cost.
Using the bound found above, it is sufficient to impose
\begin{subequations}\label{eq:three_conditions}
\begin{empheq}[left=\empheqlbrace]{align}
\label{eq:three_conditions.a}&(1-\eta\mu)^{\delta}\Es \le \frac{\Es}{3\M},\ \text{(initialization constraint)};\\
\label{eq:three_conditions.b}&\frac{\sqrt{\eta(\kappa+1)}}{\sqrt{2 L}}\sigg \le\frac{\Es}{3\M},\ \text{(gradient noise constraint)};\\
\label{eq:three_conditions.c}&\Big(1{+}\frac{1}{3\M}\Big)\Eloc \le \frac{\Es}{3\M},\  \text{(heterogeneity constraint)};
\end{empheq}
\end{subequations}
where we used $(1-\eta\mu)^{\delta}\leq 1/(3\M)$ from \eqref{eq:three_conditions.a} to simplify \eqref{eq:three_conditions.c}.
These constraints can be interpreted as follows:
\begin{itemize}[leftmargin=*]
  \item The \emph{initialization constraint} term determines how many local steps \(\delta\) are required to diminish the effect of stage initialization below the required threshold (given \(\eta\));
  \item The \emph{gradient noise constraint}, controlled by $\eta$, ensures that the accumulated gradient noise does not exceed the desired target;
  \item The \emph{heterogeneity constraint} captures the bias due to objective heterogeneity (mismatch between the local and global minimizers). It is feasible only if $\Es\geq (3\M+1)\Eloc$.
\end{itemize}

The constraint \eqref{eq:three_conditions.a} requires
$\delta \ge \frac{\ln(3\M)}{-\ln(1-\eta\mu)}.$
From \eqref{eq:three_conditions.b}, coupled with the stepsize condition of Lemma \ref{lem:localSGD}, we obtain
\begin{align}
\label{etaconstr}
\eta \le 
\frac{2}{L+\mu}
\min\left\{1\ ,\ \mu L\Big(\frac{\Es}{3\M\sigg}\Big)^2\right\}.
\end{align}
Finally, if \eqref{eq:three_conditions.c} is violated ($\Es<(3\M+1)\Eloc$), then Local SGD is
unfeasible. In this case, we need communication (studied in the next section) to
mitigate the effect of objective heterogeneity and further drive the error down.

The next stage starts at time  $k_{s+1}=k_s+\delta$, with the corresponding error bounded by
 $\Esn=\Es/\M$.
Under this stepsize design, we can also bound
the error for the iterates $t=k_s\dots,k_{s+1}$, by combining the Local-SGD bound
 \eqref{errb} with $(1-\eta\mu)^{t-k_s}\leq 1$ as
\begin{align}
&E_{t}
\le \Es+\frac{\sqrt{\eta(\kappa+1)}}{\sqrt{2 L}}\sigg+2\Eloc
\twocol{\nn\\}&\onecol{\hspace{-35mm}}
\quad\le 
\frac{(3\M)^2+12\M+1}{3\M(3\M+1)}\Es,
\label{EellboundSGD}
\end{align}
where we used 
\eqref{eq:three_conditions.b}-\eqref{eq:three_conditions.c} in the last inequality.

\noindent\textbf{Minimum stage-wise budget cost:}
Under the condition $\Es \ge(3\M+1)\Eloc$, reducing the RMSE by a factor $\Phi$ using Local SGD is feasible.
The budget cost of performing $\delta$ local gradient steps is $\delta \cdot \bg$.
Hence, the budget in the $s$th stage is minimized by choosing $\delta$ as small as possible, yielding
\begin{align}
\label{deltasloc}
\deltas \triangleq \left\lceil
\frac{\ln(3\M)}{-\ln(1-\eta\mu)}
\right\rceil.
\end{align}
Since this expression is decreasing in $\eta$, minimizing the budget cost amounts to
maximizing $\eta$. Coupled with \eqref{etaconstr},
this yields the optimal stepsize in stage $s$,
\begin{align}
\label{optetalocsgd}
\etas
\triangleq
\frac{2}{L+\mu}
\min\left\{1\ ,\ \mu L\Big(\frac{\Es}{3\M\sigg}\Big)^2\right\}.
\end{align}

Plugging this expression into \eqref{deltasloc}, we obtain the minimum budget cost 
required to reduce the error by a factor $\M$ as
\begin{align}
\label{budgbo}
\bg\cdot\deltas=\bg\cdot \left\lceil
\frac{\ln(3\M)}{-\ln(1-\etas\mu)}
\right\rceil\leq \bg\cdot \left\lceil
\frac{\ln(3\M)}{\etas\mu}
\right\rceil,
\end{align}
where we used $-\ln(1-\etas\mu)\geq \etas\mu$ in the last inequality.

This analysis further reveals two regimes of interest.

\subsubsection{Initialization-dominated regime of Local SGD}
\label{SGDr1}
If
\begin{align}
\label{E0init}
\Es
\ge
\max\!\left\{
\frac{3\M\sigg}{\sqrt{\mu L}}\ ,\;
(3\M+1)\Eloc
\right\},
\end{align}
then both the gradient-noise and heterogeneity constraints
\eqref{eq:three_conditions.b}-\eqref{eq:three_conditions.c}
 are inactive, and the optimal stepsize specializes to
\(
\etas=\frac{2}{L+\mu}.
\)
Replacing this expression into \eqref{budgbo}, we bound
the stage-wise budget cost
as
\begin{align}
\bg \cdot \deltas
\leq
\bg \cdot 
\left\lceil
\ln(\sqrt{3\M})(\kappa+1)
\right\rceil.
\label{locsgdinit}
\end{align}
This corresponds to the \emph{initialization-dominated regime}, where the initial stage error $ \Es$ is sufficiently large. In this regime, the contraction term
\(
(1-\eta\mu)^{\delta}\Es
\)
acts as the dominant factor governing error reduction, while gradient-noise and objective heterogeneity remain negligible relative to $\Es$.
Consequently, the behavior resembles that of \emph{noiseless centralized gradient descent}: the error decreases geometrically, and the cost required to reduce the error by a constant factor $\M$ is independent of the gradient noise and heterogeneity metric.
Instead, this cost grows proportionally to the condition number
$\kappa$, capturing the difficulty of the optimization problem.


\subsubsection{Gradient-noise-dominated regime  of Local SGD}
\label{SGDr2}
If the error bound satisfies
\begin{align}
\frac{3\M\sigg}{\sqrt{\mu L}}
>
\Es
\geq
(3\M+1)\Eloc,
\label{xcv}
\end{align}
then the gradient-noise constraint \eqref{eq:three_conditions.b} becomes active, and the optimal stepsize is
\[
\etas=\frac{2\mu L}{L+\mu}\Big(\frac{\Es}{3\M\sigg}\Big)^2.
\]

This regime is feasible as long as the interval in \eqref{xcv} is non-empty, i.e.,
\[
\sigg
>
\Big(1+\frac{1}{3\M}\Big)
\sqrt{\mu L}\Eloc,
\]
corresponding to a setting where gradient noise is significant.

Plugging the expression of $\etas$ into \eqref{budgbo}, we bound the stage-wise budget cost 
as
\begin{align}
\label{deltagradsgd}
\bg \cdot \deltas
\leq
\bg \cdot 
\Big\lceil
9\M^2\ln(3\M)\Big(\frac{\sigg}{\mu\Es}\Big)^2
\Big\rceil.
\end{align}
This corresponds to the \emph{gradient-noise-dominated regime}, where the noise term
\(
\propto \sigg
\)
acts as the primary bottleneck in the error bound, while the heterogeneity term remains small relative to $\Es$. 
Hence, the budget cost
 grows proportionally to the gradient noise variance $\sigg^2$.

\subsection{Full DGD}
\label{dgdan}
With both local computations and communications enabled and fixed stepsizes $\gamma,\eta> 0$,
the update becomes
\begin{align}
\x_{t+1}
=&
(1-\gamma)\x_t
+
\gamma (\mathbf{W}\otimes\I_d)\x_t\twocol{\nn\\}&\onecol{\hspace{-30mm}}
-
\eta \nabla f(\x_t)
+
\gamma \boldsymbol{\epsilon}_{\mc,t}
-
\eta \boldsymbol{\epsilon}_{\mg,t}.
\label{contrct}
\end{align}
The following lemma bounds the RMSE after $\delta$ iterations of a Full-DGD stage.

\begin{lemma}\label{lem:fullDGD}
Consider a Full-DGD stage with learning stepsize
$0<\eta\leq\frac{2-\gamma(1-\lambda_N)}{L+\mu}$ and
consensus stepsize $\gamma\in(0,1]$. After \(\delta\) iterations,
the RMSE satisfies
\begin{align}
&E_{k_s+\delta}\leq 
4\max\Big\{(1-\eta\mu)^{\delta}\Es\ ,\ 
\frac{\sqrt{\eta(\kappa+1)}}{\sqrt{2 L}}\sigg\ ,\ \nn\\&
\frac{\gamma}{\sqrt{\eta}}\frac{\sqrt{\kappa+1}}{\sqrt{2 L}}\sigc\ ,\ 
(1+(1-\eta\mu)^{\delta})\frac{\eta}{\gamma}\frac{L+\mu}{2}\sqrt{\DNR}
\Big\}.
\label{dgdbound}
\end{align}
\end{lemma}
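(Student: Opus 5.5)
The plan is to follow the Local-SGD template of Lemma~\ref{lem:localSGD}: view the Full-DGD recursion \eqref{contrct} as centralized gradient descent on a surrogate objective and measure error relative to that surrogate's minimizer. Define the penalized function
\[
G(\x)\triangleq \sum_{i} f_i(x_i) + \frac{\gamma}{2\eta}\,\x^\top\big((\I_N-\W)\otimes\I_d\big)\x .
\]
Its gradient step with stepsize $\eta$ is exactly $\x-\eta\nabla G(\x)=(1-\gamma)\x+\gamma(\W\otimes\I_d)\x-\eta\nabla f(\x)$. Hence \eqref{contrct} reads $\x_{t+1}=\x_t-\eta\nabla G(\x_t)+\boldsymbol\xi_t$, with $\boldsymbol\xi_t=\gamma\boldsymbol\epsilon_{\mc,t}-\eta\boldsymbol\epsilon_{\mg,t}$ a martingale difference. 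By Assumption~\ref{noiseas} and the triangle inequality in $L^2$, $\mathbb E[\|\boldsymbol\xi_t\|^2\mid\mathcal F_t]\le(\gamma\sigc+\eta\sigg)^2\le 2\gamma^2\sigc^2+2\eta^2\sigg^2$; no uncorrelatedness of the two noises is needed. Since $\I-\W$ is positive semidefinite with eigenvalues in $[0,1-\lambda_N]$, $G$ is $\mu$-strongly convex and $L_G$-smooth with $L_G=L+\gamma(1-\lambda_N)/\eta$. The stepsize condition $\eta\le(2-\gamma(1-\lambda_N))/(L+\mu)$ is exactly $\eta\le 2/(L_G+\mu)$, up to the harmless extra term $\gamma(1-\lambda_N)$ in the denominator, which only helps. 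Let $\x^\gamma\triangleq\arg\min G$.

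\emph{Step 1 (contraction).} For $\eta\le 2/(L_G+\mu)$ the deterministic map $\x\mapsto\x-\eta\nabla G(\x)$ is a contraction with factor $1-\eta\mu$ (standard co-coercivity argument). Conditioning on $\mathcal F_t$, the zero-mean noise term separates, giving
\[
\mathbb E\|\x_{t+1}-\x^\gamma\|^2\le(1-\eta\mu)^2\,\mathbb E\|\x_t-\x^\gamma\|^2+2\gamma^2\sigc^2+2\eta^2\sigg^2 .
\]
Unrolling over $\delta$ steps and summing the geometric series with $1/(1-(1-\eta\mu)^2)\le 1/(\eta\mu)$ yields the steady-state noise terms $\eta\sigg^2/\mu$ and $\gamma^2\sigc^2/(\eta\mu)$. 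Relating $1/\mu$ to $(\kappa+1)/(2L)$ up to a factor of $2$ (more carefully, keeping the sharper $(L+\mu)$-type contraction constant) reproduces exactly the two noise terms of \eqref{dgdbound}.

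\emph{Step 2 (bias).} We bound $\|\x^\gamma-\xstar\|$ by $\frac{\eta}{\gamma}\frac{L+\mu}{2}\sqrt{\DNR}=\frac{\eta}{\gamma}\frac{\kappa\|\nabla f(\xstar)\|}{1-\lambda_2}$. Because $\nabla G(\x^\gamma)=0$,
\[
\nabla f(\x^\gamma)-\nabla f(\xstar)+\tfrac{\gamma}{\eta}\big((\I-\W)\otimes\I_d\big)(\x^\gamma-\xstar)=-\nabla f(\xstar),
\]
where we used $(\I-\W)\xstar=0$. Split $\x^\gamma-\xstar$ into its average component and its disagreement component. Since $\sum_i\nabla f_i(x^\star)=0$, the right-hand side is orthogonal to the consensus subspace. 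On the disagreement subspace, the Laplacian term has eigenvalues at least $\gamma(1-\lambda_2)/\eta$, which gives a disagreement of order $\frac{\eta}{\gamma(1-\lambda_2)}\|\nabla f(\xstar)\|$. The average component is then controlled through strong convexity of $F$ together with $L$-smoothness acting on the disagreement, which introduces the factor $\kappa$. This is the step I expect to be the main obstacle: the coupling between average and disagreement must be handled so that only a single factor $\kappa$ appears. I would first try a Lyapunov/inner-product argument, taking the inner product of the optimality condition with $\x^\gamma-\xstar$ and using co-coercivity, to obtain the $\frac{2\kappa}{L+\mu}$ constant directly.

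\emph{Step 3 (assembly).} Write $E_{k_s+\delta}\le\sqrt{\mathbb E\|\x_{k_s+\delta}-\x^\gamma\|^2}+\|\x^\gamma-\xstar\|$, and use $\sqrt{\mathbb E\|\x_{k_s}-\x^\gamma\|^2}\le\Es+\|\x^\gamma-\xstar\|$. This produces the four terms
\[
(1-\eta\mu)^\delta\Es,\qquad \text{the two noise terms},\qquad (1+(1-\eta\mu)^\delta)\times\text{bias}.
\]
Bounding the resulting sum of four nonnegative terms by $4$ times their maximum gives \eqref{dgdbound}.
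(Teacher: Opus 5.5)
\paragraph{Where the proposal stands.} Your overall architecture is the paper's: a $(1-\eta\mu)$-contraction toward the noiseless fixed point, geometric accumulation of the noise, Minkowski twice, a fixed-point bias bound, and $a+b+c+d\le 4\max\{a,b,c,d\}$. The paper imports both the one-step contraction and the bias bound $\|\hat{\x}-\xstar\|\le\frac{\eta}{\gamma}\frac{\kappa\|\nabla f(\xstar)\|}{1-\lambda_2}$ from \cite{10947567}. Your penalty-function derivation of the contraction is correct; in fact $\eta\le 2/(L_G+\mu)$ is \emph{equivalent} to the lemma's stepsize condition, not just implied by it.

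\paragraph{The gap: Step~2 is not established.} The bias bound is never proved, and the route you say you would try first cannot deliver it. Write $\mathbf e=\x^\gamma-\xstar=\bar{\mathbf e}+\tilde{\mathbf e}$. The inner product of the optimality condition with $\mathbf e$ gives the scalar identity $\langle\mathbf e,\nabla f(\x^\gamma)-\nabla f(\xstar)\rangle+\frac{\gamma}{\eta}\tilde{\mathbf e}^\top((\mathbf I-\W)\otimes\mathbf I_d)\tilde{\mathbf e}=-\langle\tilde{\mathbf e},\nabla f(\xstar)\rangle$. Dropping the monotone first term correctly yields $\|\tilde{\mathbf e}\|\le\frac{\eta}{\gamma}\frac{\|\nabla f(\xstar)\|}{1-\lambda_2}$. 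For the full error, however, this identity only gives $\mu\|\mathbf e\|^2\le\|\tilde{\mathbf e}\|\,\|\nabla f(\xstar)\|$, i.e.\ $\|\mathbf e\|=\mathcal O(\sqrt{\eta/\gamma})$. Co-coercivity does not change this, because a single scalar inequality is consistent with $\|\bar{\mathbf e}\|\sim\sqrt{\eta/\gamma}$. What you need is an $\mathcal O(\eta/\gamma)$ bias with constant exactly $\kappa$. Note that $\frac{\eta}{\gamma}\frac{L+\mu}{2}\sqrt{\DNR}=\frac{\eta}{\gamma}\frac{\kappa\|\nabla f(\xstar)\|}{1-\lambda_2}$; the factor $\frac{2\kappa}{L+\mu}$ belongs to the definition of $\DNR$, not to the bias constant.

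\paragraph{How to close it.} The missing ingredient is the \emph{vector} equation obtained by projecting the optimality condition onto the consensus subspace. Since $\mathbf 1^\top(\mathbf I-\W)=\mathbf 0$, this gives $\sum_i\nabla f_i(x_i^\gamma)=0$. Write $x_i^\gamma=x^\star+\bar e+\tilde e_i$ with $\sum_i\tilde e_i=0$. Strong convexity of $F$ then gives $N\mu\|\bar e\|\le\|\sum_i[\nabla f_i(x^\star+\bar e+\tilde e_i)-\nabla f_i(x^\star+\bar e)]\|$. Because $\sum_i\tilde e_i=0$, you may replace $\nabla f_i$ by $\nabla f_i-\mu\,\mathrm{id}$, which is $(L-\mu)$-Lipschitz. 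Hence $\|\bar{\mathbf e}\|=\sqrt N\|\bar e\|\le(\kappa-1)\|\tilde{\mathbf e}\|$, and so $\|\mathbf e\|\le\sqrt{(\kappa-1)^2+1}\,\|\tilde{\mathbf e}\|\le\kappa\|\tilde{\mathbf e}\|$, which is exactly the required bound. The cruder $L$-Lipschitz estimate would give $\sqrt{\kappa^2+1}$ and miss the stated constant. Alternatively, cite the bound as the paper does.

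\paragraph{A smaller point in Step~1.} The relaxations $(\gamma\sigc+\eta\sigg)^2\le 2\gamma^2\sigc^2+2\eta^2\sigg^2$ and $1/(1-(1-\eta\mu)^2)\le 1/(\eta\mu)$ inflate the noise terms by up to a factor $2$, which the factor $4$ in the lemma cannot absorb. Instead, keep $(\gamma\sigc+\eta\sigg)^2$ and use $1-(1-\eta\mu)^2=\eta\mu(2-\eta\mu)\ge 2\eta\mu L/(L+\mu)$, valid since $\eta\le 2/(L+\mu)$. Then split $\sqrt{(\kappa+1)/(2\eta L)}\,(\gamma\sigc+\eta\sigg)$ exactly into the two noise terms.
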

\begin{proof}
See Appendix \ref{proofoflem:fullDGD}.
\end{proof}

Similar to the analysis of Local SGD, the above bound shows that the error after $\delta$ rounds of Full DGD is determined by the largest among four terms:
\begin{enumerate}[leftmargin=*]
\item the term $(1-\eta\mu)^{\delta}\Es$, capturing the effect of the initial error at the beginning of stage $s$;

\item the propagated gradient-noise term,  $\propto\sigg$;

\item the propagated communication-noise term, $\propto\sigc$;

\item the bias due to the mismatch between the fixed point $\hat{\x}$ which Full DGD aims to converge to and the desired global optimum $\xstar$, scaling with the DNR.
\end{enumerate}
The largest of these contributions acts as a \emph{bottleneck} in the error dynamics.

\medskip\noindent\textbf{Reducing the error by a factor $\M$:}
We seek to choose the learning stepsize \(\eta\), the consensus stepsize \(\gamma\), and the 
length $\delta$ of stage \(s\), such that
\(
E_{k_s+\delta}\le \frac{\Es}{\M}
\)
 with minimum budget cost.
Using \eqref{dgdbound}, it suffices to impose
\begin{subequations}\label{eq:four_conditions}
\begin{empheq}[left=\empheqlbrace]{align}
\label{eq:four_conditions.a}&(1-\eta\mu)^{\delta}\Es \le \frac{\Es}{4\M},\  \text{(initialization constraint)};\\
\label{eq:four_conditions.b}&\frac{\sqrt{\eta(\kappa+1)}}{\sqrt{2 L}}\sigg \le \frac{\Es}{4\M},\  \text{(gradient noise constrain)};\\
\label{eq:four_conditions.c}&\frac{\gamma}{\sqrt{\eta}}\frac{\sqrt{\kappa+1}}{\sqrt{2 L}}\sigc \le \frac{\Es}{4\M},\ \text{(comm. noise constraint)};\\
\label{eq:four_conditions.d}&\Big(1{+}\frac{1}{4\M}\Big)\frac{\eta}{\gamma}\frac{L{+}\mu}{2}\sqrt{\DNR} \le \frac{\Es}{4\M},\ \text{(FP constraint)}.
\end{empheq}
\end{subequations}
Above, we used the bound $(1-\eta\mu)^{\delta}\le 1/(4\M)$ implied by
\eqref{eq:four_conditions.a} to simplify \eqref{eq:four_conditions.d}.
These constraints admit the following interpretation:
\begin{itemize}[leftmargin=*]
\item The \emph{initialization constraint} determines the number of iterations \(\delta\) required to sufficiently attenuate the effect of the initial stage error (for a given stepsize \(\eta\)).

\item The \emph{gradient-noise}
and \emph{communication-noise constraints}
 limit the learning stepsize \(\eta\) 
 and the stepsize ratio \(\gamma/\sqrt{\eta}\), respectively,
 so that the propagated noise remains below the target error.

\item The \emph{fixed point (FP) constraint} captures the intrinsic bias of DGD due to objective heterogeneity and limited connectivity, 
 controlled by the stepsize ratio \(\eta/\gamma\).
\end{itemize}

The next stage starts at time  $k_{s+1}=k_s+\delta$, with the corresponding error bounded by
 $\Esn=\Es/\M$.
Under this stepsize design, we can also bound
the error for the iterates $t=k_s\dots,k_{s+1}$
 using the Full-DGD bound \eqref{dgdboun2} with $(1-\eta\mu)^{t-k_s}\leq 1$ as
\begin{align}
&E_{t}
\le \Es{+}\frac{\sqrt{\eta(L+\mu)}}{\sqrt{2\mu L}}\sigg
{+}\frac{\gamma\sqrt{L+\mu}}{\sqrt{2\eta\mu L}}\sigc
{+}\frac{\eta}{\gamma}(L+\mu)\sqrt{\DNR}
\twocol{\nn\\}&
\le 
\frac{8\M^2+10\M+1}{2\M(4\M+1)}\Es,
\label{dgdalpha}
\end{align}
where in the last inequality we used \eqref{eq:four_conditions.b}-\eqref{eq:four_conditions.d}.

\noindent\textbf{Minimum stage-wise budget cost:}
The budget cost of performing $\delta$ Full-DGD steps is $\delta \cdot \bd$.
Hence,
minimizing $\delta \cdot \bd$ is equivalent to minimizing $\delta$, yielding
from \eqref{eq:four_conditions.a}
\begin{align}
\label{deltasDGD}
\deltas \triangleq \left\lceil
\frac{\ln(4\M)}{-\ln(1-\eta\mu)}
\right\rceil.
\end{align}
Since this expression is decreasing in $\eta$, minimizing the budget cost amounts to 
maximizing $\eta$ with respect to $(\eta,\gamma)$, subject to the constraints \eqref{eq:four_conditions.b}-\eqref{eq:four_conditions.d} and the stepsize conditions
\begin{align}
\label{origcond}
0<\gamma\leq 1,\ 0<\eta\leq \frac{2-\gamma(1-\lambda_N)}{L+\mu},
\end{align}
required in Lemma \ref{lem:fullDGD}.
In the following, we impose slightly stronger conditions
\begin{align}
\label{newcond}
0 < \gamma \le \frac{1}{2}, 
\qquad
0 < \eta \le \frac{1}{L+\mu}.
\end{align}
While the original constraints could be retained, they considerably complicate the analysis. The proposed conditions imply the original ones and are at most a factor 2 more conservative. Since smaller values of $\eta$ and $\gamma$ are often desirable in the presence of gradient and communication noise, this simplification incurs only a minor loss of generality while yielding a much cleaner analysis.
 The following lemma characterizes the budget-optimal learning and consensus stepsizes.
\begin{lemma}
\label{lem:fullDGDoptsteps}
The budget-optimal learning and consensus stepsizes that guarantee
\(E_{k_{s+1}}\le\Es/\M\) with minimum budget cost, subject to constraints \eqref{newcond},
are
\begin{align}
\label{optgam}
\gammas\triangleq&
\min\Bigg\{
\frac{\kappa}{(\kappa+1)^2}\frac{\Es^3}{4\M^2(4\M+1)\sigc^2\sqrt{\DNR}}
,\twocol{\nn\\}&\onecol{\hspace{-10mm}}
\frac{L}{\kappa+1}\frac{\Es^2}{8\M^2\sigg\sigc},
\frac{\sqrt{2\kappa}}{\kappa+1}\frac{\Es}{4\M\sigc},
\frac{1}{2}
\Bigg\}.
\end{align}
\twocol{
\begin{align}
\etas=
&\frac{1}{L+\mu}\min\Bigg\{
\underbrace{\Big(\frac{\sqrt{\kappa}}{\kappa+1}\frac{\Es^2}{\sqrt{2}\M(4\M+1)\sigc\sqrt{\DNR}}\Big)^2}_{\text{(a)}},\nn\\&
\underbrace{2\mu L\Big(\frac{\Es}{4\M\sigg}\Big)^2}_{\text{(c)}},
\underbrace{\frac{\Es}{(4\M+1)\sqrt{\DNR}}
}_{\text{(e)}},
\underbrace{\vphantom{\frac{\Es}{(4\M+1)\sqrt{\DNR}}}1}_{\text{(f)}}
\Bigg\}.
\label{etaopt}
\end{align}}
\onecol{
\begin{align}
\etas{=}
\frac{1}{L{+}\mu}\min\Bigg\{
\underbrace{\Big(\frac{\sqrt{\kappa}}{\kappa{+}1}\frac{\Es^2}{\sqrt{2}\M(4\M{+}1)\sigc\sqrt{\DNR}}\Big)^2}_{\text{(a)}},
\underbrace{2\mu L\Big(\frac{\Es}{4\M\sigg}\Big)^2}_{\text{(c)}},
\underbrace{\frac{\Es}{(4\M{+}1)\sqrt{\DNR}}
}_{\text{(e)}},
\underbrace{\vphantom{\frac{\Es}{(4\M{+}1)\sqrt{\DNR}}}1}_{\text{(f)}}
\Bigg\}.\!\!\!
\label{etaopt}
\end{align}}
\end{lemma}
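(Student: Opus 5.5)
The plan is to reduce the budget problem to a one-dimensional maximization in $\eta$ by eliminating $\gamma$. As noted before the statement, the stage length \eqref{deltasDGD} is nonincreasing in $\eta$, and so is the budget $\bd\cdot\deltas$. It therefore suffices to find the largest $\eta$ for which some $\gamma$ satisfies \eqref{eq:four_conditions.b}--\eqref{eq:four_conditions.d} together with \eqref{newcond}; the initialization constraint \eqref{eq:four_conditions.a} is then met by the choice of $\delta$. Since \eqref{newcond} implies \eqref{origcond}, Lemma~\ref{lem:fullDGD} applies, and \eqref{eq:four_conditions} then guarantees $E_{k_{s+1}}\le\Es/\M$.

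\textbf{Rewriting the constraints.} I will introduce the constants
\[
A\triangleq\frac{\Es}{4\M}\frac{\sqrt{2L}}{\sqrt{\kappa+1}\,\sigc},\qquad
B\triangleq\frac{(4\M+1)(L+\mu)\sqrt{\DNR}}{2\Es}.
\]
With these, the constraints take the following forms.
\begin{itemize}
\item The communication-noise constraint \eqref{eq:four_conditions.c} is equivalent to $\gamma\le A\sqrt{\eta}$.
\item The FP constraint \eqref{eq:four_conditions.d} is equivalent to $\gamma\ge B\eta$.
\item The gradient-noise constraint \eqref{eq:four_conditions.b} does not involve $\gamma$. It reads $\eta\le \frac{2L}{\kappa+1}\big(\frac{\Es}{4\M\sigg}\big)^2=\frac{2\mu L}{L+\mu}\big(\frac{\Es}{4\M\sigg}\big)^2$, which is term (c).
\end{itemize}
For fixed $\eta$, the admissible $\gamma$ form the interval $[B\eta,\ \min\{A\sqrt\eta,\tfrac12\}]$. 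This interval is nonempty iff $B\eta\le A\sqrt\eta$ and $B\eta\le\tfrac12$, i.e., iff $\eta\le (A/B)^2$ and $\eta\le 1/(2B)$. Adding $\eta\le 1/(L+\mu)$ and term (c), the feasible set in $\eta$ is an interval whose right endpoint is the minimum of four quantities. I will then check by direct algebra that $(L+\mu)(A/B)^2$ equals term (a) and $(L+\mu)/(2B)$ equals term (e). Together these give \eqref{etaopt}.

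\textbf{Choosing $\gamma$.} I will take $\gammas=\min\{A\sqrt{\etas},\tfrac12\}$, the largest admissible value, which is feasible by the nonemptiness argument above. Expanding $\sqrt{\etas}$ as the minimum of the square roots of the four branches gives
\[
A\sqrt{\etas}=\min\Big\{A\sqrt{\text{(a)}},\ A\sqrt{\text{(c)}},\ A\sqrt{\text{(e)}},\ A\sqrt{\text{(f)}}\Big\},
\]
where each branch is understood with its $1/(L+\mu)$ factor. The branches evaluate as follows.
\begin{itemize}
\item Branch (a) gives $A^2/B$, which should simplify to the first term of \eqref{optgam}.
\item Branch (c) gives the second term, with $\sigg\sigc$ in the denominator.
\item Branch (f) gives $A/\sqrt{L+\mu}$, which is the third term.
\end{itemize}

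\textbf{Main obstacle.} The delicate step is showing that branch (e) can be dropped from the minimum defining $\gammas$, since \eqref{optgam} contains no corresponding term. The argument I will use is this: whenever $1/(2B)<(A/B)^2$, i.e., (e) is strictly below (a), we have $A>B\sqrt{1/(2B)}=\sqrt{B/2}$. Hence $A\sqrt{1/(2B)}>\tfrac12$, so the cap $\tfrac12$ already dominates that branch. In the opposite case, (a) is no larger than (e) and the (a)-branch is the smaller one anyway. Either way, the (e)-branch never determines the minimum, and \eqref{optgam} follows.

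The remaining work is routine constant-chasing with $\kappa=L/\mu$ and $L+\mu=\mu(\kappa+1)$.
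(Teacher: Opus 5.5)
Your proof is correct, but it reverses the paper's order of elimination.

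\textbf{The paper's route.} The paper fixes $\gamma$ and takes $\eta$ as the smallest of the three upper caps (gradient noise, FP, $1/(L+\mu)$). It then turns the requirement that the communication-noise lower bound on $\eta$ lie below each cap into three upper bounds on $\gamma$. Since the resulting objective is nondecreasing in $\gamma$, it sets $\gammas$ to the tightest of these bounds and $\tfrac12$. This yields $\gammas$ directly. Substituting it back, however, produces six candidate terms for $\etas$. Two of them, (b) and (d), must then be pruned as the geometric means $\sqrt{\text{(a)}\,\text{(c)}}$ and $\sqrt{\text{(a)}\,\text{(f)}}$.

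\textbf{Your route.} You instead project the feasible set onto the $\eta$-axis via the interval $[B\eta,\min\{A\sqrt\eta,\tfrac12\}]$. As a result, \eqref{etaopt} comes out with no spurious terms. The pruning moves to $\gammas$, where the (e)-branch must be discarded. Your case split handles this correctly, and it is the same device the paper uses: with your constants, $A\sqrt{1/(2B)}=\sqrt{\tfrac12\cdot A^2/B}$ is the geometric mean of the cap $\tfrac12$ and the (a)-branch value $A^2/B$. Hence it is always at least the smaller of the two.

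\textbf{What your route adds.} It makes non-uniqueness visible. Every $\gamma\in[B\etas,\gammas]$ (your notation) is equally budget-optimal. This interval is non-degenerate whenever neither (a) nor (e) attains the minimum in \eqref{etaopt}. So the lemma's $\gammas$ is really the largest optimal consensus stepsize. The paper makes this selection implicitly when it says the optimal $\gamma$ ``matches the tightest constraint''.
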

\begin{proof}
See Appendix \ref{proofoflem:fullDGDoptsteps}.
\end{proof}

Plugging this expression into \eqref{deltasDGD}, we bound the minimum stage-wise budget cost as
\begin{align}
\label{budgbo2}
\bd\cdot\deltas{=}\bd \left\lceil
\frac{\ln(4\M)}{-\ln(1-\etas\mu)}
\right\rceil{\leq} \bd \left\lceil
\frac{\ln(4\M)}{\etas\mu}
\right\rceil,
\end{align}
where we used $-\ln(1-\etas\mu)\geq \etas\mu$ in the last inequality.

The learning stepsize expression in Lemma \ref{lem:fullDGDoptsteps} reveals four operating regimes, depending on which term in the minimum is active, namely (a), (c), (e), or (f).  Each regime is associated with a different bottleneck governing the convergence behavior, studied next.

\subsubsection{Initialization-dominated regime of Full DGD}
\label{DGDr1}
When $\Es$ is sufficiently large, term (f) becomes the smallest among the arguments of the minimum defining $\etas$, yielding
\begin{align}
\label{etainitdgd}
\etas = \frac{1}{L+\mu}.
\end{align}
 The activation of this regime requires that the remaining terms satisfy
$\text{(a)},\text{(c)},\text{(e)} \ge \text{(f)}$ in \eqref{etaopt}. Solving these inequalities yields the following conditions on $\Es$:
\begin{subequations}\label{Einit}
\begin{empheq}[left=\empheqlbrace]{align}
\label{Einit.a}&\Es \ge \dfrac{\sqrt{\kappa+1}}{\sqrt[4]{\kappa}}\sqrt{\M(4\M+1)\sigc}\sqrt[4]{2\;\DNR},\\
\label{Einit.b}&\Es \ge \dfrac{2\sqrt{2}\M\sigg}{\sqrt{\mu L}},\\
\label{Einit.c}&\Es \ge (4\M+1)\sqrt{\DNR}.
\end{empheq}
\end{subequations}
Replacing \eqref{etainitdgd} into \eqref{budgbo2}, 
the stage-wise budget cost becomes
$$\bd \cdot \deltas
\leq
\bd \cdot 
\left\lceil
\ln(4\M)(\kappa+1)
\right\rceil.
$$
This corresponds to the \emph{initialization-dominated regime}, in which the stage initialization error acts as the main bottleneck governing the error evolution. In this regime, the effects of communication noise, gradient noise, objective heterogeneity and network connectivity remain negligible relative to $\Es$, and the convergence behavior is primarily dictated by the contraction induced by the gradient descent dynamics, captured by the condition number $\kappa$. 
In fact, the budget cost is independent of the noise parameters ($\sigg$ and $\sigc$) as well as the DNR.

\subsubsection{DNR-dominated regime of Full DGD}
\label{DGDr2}

When $\Es$ is moderately large, term $\text{(e)}$ becomes the smallest in \eqref{etaopt}, yielding the optimal learning stepsize as
\begin{align}
\label{etaDNRdgd}
\etas
=
\frac{1}{L+\mu}\frac{\Es}{(4\M+1)\sqrt{\DNR}}.
\end{align}
By enforcing $\text{(a)},\text{(c)},\text{(f)}\ge \text{(e)}$ in \eqref{etaopt},
operating in this regime requires
\begin{subequations}\label{cond2}
\begin{empheq}[left=\empheqlbrace]{align}
\label{cond2.a}&\Es\ge\sqrt[3]{\frac{(\kappa+1)^2}{\kappa}\,2\M^2(4\M+1)\sigc^2\sqrt{\DNR}} ,\\
\label{cond2.b}&\Es\ge\frac{8\M^2}{4\M+1}\frac{\sigg^2}{\mu L}\frac{1}{\sqrt{\DNR}},\\
\label{cond2.c}&\Es\le(4\M+1)\sqrt{\DNR}.
\end{empheq}
\end{subequations}
This regime is feasible provided that the interval defined by \eqref{cond2.a}-\eqref{cond2.c} is non-empty, 
yielding
\begin{align}
\label{condx}
\phi\frac{\mu L\,\DNR}{\sigg^2}
\geq
\frac{1}{\min\{\GCR,1\}},
\end{align}
where we defined the scaling factor 
\begin{align}
\label{PHI}
\phi\triangleq\frac{(4\M+1)^2}{8\M^2}\in [2,3.125).
\end{align}

Replacing $\etas$ into \eqref{budgbo2},
the stage-wise budget cost becomes
\begin{align}
\label{deltaDNR}
\bd \cdot \deltas
\leq
\bd\cdot
\left\lceil
\ln(4\M)(4\M+1)(\kappa+1)\frac{\sqrt{\DNR}}{\Es}
\right\rceil.
\end{align}
In this regime, the effects of both gradient and communication noises are small relative to \(\Es\). The convergence behavior is therefore governed by objective heterogeneity and network connectivity, whose combined effect is captured by the DNR. We thus refer to this regime as the \emph{DNR-dominated regime of Full DGD}. Notably, the budget cost is not affected by the noise parameters \(\sigg\) and \(\sigc\).

\subsubsection{Gradient-noise-dominated regime of Full DGD}
\label{DGDr3}

When $\Es$ is moderately small, term $\text{(c)}$ becomes the smallest in \eqref{etaopt}, yielding the optimal learning stepsize as
\begin{align}
\label{ettagraddgd}
\etas
=
\frac{2 L}{\kappa+1}\Big(\frac{\Es}{4\M\sigg}\Big)^2.
\end{align}
By enforcing $\text{(a)},\text{(e)},\text{(f)}\ge \text{(c)}$ in \eqref{etaopt}, 
operating in this regime requires
\begin{subequations}\label{cond3}
\begin{empheq}[left=\empheqlbrace]{align}
\label{cond3.a}&\Es\ge (4\M+1)\sqrt{\DNR/\GCR},\\
\label{cond3.b}&\Es\le \frac{8\M^2}{4\M+1}\frac{\sigg^2}{\mu L}\frac{1}{\sqrt{\DNR}},\\
\label{cond3.c}&\Es\le \dfrac{2\sqrt{2}\M\sigg}{\sqrt{\mu L}},
\end{empheq}
\end{subequations}
which is feasible  provided that
\begin{align}
\label{dnrcondforgrad}
\phi\frac{\mu L\,\DNR}{\sigg^2}
\le
\sqrt{\GCR}\min\{\sqrt{\GCR},1\},
\end{align}
with $\phi$ defined in \eqref{PHI}.
Replacing $\etas$ into \eqref{budgbo2}, the stage-wise
budget cost becomes
\begin{align}
\label{deltagradnoise}
\bd \cdot \deltas
\leq
\bd\cdot
\left\lceil
8\M^2\ln(4\M)\frac{\kappa+1}{\mu L}
\Big(\frac{\sigg}{\Es}\Big)^2
\right\rceil.
\end{align}
In this regime, gradient noise is the dominant bottleneck, whereas the effects of objective heterogeneity, communication noise, and network connectivity are comparatively small. We therefore refer to this regime as \emph{gradient-noise-dominated}. As a result, the budget cost depends primarily on \(\sigg^2\) and is independent of \(\sigc\) and the DNR.

\subsubsection{Communication-noise-dominated regime of Full DGD}
\label{DGDr4}

When $\Es$ is small, term $\text{(a)}$ becomes the smallest in \eqref{etaopt},
yielding 
the optimal learning stepsize as
\begin{align}
\label{optetacomm}
\etas
=
\frac{1}{L+\mu}
\left(
\frac{\sqrt{\kappa}}{\kappa+1}
\frac{\Es^2}{\sqrt{2}\M(4\M+1)\sigc\sqrt{\DNR}}
\right)^2.
\end{align}
By enforcing $\text{(c)},\text{(e)},\text{(f)}\ge \text{(a)}$,
the conditions required to operate in this regime are
\begin{subequations}\label{cond4}
\begin{empheq}[left=\empheqlbrace]{align}
\label{cond4.a}&\Es\le (4\M+1)\sqrt{\DNR/\GCR},\\
\label{cond4.b}&\Es\le \sqrt[3]{\frac{(\kappa+1)^2}{\kappa}\,2\M^2(4\M+1)\sigc^2\sqrt{\DNR}},\\
\label{cond4.c}&\Es\le \dfrac{\sqrt{\kappa+1}}{\sqrt[4]{\kappa}}\sqrt{\M(4\M+1)\sigc}\,\sqrt[4]{2\DNR}.
\end{empheq}
\end{subequations}

Replacing $\etas$ into \eqref{budgbo2}, we obtain
the stage-wise budget cost as
\begin{align}
\label{deltacomm}
\bd{\cdot}\deltas
{\leq}
\bd  \left\lceil
\twocol{\!}2
\M^2(4\M{+}1)^2\ln(4\M)\frac{(\kappa{+}1)^3}{\kappa}
\frac{\sigc^2\DNR}{\Es^4}
\right\rceil\twocol{\!}.\twocol{\!\!\!}
\end{align}
In this regime, the gradient-noise constraint becomes inactive, and the performance is instead limited by the interaction between communication noise and the DNR, which captures the combined effects of objective heterogeneity and network connectivity. Indeed, communication is the mechanism by which agents reconcile heterogeneous local objectives, so communication noise directly impairs this process. We therefore refer to it as the \emph{communication-noise-dominated regime} of Full DGD.
Accordingly, the budget cost is governed by \(\sigc^2\) and the DNR, while being independent of the gradient-noise level \(\sigg\). 

\section{Multi-Stage Budget Complexity}\label{complexity}
The previous section characterized the budget cost and stepsize design of a generic stage under both Local-SGD and Full-DGD operations. We now leverage these stage-wise results to characterize the minimum communication-computation budget required to attain a prescribed accuracy by accumulating the contributions of the bottleneck-dominated regimes traversed during optimization.

For illustrative purposes, we focus on the case where the initial error bound satisfies \eqref{E0init} and the target accuracy $\err$ satisfies \eqref{cond4.a}-\eqref{cond4.c}. More generally, the analysis extends directly to any initial error and target accuracy by simply omitting operating regimes that are never encountered.

Consider stage \(s\), starting at iteration \(k_s\) with error bound
\(
\Es=\E{0}\M^{-s},
\)
where each preceding stage has reduced the error by a factor $\M$. The first question is: given $\Es$ and the target error reduction by a factor $\M$,
which operating mode is most budget-efficient, and which regime can actually be encountered? The answer is given by the following lemma.
\begin{lemma}\label{lembudget}
If \(\Es>(3\M+1)\Eloc\), then Local SGD is budget-optimal. 
Otherwise, Full DGD is budget-optimal.
Moreover, the initialization-dominated regime of Full DGD is never encountered.
\end{lemma}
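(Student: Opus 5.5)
The plan is to compare the two operating modes stage by stage, using the minimum stage-wise costs derived in \secref{singlestage}, and then to show that the Local-SGD feasibility threshold always lies above the entry threshold of the initialization-dominated regime of Full DGD.

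\textbf{Step 1 (Local SGD dominates whenever feasible).} Assume $\Es>(3\M+1)\Eloc$, so that the heterogeneity constraint \eqref{eq:three_conditions.c} holds and Local SGD is feasible. First compare the stepsizes. By \eqref{optetalocsgd}, the Local-SGD stepsize is $\frac{2}{L+\mu}\min\{1,\mu L(\Es/(3\M\sigg))^2\}$. By Lemma~\ref{lem:fullDGDoptsteps}, every admissible Full-DGD stepsize satisfies $\etas\le\frac{1}{L+\mu}\min\{1,2\mu L(\Es/(4\M\sigg))^2\}$, because it is bounded by terms (c) and (f). Comparing the two terms in each minimum ($2\ge1$ and $2/9\ge 2/16$) gives $\eta_{\rm loc}\ge\eta_{\rm dgd}$.

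Next compare the stage lengths. The map $\eta\mapsto-\ln(1-\eta\mu)$ is increasing and $\ln(3\M)<\ln(4\M)$. Hence \eqref{deltasloc} and \eqref{deltasDGD} give $\delta_{\rm loc}\le\delta_{\rm dgd}$; the ceiling preserves this ordering. Finally, $\bg\le\bd$ by the standing assumption $\bd\ge\max\{b_\mc,\bg\}$. Together these give $\bg\delta_{\rm loc}\le\bd\delta_{\rm dgd}$.

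If instead $\Es\le(3\M+1)\Eloc$, then \eqref{eq:three_conditions.c} fails and Local SGD cannot certify the reduction by $\M$. Since consensus-only rounds are excluded, Full DGD is the only remaining mode, and it is therefore optimal.

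\textbf{Step 2 (initialization regime of Full DGD is never encountered).} By Step 1, Full DGD is used only at stages with $\Es\le(3\M+1)\Eloc$. The initialization-dominated regime requires \eqref{Einit.c}, namely $\Es\ge(4\M+1)\sqrt{\DNR}$. It therefore suffices to prove
\[
(3\M+1)\Eloc<(4\M+1)\sqrt{\DNR}.
\]
To bound $\sqrt{\DNR}$ from below in terms of $\Eloc$, I would use \eqref{objheter}, $\|\nabla f(\xstar)\|\ge\mu\Eloc$. This gives
\[
\sqrt{\DNR}\ge \frac{2L}{L+\mu}\frac{\Eloc}{1-\lambda_2}.
\]

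\textbf{Main obstacle: the constant in Step 2.} The crude estimates $L/(L+\mu)\ge1/2$ and $1-\lambda_2\le2$ only yield $\sqrt{\DNR}\ge\Eloc/2$. That is weaker than the required $\sqrt{\DNR}>\frac{3\M+1}{4\M+1}\Eloc$, since the ratio $\frac{3\M+1}{4\M+1}$ lies in $(1/2,3/4)$. So the spectral bound must be sharpened.

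The first tool I would try is the trace identity: since $\W$ is nonnegative, $\mathrm{tr}(\W)=\sum_i\lambda_i\ge0$, which forces $\lambda_2\ge-1/(N-1)$. This gives $1-\lambda_2\le N/(N-1)$. That is already sufficient for large $N$, and for $\kappa>1$ the factor $2L/(L+\mu)>1$ provides extra slack.

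For the remaining extremal cases ($N=2$ with $\lambda_2\approx-1$ and $\kappa\approx1$), I would argue that \eqref{objheter} cannot be tight at both ends simultaneously. Alternatively, I would invoke the other initialization conditions \eqref{Einit.a}--\eqref{Einit.b} to close the gap. If neither works, I would treat any such small-$N$ corner case separately.
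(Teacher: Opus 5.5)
Your Step~1 is correct and is the paper's own argument. It compares the two optimal stepsizes term by term through (c) and (f), then uses $\ln(3\M)<\ln(4\M)$ and $\bg\le\bd$. Your reduction of the last claim to $(3\M+1)\Eloc<(4\M+1)\sqrt{\DNR}$ also follows the paper.

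\textbf{The gap is in Step~2.} You never prove that inequality, and neither of your fallback ideas can prove it, because it is false in exactly the corner cases you flagged. Your diagnosis of the obstacle is right: \eqref{objheter} only gives $\sqrt{\DNR}\ge\frac{2\kappa}{\kappa+1}\frac{\Eloc}{1-\lambda_2}$, which can be as small as $\Eloc/2$. The paper's chain reaches $1$ only because its step (b) writes $\frac{4\kappa}{(\kappa+1)(1-\lambda_2)}$, which is twice what \eqref{objheter} yields.

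Two smaller points. The ratio $\frac{3\M+1}{4\M+1}$ decreases from $4/5$ to $3/4$ as $\M$ grows, so it lies in $(3/4,4/5)$, not $(1/2,3/4)$. Consequently, for $\kappa$ near $1$, your trace bound $\lambda_2\ge-1/(N-1)$ settles the question only for $N\ge5$.

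\textbf{Counterexample satisfying all the assumptions.} Take $N=2$, $f_i(x)=\frac{\mu}{2}\|x-x_i^{\mathrm{loc}}\|^2$ with $x_1^{\mathrm{loc}}\neq x_2^{\mathrm{loc}}$, and $\W=\left(\begin{smallmatrix}0&1\\1&0\end{smallmatrix}\right)$.
\begin{itemize}
\item Then $\kappa=1$ and $\|\nabla f(\xstar)\|=\mu\Eloc=L\Eloc$, so \eqref{objheter} is tight at both ends. This rules out your first fallback.
\item Also $\lambda_2=-1$, hence $\sqrt{\DNR}=\Eloc/2$, and \eqref{Einit.c} becomes $\Es\ge(2\M+\tfrac12)\Eloc$.
\item For $\M=\sqrt2$, the first Full-DGD stage starts with $\Es\in((3+1/\sqrt2)\Eloc,\,(3\sqrt2+1)\Eloc]$. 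This interval lies entirely above $(2\sqrt2+\tfrac12)\Eloc$.
\item Conditions \eqref{Einit.a}--\eqref{Einit.b} involve only $\sigc$ and $\sigg$, so they hold once the noises are small enough. This rules out your second fallback.
\end{itemize}
Term (f) is then the minimum in $\etas$, and the initialization-dominated regime of Full DGD is entered. The failure is not confined to $N=2$. With the same quadratics, $N=4$ and $\W=\frac13(\mathbf 1\mathbf 1^\top-\I_4)$ give $\lambda_2=-1/3$ and $\sqrt{\DNR}=\frac34\Eloc<\frac{3\M+1}{4\M+1}\Eloc$.

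\textbf{What is needed.} The missing ingredient is an extra hypothesis, not a sharper estimate. For instance, assume $\lambda_2\ge-1/4$; this holds automatically for $N\ge5$ by your trace argument, or whenever $\W\succeq0$. Then $\sqrt{\DNR}\ge\frac45\Eloc>\frac{3\M+1}{4\M+1}\Eloc$ because $\M>1$, and your Step~2 goes through.
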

\begin{proof}
See Appendix~\ref{proofoflembudget}.
\end{proof}

The intuition is that objective heterogeneity becomes relevant only when the error level is comparable to \(\Eloc\). When \(\Es\gg\Eloc\), the current error dominates the effect of heterogeneity, so communication provides little benefit and the lower-cost Local-SGD iterations are sufficient. As the error decreases toward \(\Eloc\), heterogeneity becomes the limiting factor, and communication is required to reconcile the local objectives and enable further progress toward the global optimum. At that point, Full DGD becomes the budget-optimal operating mode.

Moreover, this transition implies that the initialization-dominated regime of Full DGD is never encountered. By the time the algorithm enters the Full-DGD phase, the initialization error has already ceased to be the dominant limitation. Consequently, the Full-DGD phase can only traverse the DNR-, gradient-noise-, and communication-noise-dominated regimes.

Lemma~\ref{lembudget} therefore partitions the optimization process into two consecutive operating phases. Starting from  \(\E{0}\) satisfying \eqref{E0init}, the algorithm first operates in the Local-SGD phase for \(S_{\mathrm{gd}}\) stages, until
\(
\frac{\E{0}}{\M^{S_{\mathrm{gd}}}}
\le
(3\M+1)\Eloc .
\)
It then permanently switches to the Full-DGD phase, which continues for the remaining \(S_{\mathrm{dgd}}\) stages, until the target accuracy \(\err\) is attained.

The single-stage analysis of \secref{singlestage} further revealed that each operating phase may consist of multiple bottleneck-dominated regimes. Within a given regime, the stage-wise budget follows a common scaling law, which remains valid until the error reaches the corresponding regime-transition threshold. The remaining task is therefore to accumulate the budget incurred across successive stages belonging to each regime. This is accomplished by Theorem~\ref{thm:cumbud} (Appendix~\ref{app:cumbud}), which converts the stage-wise budget scaling into the total budget required to traverse a regime. The following corollary summarizes the qualitative implications of Theorem~\ref{thm:cumbud}. 
\begin{corollary}
\label{cor:cumbud}
Consider a bottleneck-dominated regime spanning stages
\(r=s,s+1,\ldots\),
in which the stage lengths satisfy
$
\delta_r=\mathcal O(\E{r}^{-n}),
$
for some exponent \(n\ge0\). Then, the cumulative budget incurred while traversing this regime, until the RMSE falls below a target level \(\errt\), satisfies
\[
B(\errt)=
\begin{cases}
\mathcal O(\errt^{-n}), & n>0,\\[1mm]
\mathcal O(\ln(\Es/\errt)), & n=0,
\end{cases}
\]
while the corresponding RMSE satisfies
\[
E_t=
\begin{cases}
\mathcal O(t^{-1/n}), & n>0,\\
\mathcal O(e^{-ct}), & n=0,\text{ for some \(c>0\).}
\end{cases}
\]
\end{corollary}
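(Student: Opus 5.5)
The plan is to sum a geometric series over the stages of the regime, then invert the resulting cumulative-iteration count to obtain the RMSE rate. Write \(\delta_r\le C\,\E{r}^{-n}+1\), where \(C\) collects the constants appearing in the stage-wise bounds \eqref{locsgdinit}, \eqref{deltagradsgd}, \eqref{deltaDNR}, \eqref{deltagradnoise}, \eqref{deltacomm}. The \(+1\) accounts for the ceiling, and the per-iteration cost \(b\in\{\bg,\bd\}\) is a constant factor. Since \(\E{r}=\Es\M^{-(r-s)}\), we have \(\E{r}^{-n}=\Es^{-n}\M^{n(r-s)}\).

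First, fix the number of stages needed. Let \(R\) be the smallest integer with \(\E{s+R}\le\errt\). Then \(R=\lceil\log_\M(\Es/\errt)\rceil\), and the stage just before termination satisfies \(\E{s+R-1}>\errt\). The cumulative budget is
\[
B(\errt)\le b\sum_{j=0}^{R-1}\big(C\Es^{-n}\M^{nj}+1\big).
\]

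For \(n>0\), the geometric sum is at most \(\Es^{-n}\M^{n(R-1)}\M^n/(\M^n-1)=\E{s+R-1}^{-n}\M^n/(\M^n-1)\). This is smaller than \(\errt^{-n}\M^n/(\M^n-1)\), so it is \(\mathcal O(\errt^{-n})\). The additive term \(bR=\mathcal O(\ln(\Es/\errt))\) is dominated by \(\errt^{-n}\) as \(\errt\to0\). For \(n=0\), every stage costs \(\mathcal O(1)\), so \(B=\mathcal O(R)=\mathcal O(\ln(\Es/\errt))\).

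For the RMSE rate, apply the same computation in reverse. Let \(t\) be any iteration in the regime, lying in stage \(r\), so \(t\in[k_r,k_{r+1}]\). By \eqref{EellboundSGD} or \eqref{dgdalpha}, \(E_t\le c_\M\E{r}\) for a constant \(c_\M\) depending only on \(\M\). Measuring time from \(k_s\), the sum above gives \(t-k_s\le k_{r+1}-k_s=\mathcal O(\E{r}^{-n})\) when \(n>0\). Hence \(\E{r}=\mathcal O((t-k_s)^{-1/n})\), and therefore \(E_t=\mathcal O(t^{-1/n})\) after absorbing the offset \(k_s\) into the constant. For \(n=0\), \(\delta_r\le D\) uniformly, so \(r-s\ge (t-k_s)/D-1\). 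This gives \(E_t\le c_\M\Es\M\,\M^{-(t-k_s)/D}\), which is exponential with \(c=\ln\M/D\).

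The main obstacle is the bookkeeping rather than any single inequality. Three points need care: the ceilings, which contribute the \(\mathcal O(R)\) term that must be shown not to dominate when \(n>0\); the last stage, which may overshoot \(\errt\) by up to a factor \(\M\) and is handled via \(\E{s+R-1}>\errt\); and the intra-stage error bound, which lets the rate hold for all \(t\) and not only at stage boundaries. I would first verify that \(C\) and the constants \(D\) and \(c_\M\) depend only on \(\M\) and the problem parameters, not on \(r\). This holds because within a regime the active term in \eqref{etaopt} or \eqref{optetalocsgd} is fixed, so \(\etas\propto\Es^{n}\).
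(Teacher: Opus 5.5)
Your proposal is correct and follows essentially the paper's route, which obtains the corollary from Theorem~\ref{thm:cumbud}: count $\lceil\log_\M(\Es/\errt)\rceil$ stages, sum the geometric series using $\E{s+S-1}>\errt$, then invert the cumulative iteration count and combine it with the intra-stage bound $E_t\le\alpha\E{r}$. The only difference is how the ceilings are handled: you absorb the $+1$ terms and the offset $k_s$ into $\Es$-dependent constants, whereas the paper assumes $\Es^{\,n}\le 2\nu$ so that $\lceil x\rceil\le 2x$ gives explicit constants (your $n=0$ bound also correctly keeps the factor $\M$ that the paper's final step $\M^{-(r-s)}\le\M^{-(t-k_s+1)/(2\nu)}$ omits).
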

Corollary~\ref{cor:cumbud} highlights the qualitative relationship between the stage-wise and global optimization problems. In particular, it shows how the stage-length scaling exponent $n$ determines the  budget complexity and convergence rate of each bottleneck-dominated regime. The exact quantitative analysis presented next is based on Theorem~\ref{thm:cumbud}, which retains the explicit dependence on the DNR, GCR, target accuracy, and other problem parameters. 

We now characterize the total budget of the Local-SGD and Full-DGD phases. Each phase is analyzed by partitioning it into its constituent bottleneck-dominated regimes, applying Theorem~\ref{thm:cumbud} to each regime, and summing the resulting budget contributions. For each regime, the target accuracy \(\errt\) corresponds to the error threshold at which the subsequent regime becomes active, while the stage-length scaling law, and hence the exponent \(n\), is determined by the corresponding bottleneck. For the final regime, \(\errt\) coincides with the overall target accuracy \(\err\).
In particular, applying this methodology to the  Full-DGD phase identifies the four DNR--GCR settings 
shown in Fig.~\ref{fig:regions},
and characterizes the corresponding budget-complexity.
For concreteness, we also provide explicit expressions for the representative choice \(\M=\sqrt{2}\).

\subsection{Budget complexity of the Local-SGD phase}

Recall that Local SGD remains budget-optimal as long as the error exceeds the threshold \((3\M+1)\Eloc\) (Lemma~\ref{lembudget}). 
 By \eqref{Sdef}, the number of Local SGD stages is
\begin{align}
\label{Sgd}
S_{\mathrm{gd}}
\triangleq
\Big\lceil
\log_\M\Big(
\frac{\E{0}}{(3\M+1)\Eloc}
\Big)
\Big\rceil.
\end{align}
Since \(\E{0}\) satisfies \eqref{E0init}, the algorithm initially operates in the initialization-dominated regime of Local SGD (\secref{SGDr1}). Depending on the relative magnitude of the gradient-noise level and the heterogeneity threshold \((3\M+1)\Eloc\), the optimization trajectory may subsequently enter the gradient-noise-dominated regime (\secref{SGDr2}) before transitioning to the Full-DGD phase. 
We then obtain two settings.

\subsubsection{Small-gradient-noise}
This setting is characterized by
\begin{align*}
\boxed{
\sigg{\leq}\Big(1+\frac{1}{3\M}\Big)\sqrt{\mu L}\Eloc,
}
\end{align*}
in which the gradient-noise level is small relative to the objective heterogeneity. The gradient-noise-dominated regime of Local SGD thus becomes infeasible. Consequently, throughout the entire Local-SGD phase the algorithm operates in the initialization-dominated regime (\secref{SGDr1}) with a constant learning stepsize. 

The total budget incurred over the resulting \(S_{\mathrm{gd}}\) stages can be obtained by applying Theorem~\ref{thm:cumbud} to the stage complexity bound \eqref{locsgdinit}. Specifically,
comparing \eqref{locsgdinit} with $\delta_r \le
\lceil\nu\;\E{r}^{\,-n}\rceil$ in \eqref{fghjfsd}, we set
$n=0$, $\nu=\ln(\sqrt{3\M})(\kappa+1)$, and the target error as \(\errt=(3\M+1)\Eloc\),
thus bounding the budget cost of the Local SGD phase via \eqref{Kbound} as
\[
\Bg\lesssim 7.1\;\bg\;\kappa\ln\Big(0.27\frac{\E{0}}{\Eloc}\Big)
\]
for the case \(\M=\sqrt{2}\),
after using \(\lceil x\rceil\leq 1+x\) to bound $S_{\mathrm{gd}}$ in \eqref{Sgd}.
Furthermore, $E_{t}$ can be bound via \eqref{Ebound} with \(\alpha\approx 1.6\) (cf. \eqref{EellboundSGD}), yielding
\begin{align}
\label{ngd}
E_{t}
\lesssim
1.6\E{0}\;
\exp\Big(
-0.12\frac{t+1}{\kappa}
\Big),
\end{align}
for $t=0,\ldots,k_{S_{\mathrm{gd}}}-1$.

Thus, throughout the Local-SGD phase, the error decays geometrically, analogous to noiseless gradient descent on a strongly convex objective. As a result, the associated budget cost grows only logarithmically with the ratio \(\E{0}/\Eloc\). The phase ends when the error reaches the heterogeneity threshold \((3\M+1)\Eloc\). At this point, objective heterogeneity becomes an active bottleneck and communication becomes necessary for further error reduction.

\subsubsection{Large-gradient-noise}
This setting is characterized by
\begin{align*}
\boxed{
\sigg{>}\Big(1+\frac{1}{3\M}\Big)\sqrt{\mu L}\Eloc.
}
\end{align*}
Thus, the algorithm first operates in the initialization-dominated regime for \(S_\init\) stages, until
the error falls below the threshold \(\errt=\frac{3\M\sigg}{\sqrt{\mu L}}\) given by \eqref{E0init}.
Similar to the previous setting, but
 with the target error \(\errt\) in place of $(3\M+1)\Eloc$, we bound
 the total budget cost incurred during the initialization-dominated regime as
\[
\bg\cdot \Big(S_\init+\log_\M(3\M)\kappa\ln\Big(\frac{\sqrt{\mu L}\E{0}}{3\sigg}\Big)
\Big),
\]
during which the RMSE exhibits the geometric decay in \eqref{ngd}.
The algorithm then enters the gradient-noise-dominated regime of Local SGD for the subsequent \(S_\gnd\) stages, until the error falls below the threshold $\errt=(3\M+1)\Eloc$.
Applying Theorem~\ref{thm:cumbud} to the stage complexity bound \eqref{deltagradsgd} in \secref{SGDr2}, with
$n=2$ and
$\nu=
9\M^2\ln(3\M)\frac{\sigg^2}{\mu^2}$, we bound the budget cost accumulated during the gradient-noise-dominated regime as
\[
\bg\;\left(
S_{\gnd}
+
\frac{9\M^4\ln(3\M)}{(\M^{2}-1)(3\M+1)^2}\Big(\frac{\sigg}{\mu\Eloc}\Big)^2
\right),
\]
during which the RMSE decays as ($\M=\sqrt{2}$ and \(\alpha\approx 1.6\) from \eqref{EellboundSGD})
\[
E_{t}
\lesssim
16.3\frac{\sigg}{\mu}\frac{1}{\sqrt{t-k_{S_\init}+1}},
\qquad
t=k_{S_\init},\ldots,k_{S_{\mathrm{gd}}}-1.
\]
Thus, after an initial geometric phase, the convergence behavior 
exhibits the characteristic \(\mathcal O(1/\sqrt{t})\) decay 
of stochastic gradient descent on strongly convex objectives
\cite{7405263}.
In this regime, gradient noise constitutes the dominant bottleneck, and both the error and the budget cost scale with the gradient-noise variance.

Combining the contributions of the two regimes yields
the total budget cost of Local SGD as
\begin{align*}
\Bg\lesssim 
\bg\cdot &\Big[
2.9\ln\Big(0.27\frac{\E{0}}{\Eloc}\Big)
+4.2\kappa\ln\Big(\frac{\sqrt{\mu L}\E{0}}{3\sigg}\Big)
\twocol{\\}&\onecol{\hspace{-25mm}}
+1.9\Big(\frac{\sigg}{\mu\Eloc}\Big)^2
\Big]
\end{align*}
for \(\M=\sqrt{2}\),
where we used \(S_{\mathrm{gd}}=S_\init+S_\gnd\) together with \(\lceil x\rceil\leq 1+x\) in \eqref{Sgd}.

The Local-SGD phase terminates once the error falls below \((3\M+1)\Eloc\). Beyond this point, objective heterogeneity becomes an active bottleneck, and local computation alone can no longer achieve the desired stage-wise error reduction. Communication is therefore required to reconcile objective heterogeneity, motivating the transition to the Full-DGD phase analyzed next.

\subsection{Budget complexity of the Full-DGD phase}
\label{labelcompl}
After the \(S_{\mathrm{gd}}\) stages of Local SGD, the error satisfies
\(
E_{k_{S_{\mathrm{gd}}}}
\le
\E{S_{\mathrm{gd}}}
\le
(3\M+1)\Eloc.
\)
The algorithm then switches to Full DGD and continues until the target accuracy \(\err\) is reached,
while traversing the DNR-, gradient-noise-, and communication-noise-dominated regimes (the initialization-dominated regime is unfeasible, as shown in Lemma \ref{lembudget}). The resulting behavior depends on the relative magnitudes of these bottlenecks, giving rise to four settings that partition the DNR--GCR parameter space into the four regions shown in Fig.~\ref{fig:regions}, each associated with a distinct bottleneck sequence and budget-complexity characterization, analyzed next.

We let \(S_\dnr\), \(S_\gnd\), and \(S_\cnd\) denote the numbers of stages spent in each regime, in order, equal to zero if the corresponding regime is infeasible or never reached. 
Since each stage reduces the error bound by a factor \(\M\), the total number of stages of the Full-DGD phase,
\(
S_{\mathrm{dgd}}=S_\dnr+S_\gnd+S_\cnd,
\)
satisfies (see \eqref{Sdef})
\begin{align}
\label{Sdgd}
S_{\mathrm{dgd}}
\triangleq
\left\lceil
\log_\M\left(\E{S_{\mathrm{gd}}}/\err
\right)
\right\rceil
{\le}
\log_\M\left(
\frac{\M(3\M{+}1)\Eloc}{\err}
\right),
\end{align}
where we used \(\E{S_{\mathrm{gd}}}\le (3\M+1)\Eloc\) and \(\lceil x\rceil\le 1+x\).

\begin{figure}[t]
\centering
\includegraphics[width=\onecol{.7}\linewidth]{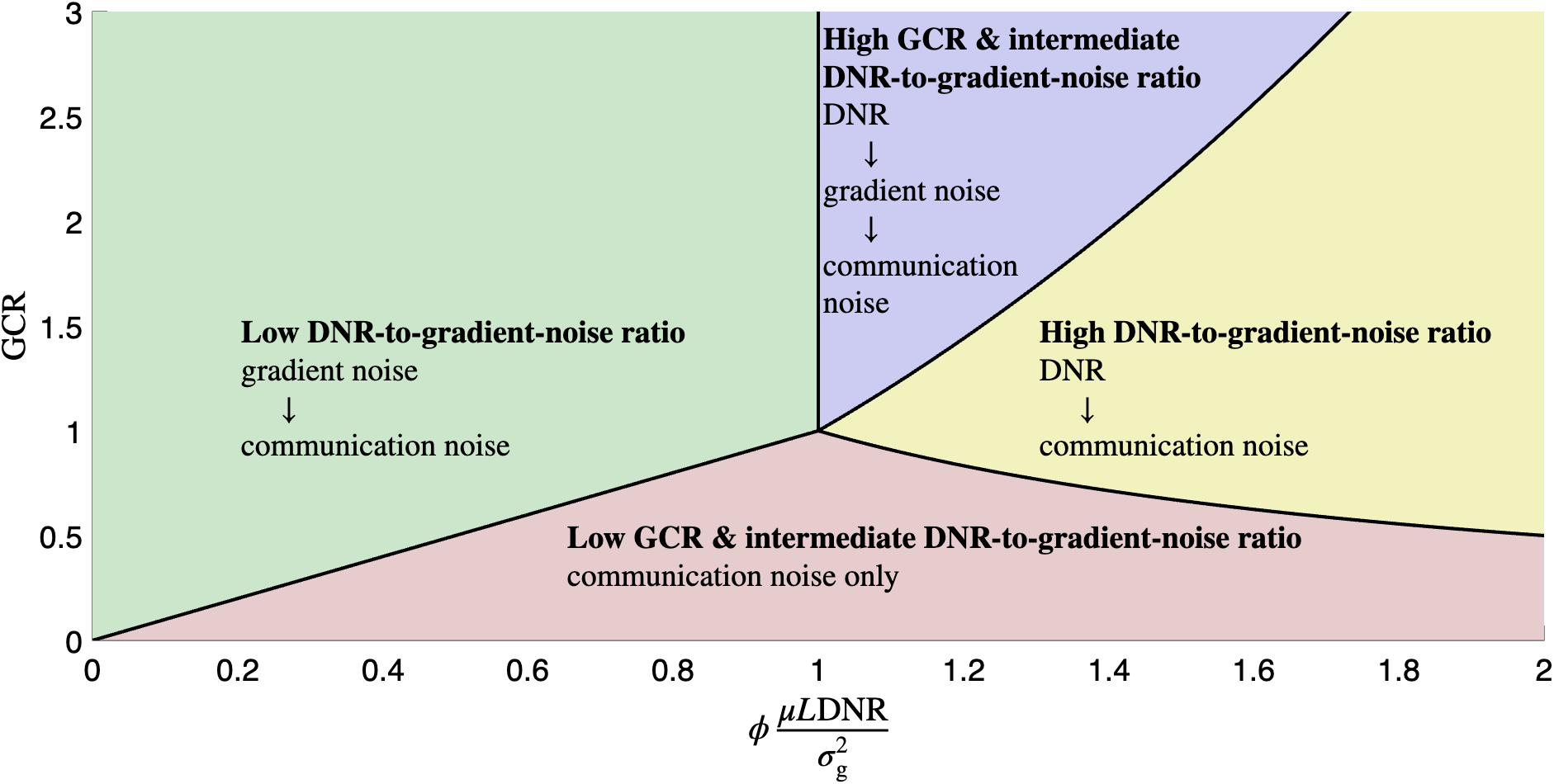}
\caption{
Classification of the four settings governing the Full-DGD phase as a function of the DNR-to-gradient-noise ratio and GCR. Each region corresponds to a distinct sequence of bottlenecks encountered during optimization, indicated below the setting label. 
The boundaries are determined by the setting conditions derived in \secref{labelcompl}.
}
\label{fig:regions}
\end{figure}

The setting conditions presented below characterize the bottleneck structure of the Full-DGD phase in isolation. Since Full DGD is entered only after the Local-SGD phase has already reduced the error below \((3\M+1)\Eloc\), some regimes may become infeasible. For clarity, we discuss these additional feasibility conditions separately as a remark for each setting.

\subsubsection{Low GCR, intermediate DNR-to-gradient-noise ratio}
This  setting is characterized by
\begin{align*}
\boxed{
\GCR\leq
\phi
\frac{\mu L\DNR}{\sigg^2}
\leq \frac{1}{\GCR},
\qquad
\GCR<1,
}
\end{align*}
with $\phi$ defined in \eqref{PHI}.
In other words, the communication noise is large relative to the gradient noise (low GCR), while the DNR is of intermediate magnitude relative to the gradient-noise level.
Under these conditions, it follows from \eqref{condx} and \eqref{dnrcondforgrad} that both the DNR-dominated and gradient-noise-dominated regimes are infeasible, i.e.,
\(
S_\dnr=S_\gnd=0.
\)
Consequently, the Full-DGD phase operates entirely in the communication-noise-dominated regime.

Applying Theorem~\ref{thm:cumbud} to the stage complexity bound \eqref{deltacomm} in \secref{DGDr4}, with
\begin{align}
\label{nnufulld1}
n=4,
\quad
\nu=
2\M^2(4\M+1)^2\ln(4\M)\frac{(\kappa+1)^3}{\kappa}\sigc^2\DNR,
\end{align}
and target error \(\errt=\err\), bounds the total budget $\Bd$ of the Full-DGD phase as
\begin{align}
\label{budcostcommdom}
\Bd
\leq
\bd\left(
S_\cnd
+
\phi_\cnd
\kappa^2
\frac{\sigc^2\DNR}{\err^{4}}
\right),
\end{align}
where
\(
\phi_\cnd
\triangleq
\frac{16\M^6(4\M+1)^2\ln(4\M)}
     {\M^{4}-1}
\).
Specializing to \(\M=\sqrt{2}\) and using 
\eqref{Sdgd}, we obtain
\[
\Bd
\lesssim
\bd\left(
5.8+2.9\ln\left(\frac{\Eloc}{\err}\right)
+
3276
\kappa^2
\frac{\sigc^2\DNR}{\err^{4}}
\right).
\]
Applying \eqref{Ebound} with \(n\) and $\nu$ as in \eqref{nnufulld1} and \(\alpha\approx 1.65\) (cf.~\eqref{dgdalpha}) yields
\begin{align}
\label{errdec}
E_{t}
\lesssim
15
\sqrt{\kappa}
\sqrt[4]{
\frac{\sigc^2\DNR}
{t-k_{s}+1}
},
\end{align}
for all \(t\ge k_{s}\) and $s=S_{\mathrm{gd}}$.
Thus, the error decays as \(\mathcal O(1/\sqrt[4]{t})\), matching the characteristic behavior of DGD with communication noise on strongly convex objectives \cite{10680589}. In this case, the dominant contribution to the budget scales as
\(
\kappa^2
\frac{\sigc^2\DNR}{\err^{4}},
\)
dependent on both the communication-noise level, the DNR and the target accuracy,
while remaining independent of the gradient-noise level \(\sigg\).
 Notably, the quartic dependence on \(1/\err\) implies that attaining high accuracy becomes increasingly expensive once communication noise becomes the dominant bottleneck.

\subsubsection{High DNR-to-gradient-noise ratio}
\label{highdnrset}
This setting is characterized by
\[
\boxed{
\phi
\frac{\mu L\DNR}{\sigg^2}
\ge
\frac{\sqrt{\GCR}}{\min\{\GCR^{3/2},1\}},
}
\]
with $\phi$ defined in \eqref{PHI}.
In other words, the DNR is large relative to the gradient-noise level. Under this condition, the gradient-noise-dominated regime is unfeasible.\footnote{This can be seen by direct inspection of \eqref{condx} and \eqref{dnrcondforgrad}, for the cases $\GCR\geq 1$ and $\GCR<1$ separately} Consequently, the Full-DGD phase consists of a DNR-dominated regime, followed by a communication-noise-dominated regime until the target accuracy \(\err\) is attained.
Intuitively, objective heterogeneity and limited network connectivity initially constitute the dominant bottleneck, making the DNR the primary factor governing the convergence behavior. As the error decreases, the effect of communication noise eventually becomes dominant, causing the algorithm to transition to the communication-noise-dominated regime.

Specifically,  the network remains in the DNR-dominated regime until the error reaches
\[
\errt
=
\sqrt[3]{
\frac{(\kappa+1)^2}{\kappa}\,
2\M^2(4\M+1)
\sigc^2
\sqrt{\DNR}
},
\]
as given by \eqref{cond2.a}.\footnote{By direct inspection, \eqref{cond2.a} is stricter than \eqref{cond2.b} if and only if
\(
\phi
\frac{\mu L\DNR}{\sigg^2}
\ge
\sqrt{\GCR}
     ,
\)
which is implied by the defining condition of this setting.} 
 Comparing \eqref{deltaDNR} in \secref{DGDr2} with $\delta_r \le
\lceil\nu\;\E{r}^{\,-n}\rceil$ in \eqref{fghjfsd}, we set
\begin{align}
\label{nnufulld2}
n=1,
\qquad
\nu=\ln(4\M)(4\M+1)(\kappa+1)\sqrt{\DNR}.
\end{align}
Substituting these quantities into \eqref{Kbound}, we bound
 the total budget accumulated until the error falls below \(\errt\) as
\[
\bd\!\left(
S_\dnr
+
\phi_\dnr
\sqrt[3]{\kappa^2\frac{\DNR}{\sigc^2}}
\right),
\]
where
\(
\phi_\dnr
\triangleq
\frac{\ln(4\M)}{\M-1}
\sqrt[3]{\M(4\M+1)^2}.
\)
Furthermore, specializing \eqref{Ebound} to $n$ and $\nu$ as in \eqref{nnufulld2} and \(\alpha\approx 1.65\) (cf.~\eqref{dgdalpha}),
  yields, for
\(
k_{s}
\le
t
\le
k_{s+S_\dnr}
\), $s=S_{\mathrm{gd}}$, and $\M=\sqrt{2}$,
\begin{align}
\label{errdec2}
E_{t}
\;\lesssim\;
260\,
\frac{\kappa\sqrt{\DNR}}
     {t-k_{s}+1}.
\end{align}
Thus, the \(\mathcal O(1/t)\) error decay  matches the characteristic behavior of noiseless DGD on strongly convex objectives \cite{10680589}. In this regime, objective heterogeneity and limited network connectivity constitute the dominant bottleneck. Consequently, both the error and the dominant contribution to the budget cost scale with DNR, while remaining independent of the gradient-noise level \(\sigg\). Notably, 
larger communication noise leads to an earlier transition to the communication-noise-dominated regime (higher $\errt$), hence
the budget cost decreases with the communication-noise level \(\sigc\).

At this point, the network enters the communication-noise-dominated regime of Full DGD, whose budget cost and error decay are given by \eqref{budcostcommdom} and \eqref{errdec} (with $s=S_{\mathrm{gd}}+S_\dnr$), respectively. Combining the contributions of the DNR-dominated and communication-noise-dominated regimes yields the following bound on the total budget cost for  \(\M=\sqrt{2}\):
\[
\Bd\lesssim\bd\!\left(
5.8 + 2.9 \ln\!\left(\frac{\Eloc}{\err}\right)
+16.6\sqrt[3]{\kappa^2\frac{\DNR}{\sigc^2}}\twocol{\right.\]\[\left.}
+ 3276\kappa^2\frac{\sigc^2\DNR}{\err^{4}}
\right).
\]
Notably, the budget cost increases with the DNR, reflecting the fact that larger objective heterogeneity and weaker network connectivity make it more difficult for the agents to reconcile their local objectives and reach consensus on the global solution.

\begin{remark}[Interaction with the Local-SGD phase]
Since the initialization of Full DGD satisfies \(\Es\le (3\M+1)\Eloc\), the 
DNR-dominated regime is skipped whenever $(3\M+1)\Eloc$ violates
 the lower threshold condition \eqref{cond2.a} required to operate in the DNR-dominated regime.
In this case, the corresponding budget term is omitted from the overall budget expression.
\end{remark}

\subsubsection{Low DNR-to-gradient-noise ratio}
This setting is characterized by
\begin{align*}
\boxed{
\phi\frac{\mu L\DNR}{\sigg^2}
\leq
\min\{\GCR,1\}
,}
\end{align*}
where \(\phi\) is defined in \eqref{PHI}.
In other words, the DNR is small relative to the gradient-noise level. Under this condition, the DNR-dominated regime is unfeasible (seen by direct inspection of \eqref{condx} and \eqref{dnrcondforgrad}). Consequently, the Full-DGD phase consists of a gradient-noise-dominated regime, followed by a communication-noise-dominated regime that persists until the target accuracy \(\err\) is attained.

Intuitively, gradient noise is initially the dominant bottleneck. As the error decreases, the gradient-noise bottleneck is eventually overcome and communication noise becomes the dominant bottleneck, triggering the transition to the communication-noise-dominated regime.

Thus, the network operates in the gradient-noise-dominated regime until reaching the error 
\[
\errt
=
(4\M+1)\sqrt{\DNR/\GCR}
\]
given by \eqref{cond3.a}.
 Comparing \eqref{deltagradnoise} in \secref{DGDr3} with $\delta_r \le
\lceil\nu\;\E{r}^{\,-n}\rceil$ in \eqref{fghjfsd}, we set 
\begin{align}
\label{nnufulld3}
n=2,\qquad
\nu=8\M^2\ln(4\M)\frac{\kappa+1}{\mu L}\sigg^2,
\end{align}
yielding the total budget accumulated until the error falls below \(\errt\) as
\begin{align}
\label{budgrad}
\bd\!\left(
S_\gnd+\phi_\gnd\frac{\sigg^2\;\GCR}{\mu^2\;\DNR}
\right),
\end{align}
where we 
defined the scaling factor \(\phi_\gnd=\frac{16\M^4\ln(4\M)}{(\M^{2}-1)(4\M+1)^2}\).
Furthermore, specializing \eqref{Ebound} to $n$ nd $\nu$ as in \eqref{nnufulld3},
 \(\alpha\approx 1.65\) (cf.~\eqref{dgdalpha}),
 and $\M=\sqrt{2}$,
 yields
\begin{align}
\label{budgraderrdec}
E_{t}
\lesssim
24.6
\frac{\sigg}{\mu}
\frac{1}{\sqrt{t-k_s+1}},
\end{align}
for all iterations belonging to the gradient-noise-dominated regime and $s=S_{\mathrm{gd}}$.
Thus, the error decays as \(\mathcal O(1/\sqrt{t})\), matching the characteristic behavior of stochastic DGD on strongly convex objectives \cite{7405263}. 
Notably, both the error and the dominant contribution to the budget cost scale with the gradient-noise variance. 

Once the error falls below \(\errt\), the network transitions to the communication-noise-dominated regime
 of Full DGD, whose budget cost and error decay are given by \eqref{budcostcommdom} and \eqref{errdec} (with $s=S_{\mathrm{gd}}+S_\gnd$), respectively. Combining the contributions of the gradient-noise- and communication-noise-dominated regimes, we bound
 the total budget cost of the Full-DGD phase as
  \[
\Bd\lesssim\bd\!\left(
5.8 + 2.9 \ln\!\left(\frac{\Eloc}{\err}\right)
+2.5\frac{\sigg^2}{\mu^2}\frac{\GCR}{\DNR}
\twocol{\right.\]\[\left.}
+ 3276\kappa^2\frac{\sigc^2\DNR}{\err^{4}}
\right),
\]
 for \(\M=\sqrt{2}\).
 The total budget increases with both the gradient-noise and communication-noise levels, consistent with the fact that mitigating stronger noise sources requires more communication and computation resources. 
\begin{remark}[Interaction with the Local-SGD phase]
Since the initialization of Full DGD satisfies \(\Es\le (3\M+1)\Eloc\), the gradient-noise-dominated regime is skipped whenever $(3\M+1)\Eloc$ violates
 the lower threshold condition \eqref{cond3.a} required to operate in this regime.
In this case, the corresponding budget term should be omitted from the overall budget expression.
\end{remark}

\subsubsection{High-GCR, intermediate DNR-to-gradient-noise ratio}
\label{highGCRmedDNR}
This setting is characterized by
\[
\boxed{
1
\le
\phi\frac{\mu L\DNR}{\sigg^2}
\leq
\sqrt{\GCR},
\qquad
\GCR>1,
}
\]
with \(\phi\) defined in \eqref{PHI}.
In other words, communication noise is small relative to gradient noise (high GCR), while the DNR is of intermediate magnitude relative to the gradient-noise level. Under these conditions, both the DNR-dominated and gradient-noise-dominated regimes are feasible (seen by direct inspection of \eqref{condx} and \eqref{dnrcondforgrad}). Consequently, the Full-DGD phase traverses all three bottleneck regimes: it first operates in the DNR-dominated regime, then transitions to the gradient-noise-dominated regime, and finally enters the communication-noise-dominated regime until the target accuracy \(\err\) is attained.


Specifically, the network operates in the DNR-dominated regime until reaching the error  threshold
\[
\errt
=
\frac{8\M^2}{4\M+1}\frac{\sigg^2}{\mu L}\frac{1}{\sqrt{\DNR}},
\]
given by \eqref{cond2.b}.\footnote{Indeed, \eqref{cond2.b} is stricter than \eqref{cond2.a} if and only if
\(
\phi
\frac{\mu L\DNR}{\sigg^2}
\le
\sqrt{\GCR}
     ,
\)
implied by the defining condition of this setting.}
 Comparing \eqref{deltaDNR} in \secref{DGDr2} with $\delta_r \le
\lceil\nu\;\E{r}^{\,-n}\rceil$ in \eqref{fghjfsd}, we set 
$n$ and $\nu$ as in \eqref{nnufulld2}.
Using \eqref{Kbound} in Theorem~\ref{thm:cumbud}
gives the total budget accumulated until the error falls below $\errt$ as
\[
\bd\!\left(
S_\dnr
+\phi_\dnr L^2\frac{\DNR}{\sigg^2}
\right),
\]
where we 
defined the scaling factor $\phi_\dnr=\ln(4\M)\frac{(4\M+1)^2}{4\M(\M-1)}$,
and the error decays as in \eqref{errdec2}.
In this regime, the accumulated budget cost scales linearly with $\DNR$ and inversely with $\sigg^2$, since the error threshold $\errt$ is reached earlier as $\sigg^2$ increases.

At this point, gradient noise becomes the dominant bottleneck, and the network transitions to the gradient-noise-dominated regime of Full DGD. The corresponding budget cost and error decay are given by \eqref{budgrad} and \eqref{budgraderrdec}, respectively.
Finally, once the error threshold \eqref{cond3.a} is reached, 
 the network transitions to the communication-noise-dominated regime of Full DGD, with budget cost and error decay given by \eqref{budcostcommdom} and \eqref{errdec}, respectively.

The total budget is obtained by combining the costs of the three regimes traversed. For \(\M{=}\sqrt{2}\):
  \[
\Bd\lesssim\bd\!\left(
5.8 + 2.9 \ln\!\left(\frac{\Eloc}{\err}\right)
+32.8\frac{L^2}{\sigg^2}\DNR
\twocol{\right.\]\[\left.}
+2.5\frac{\sigg^2}{\mu^2}\frac{\GCR}{\DNR}
+ 3276\kappa^2\frac{\sigc^2\DNR}{\err^{4}}
\right).
\]
The third and fifth terms increase with the DNR, highlighting the fact that larger objective heterogeneity or weaker network connectivity makes it more difficult for the agents to reconcile their local objectives and to overcome communication noise. By contrast, the gradient-noise contribution 
decreases with the DNR, since a larger DNR shortens the duration of the gradient-noise-dominated regime. 

\begin{remark}[Interaction with the Local-SGD phase]
Since the initialization of Full DGD satisfies \(\Es\le (3\M+1)\Eloc\), the 
DNR-dominated regime is skipped whenever $(3\M+1)\Eloc$ violates
 the lower threshold condition \eqref{cond2.b} required to operate in this regime.
Furthermore, the gradient-noise-dominated regime is skipped whenever $(3\M+1)\Eloc$ violates
 the lower threshold condition \eqref{cond3.a} required to operate in it.
In these cases, the corresponding budget terms should be omitted from the overall budget expression.
\end{remark}
\onecol{\vspace{-5mm}}
\section{Numerical Results}\label{numres}

To illustrate the proposed bottleneck-centric framework, we consider a network of \(N=50\) agents arranged in a \(K\)-nearest-neighbor ring graph with \(K=10\). Thus, each agent communicates with its \(K\) nearest neighbors on either side of the ring. The mixing matrix $\W$ is symmetric and doubly stochastic, with weight \(1/(2K)\) assigned to each active link.
Each agent holds a one-dimensional quadratic objective
\[
f_i(x)=\frac{\mu_i}{2}\bigl(x-x_i^{\mathrm{loc}}\bigr)^2,
\qquad x\in\mathbb R,
\]
where the curvature parameters \(\mu_i\) are drawn independently and uniformly from \([\mu,L]\),
with \(\mu=1\) and \(L=4\), while the local minimizers \(x_i^{\mathrm{loc}}\) are drawn independently from \(\mathcal N(0,100)\). All agents are initialized from a common point \(x_{\init}\).

This quadratic setting is particularly insightful because the resulting optimization dynamics admit a closed-form mean-square error, allowing us to isolate and visualize the effects of objective heterogeneity, network connectivity, gradient noise, and communication noise.
The global optimum is
\[
x^\star=
\frac{\sum_{i=1}^{N}\mu_i x_i^{\mathrm{loc}}}
{\sum_{i=1}^{N}\mu_i}.
\]
Communication noise is modeled as additive Gaussian,
\(
\boldsymbol{\epsilon}_{\mc,t}\sim\mathcal N(0,\Sigma_{\mc}),
\)
where $\Sigma_{\mc}=\sigma_q^2\W^2$.
This model can be interpreted as an abstraction of a quantization process, where each agent introduces independent quantization noise of variance \(\sigma_q^2\). 
After aggregation through the mixing matrix, the resulting covariance becomes
$\Sigma_{\mc}=\sigma_q^2\W^2$.
 The corresponding communication-noise level is
\(
\sigc^2=\mathbb E\left[\|\boldsymbol{\epsilon}_{\mc,t}\|_2^2 \mid \mathcal F_t\right]
= \mathrm{trace}(\Sigma_{\mc})
=\sigma_q^2\mathrm{trace}(\W^2).
\)
Similarly, gradient noise is modeled as
\(
\boldsymbol{\epsilon}_{\mg,t}
\sim
\mathcal N(0,\Sigma_{\mg}),
\)
with covariance
\(
\Sigma_{\mg}
=
\frac{\sigg^2}{N}\I_N,
\)
corresponding to independent Gaussian perturbations injected by each agent during gradient computation, each having variance \(\sigg^2/N\).
The gradient and communication noises are assumed independent.

To find the closed-form RMSE, define the bias $D_t$ and error covariance matrix $\Sigma_t$ as
\[
D_t\triangleq \mathbb E[\x_t-\xstar],\ 
\Sigma_t\triangleq\mathbb E\!\left[(\x_t{-}\xstar{-}D_t)(\x_t{-}\xstar{-}D_t)^\top\right].
\]
These are initialized as \(D_0=\x_0-\xstar\) and \(\Sigma_0=\mathbf 0\).
For \(t>0\), \(D_t\) and \(\Sigma_t\) can be computed recursively from the updates \eqref{xtdyn}. Using \(\nabla f(\x_t)=\boldsymbol{\mu}(\x_t-\xloc)\), where \(\boldsymbol{\mu}\) is diagonal with \(i\)th diagonal element \(\mu_i\), we rewrite \eqref{xtdyn} as
\begin{align*}
\x_{t+1}{-}\xstar
&{=}
\mathbf A_t(\x_t{-}\xstar){+}\eta_t\boldsymbol{\mu}(\xloc{-}\xstar)
{+}\mathcal N(\mathbf 0,\gamma_t^2\Sigma_{\mc}{+}\eta_t^2\Sigma_{\mg}),
\end{align*}
where
\(
\mathbf A_t{\triangleq}(1{-}\gamma_t)\I_N{+}\gamma_t \W{-}\eta_t\boldsymbol{\mu}.
\)
Taking expectation yields
\begin{align}
\label{biasev}
D_{t+1}
=
\mathbf A_t D_t+\eta_t\boldsymbol{\mu}(\xloc-\xstar).
\end{align}
Similarly,
\begin{align*}
\x_{t+1}{-}\xstar{-}D_{t+1}
{=}
\mathbf A_t(\x_t{-}\xstar{-}D_t)
{+}\mathcal N(\mathbf 0,\gamma_t^2\Sigma_{\mc}{+}\eta_t^2\Sigma_{\mg}).
\end{align*}
Calculating its covariance yields 
\begin{align}
\label{covev}
\Sigma_{t+1}
=
\mathbf A_t\Sigma_t\mathbf A_t^\top
+\gamma_t^2\Sigma_{\mc}+\eta_t^2\Sigma_{\mg}.
\end{align}
With \(D_t\) and \(\Sigma_t\) thus given, we can compute the RMSE as
\begin{align}
\label{rmselin}
E_t=\sqrt{\mathbb E[\|\x_t-\xstar\|^2]}
=\sqrt{\|D_t\|^2+\mathrm{trace}(\Sigma_t)}.
\end{align}

In addition, it is useful to quantify the contribution of the different bottlenecks to the overall error. To this end, we further decompose the error covariance into the contributions due to communication noise, \(\Sigma_t^{(\mathrm{comm})}\), and that due to gradient noise, \(\Sigma_t^{(\mathrm{grad})}\). These quantities evolve according to
\begin{align*}
\Sigma_{t+1}^{(\mathrm{comm})}
&=
\mathbf A_t\Sigma_t^{(\mathrm{comm})}\mathbf A_t^\top
+\gamma_t^2\Sigma_{\mc},
\twocol{\\}\onecol{\ \ }
\Sigma_{t+1}^{(\mathrm{grad})}
&\onecol{\hspace{-25mm}}=
\mathbf A_t\Sigma_t^{(\mathrm{grad})}\mathbf A_t^\top
+\eta_t^2\Sigma_{\mg},
\end{align*}
with \(\Sigma_0^{(\mathrm{comm})}=\Sigma_0^{(\mathrm{grad})}=\mathbf 0\). By linearity,
\(
\Sigma_t=
\Sigma_t^{(\mathrm{comm})}
+
\Sigma_t^{(\mathrm{grad})}.
\)
Similarly, we decompose the bias into two components: the contribution due to objective heterogeneity and limited network connectivity, and that due to initialization. To isolate the former, consider the noiseless dynamics
\begin{align}
\tilde{\x}_{t+1}
=[(1-\gamma_t)\I_{N}
+\gamma_t\W]\tilde{\x}_t
-\eta_t\nabla f(\tilde{\x}_t),
\end{align}
initialized at the global optimum,
\(\tilde{\x}_0=\xstar.\)
The resulting trajectory captures the bias induced solely by objective heterogeneity and limited connectivity. The remaining bias is due to the fact that the algorithm is initialized at \(\x_{\init}\neq \xstar\), and thus captures the effect of initialization.
Accordingly, we decompose
\(
D_t
=
D_t^{(\DNR)}
+
D_t^{(\mathrm{init})},
\)
where
\[
D_t^{(\DNR)}
\triangleq
\tilde{\x}_t-\xstar,
\qquad
D_t^{(\mathrm{init})}
\triangleq
D_t-D_t^{(\DNR)}.
\]
These quantities satisfy 
$D_0^{(\DNR)}=\mathbf 0,$
$D_0^{(\mathrm{init})}=D_0,$
and for $t\geq 0$,
\begin{align*}
&D_{t+1}^{(\DNR)}
=
\mathbf A_tD_t^{(\DNR)}
+\eta_t\boldsymbol{\mu}(\xloc-\xstar),
\twocol{\\}\onecol{\hspace{-10mm}}
&D_{t+1}^{(\mathrm{init})}
=
\mathbf A_tD_t^{(\mathrm{init})}.
\end{align*}

With these definitions, the overall MSE can be written as
\[
E_t^2
=
\|D_t^{(\init)}+D_t^{(\DNR)}\|^2
+\mathrm{trace}(\Sigma_t^{(\mathrm{comm})})
+\mathrm{trace}(\Sigma_t^{(\mathrm{grad})}).
\]
We then measure the relative contributions of communication and gradient noises to the MSE as
\begin{align*}
\rho_t^{(\mathrm{comm})}
&\triangleq
\frac{\mathrm{trace}(\Sigma_t^{(\mathrm{comm})})}{E_t^2},
\ 
\rho_t^{(\mathrm{grad})}
\triangleq
\frac{\mathrm{trace}(\Sigma_t^{(\mathrm{grad})})}{E_t^2}.
\end{align*}
For the effects of initialization and DNR, the MSE is not additive with respect to their contributions, since the cross product \(D_t^{(\init)\top}D_t^{(\DNR)}\) is generally non-zero. We therefore define their relative contributions as
\begin{align*}
\rho_t^{(\init)}
&\triangleq
\frac{\|D_t^{(\init)}\|^2}{\|D_t^{(\init)}\|^2+\|D_t^{(\DNR)}\|^2}
\cdot
\frac{\|D_t^{(\init)}+D_t^{(\DNR)}\|^2}{E_t^2},
\\
\rho_t^{(\DNR)}
&\triangleq
\frac{\|D_t^{(\DNR)}\|^2}{\|D_t^{(\init)}\|^2+\|D_t^{(\DNR)}\|^2}
\cdot
\frac{\|D_t^{(\init)}+D_t^{(\DNR)}\|^2}{E_t^2}.
\end{align*}
With these definitions, the four relative contributions satisfy
$
\rho_t^{(\init)}+\rho_t^{(\DNR)}+\rho_t^{(\mathrm{grad})}+\rho_t^{(\mathrm{comm})}=1.
$

In the subsequent numerical evaluation, we compare 
the proposed multi-stage schedule
 against several benchmarks. The selected stepsize schedule determines the evolution of the bias and covariance according to \eqref{biasev} and \eqref{covev}, and therefore characterizes the RMSE in \eqref{rmselin}.

\begin{itemize}[leftmargin=*]
\item \emph{Constant stepsizes}: This schedule employs constant learning and consensus stepsizes,
\(
\eta=\frac{1}{L+\mu},
\gamma=\frac{1}{2},
\)
satisfying the convergence conditions in \eqref{origcond}. In the noiseless setting, the resulting iterates converge linearly to a neighborhood of the global optimum, with asymptotic error \eqref{fperror}.

\item \emph{Gradient-noise-aware}: 
This schedule is designed to mitigate the effect of gradient noise in DGD.  
 Communication noise is neglected in its design. It therefore uses a fixed consensus stepsize,
\(
\gamma=\frac{1}{2},
\)
together with a decreasing learning stepsize
\[
\eta_t=
\min\left\{
\frac{1}{\mu t},
\frac{1}{L+\mu}
\right\}.
\]
The \(1/(\mu t)\) decay is motivated by classical stochastic approximation and stochastic gradient methods for strongly convex objectives, see, e.g., \cite{doi:10.1137/070704277}.

\item \emph{Communication-noise-aware}: This schedule is designed to mitigate the combined effects of communication noise, gradient noise, objective heterogeneity, and limited network connectivity. Following \cite{10680589}, the learning and consensus stepsizes are chosen as
\[
\eta_t=
\frac{1}{\mu+L}
\frac{1}{1+\frac{4\mu}{5(\mu+L)}t},
\qquad
\gamma_t=
\frac{0.5}
{\left(1+\frac{4\mu}{5(\mu+L)}t\right)^{3/4}};
\]
 the specific exponents and scaling factors are motivated and theoretically justified in \cite{10680589}.
\end{itemize}

The proposed multi-stage schedule is illustrated in Fig.~\ref{fig:ss} for the case $\M{=}\sqrt{2}$.
During the Local-SGD phase, the consensus stepsize is identically zero, since no communication takes place. As the error decreases, the learning stepsize is progressively reduced from its maximum admissible value \(2/(\mu+L)\). Upon entering the Full-DGD phase, both the learning and consensus stepsizes become functions of the current error level. In particular, the learning stepsize scales linearly with the error in the DNR-dominated regime (\(\etas\propto \Es\)), quadratically in the gradient-noise-dominated regime (\(\etas\propto \Es^2\)), and with the power of four in the communication-noise-dominated regime (\(\etas\propto \Es^4\)), in agreement with the theoretical characterizations derived in \secref{dgdan}.

The consensus stepsize exhibits a different behavior. It remains at its maximum value \(\gammas=1/2\) throughout the DNR-dominated and most of the gradient-noise-dominated regimes, indicating that communication is not yet the limiting factor and that aggressive information mixing is therefore desirable.
However, once communication noise becomes significant, the consensus stepsize is progressively reduced. This reflects the fact that communication becomes increasingly error prone, and a smaller consensus stepsize is required to limit the accumulation of communication noise across iterations.

\begin{figure}[t]
\centering
\includegraphics[width=\twocol{.8}\onecol{.5}\linewidth]{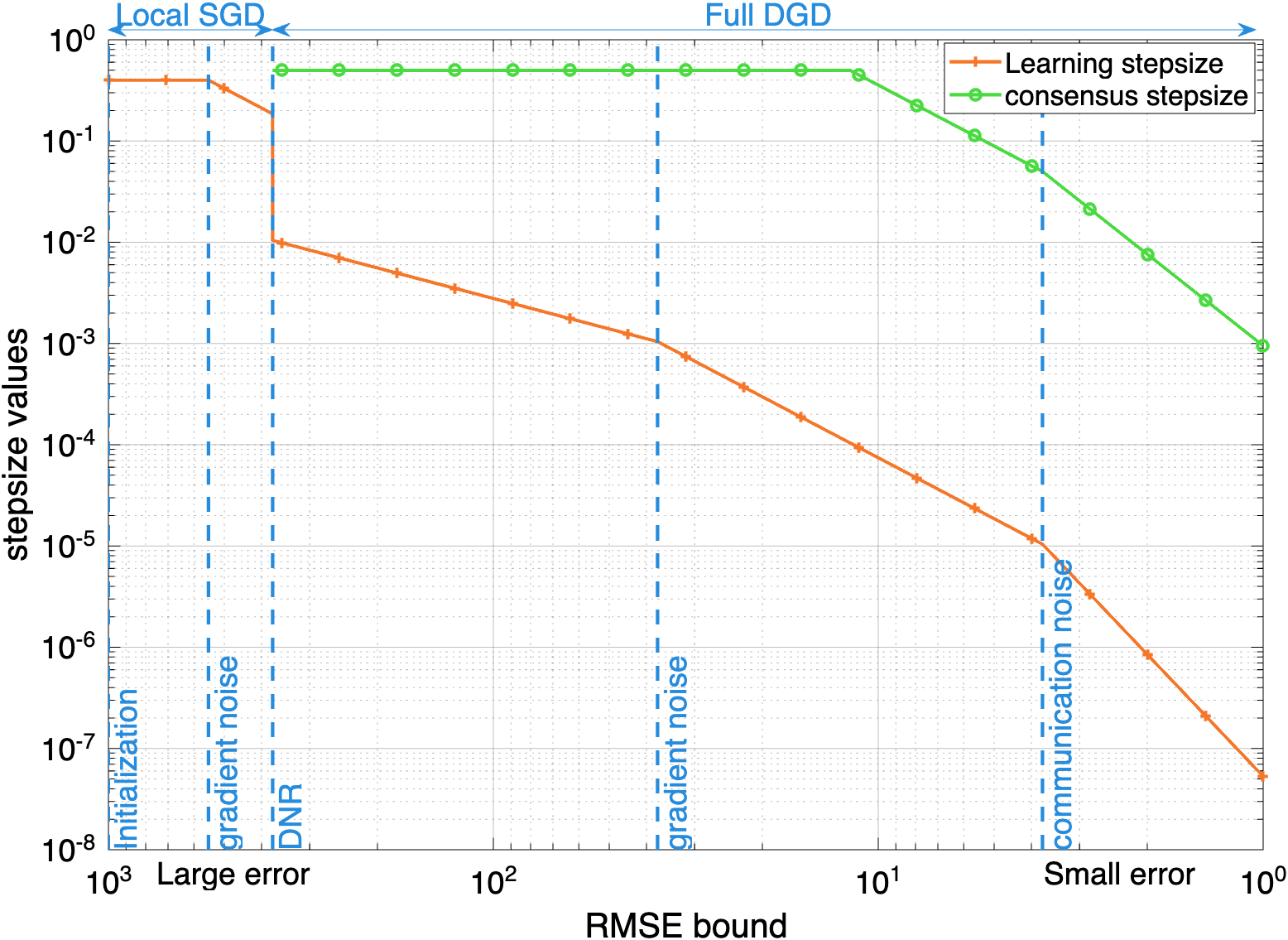}
\caption{Learning and consensus stepsizes, $\etas$ and $\gammas$, as a function of the RMSE bound $\Es$, for the High-GCR, intermediate DNR-to-gradient-noise ratio setting.
The plot shows also the phase and regime transitions under the multi-stage schedule.
}
\label{fig:ss}
\end{figure}

\twocol{\mainfig}

The proposed method transitions from a Local-SGD phase to a Full-DGD phase as dictated by the active bottleneck, whereas all benchmark methods rely exclusively on Full DGD. To facilitate a fair comparison of the resulting optimization dynamics, we adopt the optimistic setting \(\bg=\bd=1\), so that the budget coincides with the total number of iterations. This choice isolates the impact of the different bottlenecks and operating strategies without introducing additional asymmetries between Local-SGD and Full-DGD operations. Consequently, any performance gain observed for the proposed method stems from its bottleneck-aware scheduling strategy rather than from assigning a lower cost to local computation. 

Figure~\ref{fig:set4} shows the RMSE (top row), normalized by the initial error \(\|\x_0-\xstar\|\), together with the relative contributions of the four error components (bottom row) under the proposed multi-stage schedule. The left and right columns correspond to the High-GCR, intermediate DNR-to-gradient-noise ratio setting (\secref{highGCRmedDNR}) and the High DNR-to-gradient-noise ratio setting (\secref{highdnrset}), respectively.
From the RMSE curves, we observe that the proposed multi-stage schedule achieves a sustained error reduction throughout the entire optimization horizon. In contrast, the constant-stepsize schedule saturates due to the heterogeneity-induced bias of DGD. The communication-noise-aware schedule also exhibits a decreasing error, but its convergence is significantly slower because it assumes communication noise to be the dominant bottleneck at all times, resulting in overly conservative stepsizes during the early stages of optimization. The gradient-noise-aware schedule performs well initially, when communication noise contributes little to the overall error. However, once communication noise becomes significant, its fixed consensus stepsize is unable to sufficiently suppress the propagation of communication errors, leading to a deterioration in performance.

\onecol{\mainfig}

The bottom row provides insight into the evolution of the active bottleneck. In the High-GCR, intermediate DNR-to-gradient-noise ratio setting (left column), the optimization trajectory traverses all bottleneck regimes predicted by the theory. The initialization error dominates during the early iterations and rapidly decays. Gradient noise subsequently becomes the dominant source of error during the Local-SGD phase, followed by a DNR-dominated phase after the switch to Full DGD. As the error decreases further, the relative impact of DNR diminishes
and that of gradient-noise remains sustained, until
 communication noise progressively emerges as the dominant bottleneck, eventually accounting for the majority of the MSE.

A similar behavior is observed in the High DNR-to-gradient-noise ratio setting (right column). However, since gradient noise is comparatively weak, the gradient-noise-dominated regimes are skipped. The optimization process therefore transitions directly from the initialization-dominated regime of Local SGD to the DNR-dominated regime of Full DGD, and subsequently to the communication-noise-dominated regime. The corresponding error decomposition reflects this sequence of bottlenecks.

Overall, these results validate the proposed bottleneck-centric framework. The observed transitions closely match the theoretically predicted operating regimes, and the resulting multi-stage schedule adapts the learning and consensus stepsizes to the active bottleneck, improving budget-efficiency compared with the benchmark schemes.
\section{Concluding Remarks} \label{concl}

This paper developed a resource-aware, bottleneck-centric framework for DGD. We showed that objective heterogeneity, network connectivity, gradient noise, and communication noise become dominant at different error scales, giving rise to distinct operating regimes and budget scalings. 
The resulting analysis reveals how the overall budget decomposes into the costs of overcoming successive bottlenecks and provides insight into the tradeoffs among objective heterogeneity, network connectivity, gradient noise, and communication noise. 
More broadly, the proposed framework opens new avenues for resource-aware decentralized optimization, where algorithms adapt to the active bottleneck as the optimization progresses.

\appendices

\section{Upper bound on the initial RMSE}

Suppose all devices initialize the algorithm from the same point
\(
x_{i,0}=x_{\mathrm{init}}, \forall i.
\)
Then, using strong convexity of the global objective $F(x)$, we have
\[
\|\nabla F(x_{\mathrm{init}})\|
=
\|\nabla F(x_{\mathrm{init}})-\nabla F(x^\star)\|
\ge
\mu \|x_{\mathrm{init}}-x^\star\|.
\]
Recalling that $E_0=\sqrt{N}\|x_{\mathrm{init}}-x^\star\|$, this implies
\(
E_0
\le
\frac{\sqrt{N}}{\mu}\|\nabla F(x_{\mathrm{init}})\|.
\)
We continue by using $F(x)=\frac{1}{N}\sum_{i=1}^N f_i(x)$ and the triangle inequality:
\[
E_0
\le\frac{\sqrt{N}}{\mu}\|\nabla F(x_{\mathrm{init}})\|
=
\frac{\sqrt{N}}{\mu}\Big\|\frac{1}{N}\sum_{i=1}^N \nabla f_i(x_{\mathrm{init}})\Big\|\]\[
\le
\frac{1}{\mu\sqrt{N}}\sum_{i=1}^N \|\nabla f_i(x_{\mathrm{init}})\|
\le
\frac{\sqrt{N}}{\mu}\max_i \|\nabla f_i(x_{\mathrm{init}})\|\triangleq
\E{0}.
\]
The bound $\E{0}$ can be computed with low communication overhead via a scalar distributed max-consensus procedure \cite{6259916}. 
 Tighter, albeit more communication-intensive, bounds can be obtained by 
using one of the intermediate bounds in the above sequence of inequalities. Since  $\E{0}$ only serves to initialize the stage-wise recursion, any computable upper bound may be used.

\section{Proof of Lemma \ref{lem:localSGD}}
\label{proofoflem:localSGD}
\begin{proof}
When \(\;0{\leq}\eta{\le}2/(L+\mu)\), 
the mean-square error relative to \(\xloc\) satisfies \cite{10947567}
\[
\mathbb E\big[\|\mathbf x_{t+1}-\xloc\|^2\big]
\le (1-\eta\mu)^2\mathbb E\big[\|\mathbf x_{t}-\xloc\|^2\big]+\eta^2\sigg^2,
\]
where \(\sigg^2\) bounds the conditional second moment of the gradient noise (Assumption \ref{noiseas}). 
Applying this inequality recursively for \(\delta\) local steps
  gives
\begin{align}
\twocol{\nn&\mathbb E\big[\|\mathbf x_{k_s+\delta}-\xloc\|^2\big]\\\nn&}
\onecol{\nn&\mathbb E\big[\|\mathbf x_{k_s+\delta}-\xloc\|^2\big]}
\le (1-\eta\mu)^{2\delta}\mathbb E\big[\|\mathbf x_{k_s}-\xloc\|^2\big]
{+}\eta^2\sigg^2\sum_{r=0}^{\delta-1}(1-\eta\mu)^{2r}\\
&\le (1-\eta\mu)^{2\delta}\mathbb E\big[\|\mathbf x_{k_s}-\xloc\|^2\big]
{+}\frac{\eta^2\sigg^2}{1{-}(1{-}\eta\mu)^{2}}
\twocol{\nn\\} &\onecol{\hspace{-46mm}}
\le (1-\eta\mu)^{2\delta}\mathbb E\big[\|\mathbf x_{k_s}-\xloc\|^2\big]
{+}\frac{\eta(\kappa{+}1)\sigg^2}{2L},
\label{errbo}
\end{align}
where in the last step we used 
\(1-(1-\eta\mu)^2=\eta\mu(2-\eta\mu)\ge 2\eta\mu L/(L+\mu)
\) for \(\eta\le 2/(L+\mu)\).

We are interested in the distance to the global optimum \(\xstar\). By Minkowski's inequality \cite[Lemma 14.10]{florescu2013handbook} and the RMSE definition in \eqref{rmse},
\begin{align*}
E_{k_s+\delta}&=\sqrt{\mathbb E\big[\|\mathbf x_{k_s+\delta}-\xstar\|^2\big]}
\le \sqrt{\mathbb E\big[\|\mathbf x_{k_s+\delta}-\xloc\|^2\big]}+\Eloc\\
&\twocol{\hspace{-0.5cm}}\le (1-\eta\mu)^{\delta}\sqrt{\mathbb E\big[\|\mathbf x_{k_s}-\xloc\|^2\big]}
+\frac{\sqrt{\eta(\kappa+1)}}{\sqrt{2 L}}\sigg+\Eloc,
\end{align*}
where we used \eqref{errbo} in the last inequality.
Next, using 
Minkowski's inequality again to bound
\(\sqrt{\mathbb E\|\mathbf x_{k_s}-\xloc\|^2}\le \sqrt{\mathbb E\|\mathbf x_{k_s}-\xstar\|^2}+\Eloc{=}
E_{k_s}{+}\Eloc\) and $E_{k_s}{\leq}\Es$, we obtain
\begin{align}
\label{errb}
E_{k_s+\delta}
\le& (1-\eta\mu)^{\delta}\Es
+\frac{\sqrt{\eta(\kappa+1)}}{\sqrt{2 L}}\sigg\twocol{\nn\\}&\onecol{\hspace{-28mm}}
+(1+(1-\eta\mu)^{\delta})\Eloc,
\end{align}
which relates the RMSE at time $k_s+\delta$ to that at time $k_s$.
The final result of the lemma is obtained by using $a+b+c\leq 3\max\{a,b,c\}$.
\end{proof}

\section{Proof of Lemma \ref{lem:fullDGD}}
\label{proofoflem:fullDGD}
\begin{proof}
In \cite{10947567}, it was proved that, for $0{<}\eta{\leq}\frac{2-\gamma(1-\lambda_N)}{L+\mu}$ and $\gamma\in(0,1]$, the one-step error dynamic is bounded as
\[
\mathbb E\big[\|\mathbf x_{t+1}{-}\hat{\x}\|^2 \mid \mathcal F_t\big]
{\le} (1{-}\eta\mu)^2\|\mathbf x_t{-}\hat{\x}\|^2
\twocol{\]\[}
+
\mathbb E\big[\|\gamma \boldsymbol{\epsilon}_{\mc,t}
-
\eta \boldsymbol{\epsilon}_{\mg,t}\|^2 \mid \mathcal F_t\big],
\]
where $\hat{\x}$ is the  fixed point (FP) of the noiseless iterate,\footnote{Its existence and uniqueness is guaranteed by the fact that, under the stepsize condition, 
 \eqref{contrct} is a contraction mapping, combined with
  Banach's fixed point Theorem \cite[Th.~9.23]{rudin1953principles}. See \cite{10947567} for further details.} 
  that is, the unique solution of
 \begin{align}
\hat{\x}
=
(1-\gamma)\hat{\x}
+
\gamma (\mathbf{W}\otimes\I_d)\hat{\x}
-
\eta \nabla f(\hat{\x}),
\end{align}
which Full DGD converges to.
Furthermore,
using Minkowski's inequality \cite[Lemma 14.10]{florescu2013handbook},
\[
\sqrt{\mathbb E\big[\|\gamma \boldsymbol{\epsilon}_{\mc,t}
-
\eta \boldsymbol{\epsilon}_{\mg,t}\|^2 \mid \mathcal F_t\big]}
\leq
\gamma\sqrt{\mathbb E\big[\| \boldsymbol{\epsilon}_{\mc,t}\|^2 \mid \mathcal F_t\big]}
\twocol{\]\[}
+\eta\sqrt{\mathbb E\big[ \|\boldsymbol{\epsilon}_{\mg,t}\|^2 \mid \mathcal F_t\big]}
\leq
\gamma\sigc+\eta\sigg
,
\]
where \(\sigg^2\) and \(\sigc^2\) bound the conditional second moments of the gradient and communication noises (Assumption \ref{noiseas}).
Taking total expectation, we then obtain
\begin{align}
\mathbb E\big[\|\mathbf x_{t+1}{-}\hat{\x}\|^2\big]
{\le}& (1{-}\eta\mu)^2\mathbb E\big[\|\mathbf x_{t}{-}\hat{\x}\|^2\big]
{+}
(\gamma\sigc{+}\eta\sigg)^2.
\label{onestepdgd}
\end{align}
Applying this inequality recursively for \(\delta\) steps 
  gives
\begin{align}
\nonumber
&\mathbb E\big[\|\mathbf x_{k_s+\delta}-\hat{\x}\|^2\big]
\twocol{\\}&\onecol{\hspace{-80mm}}\le (1-\eta\mu)^{2\delta}\mathbb E\big[\|\mathbf x_{k_s}-\hat{\x}\|^2\big]
+(\gamma\sigc+\eta\sigg)^2\sum_{r=0}^{\delta-1}(1-\eta\mu)^{2r}
\nonumber
\\&
\nonumber
\le (1-\eta\mu)^{2\delta}\mathbb E\big[\|\mathbf x_{k_s}-\hat{\x}\|^2\big]
+\frac{(\gamma\sigc+\eta\sigg)^2}{1-(1-\eta\mu)^{2}}\\
&\le (1-\eta\mu)^{2\delta}\mathbb E\big[\|\mathbf x_{k_s}-\hat{\x}\|^2\big]
+(\kappa+1)\frac{(\gamma\sigc+\eta\sigg)^2}{2\eta L},
\label{lastineq}
\end{align}
where in the last step we used 
\(1-(1-\eta\mu)^2=\eta\mu(2-\eta\mu)\ge 2\eta\mu L/(L+\mu)
\) since \(\eta\le 2/(L+\mu)\).

The distance to the global optimum follows from
Minkowski's inequality~\cite[Lemma 14.10]{florescu2013handbook},
\twocol{
\begin{align*}
E_{k_s+\delta}&=\sqrt{\mathbb E\big[\|\mathbf x_{k_s+\delta}-\xstar\|^2\big]}
\\&
\le \sqrt{\mathbb E\big[\|\mathbf x_{k_s+\delta}-\hat{\x}\|^2\big]}+\|\hat{\x}-\xstar\|\\
&\le (1-\eta\mu)^{\delta}\sqrt{\mathbb E\big[\|\mathbf x_{k_s}-\hat{\x}\|^2\big]}
+\frac{\sqrt{\eta(\kappa+1)}}{\sqrt{2 L}}\sigg
\\&
\quad+\frac{\gamma}{\sqrt{\eta}}\frac{\sqrt{\kappa+1}}{\sqrt{2 L}}\sigc
+\|\hat{\x}-\xstar\|,
\end{align*}}
\onecol{
\begin{align*}
E_{k_s+\delta}&=\sqrt{\mathbb E\big[\|\mathbf x_{k_s+\delta}-\xstar\|^2\big]}
\le \sqrt{\mathbb E\big[\|\mathbf x_{k_s+\delta}-\hat{\x}\|^2\big]}+\|\hat{\x}-\xstar\|\\
&\le (1-\eta\mu)^{\delta}\sqrt{\mathbb E\big[\|\mathbf x_{k_s}-\hat{\x}\|^2\big]}
+\frac{\sqrt{\eta(L+\mu)}}{\sqrt{2\mu L}}\sigg
+\frac{\gamma}{\sqrt{\eta}}\frac{\sqrt{L+\mu}}{\sqrt{2\mu L}}\sigc
+\|\hat{\x}-\xstar\|,
\end{align*}}
where in the last step we used \eqref{lastineq} and the triangle inequality.
Next, using 
Minkowski's inequality again to bound
\(\sqrt{\mathbb E\|\mathbf x_{k_s}-\hat{\x}\|^2}\le \sqrt{\mathbb E\|\mathbf x_{k_s}-\xstar\|^2}+\|\hat{\x}-\xstar\|\), we obtain
\begin{align}
\label{errdgd}
&E_{k_s+\delta}
\le (1-\eta\mu)^{\delta}E_{k_s}
+\frac{\sqrt{\eta(\kappa+1)}}{\sqrt{2 L}}\sigg\twocol{\nn\\}&\onecol{\hspace{-5mm}}
+\frac{\gamma}{\sqrt{\eta}}\frac{\sqrt{\kappa+1}}{\sqrt{2 L}}\sigc
+(1+(1-\eta\mu)^{\delta})\|\hat{\x}-\xstar\|.
\end{align}
In \cite{10947567}, it was shown that the FP error satisfies
\begin{align}
\label{fperror}
\|\hat{\x}-\xstar\|
\le
\frac{\eta}{\gamma}\kappa\frac{\|\nabla f(\xstar)\|}{(1-\lambda_2)}
=
\frac{\eta}{\gamma}\frac{L+\mu}{2}\sqrt{\DNR}.
\end{align}
(see the definition of $\DNR$ in \eqref{DNR}).  
Substituting \eqref{fperror} into \eqref{errdgd} and using $E_{k_s}\leq \Es$, we obtain
\twocol{
\begin{align}
&E_{k_s+\delta}
\le (1-\eta\mu)^{\delta}\Es
+\frac{\sqrt{\eta(\kappa+1)}}{\sqrt{2 L}}\sigg\nn\\&
+\frac{\gamma}{\sqrt{\eta}}\frac{\sqrt{\kappa+1}}{\sqrt{2 L}}\sigc
+(1+(1-\eta\mu)^{\delta})\frac{\eta}{\gamma}\frac{L+\mu}{2}\sqrt{\DNR},
\label{dgdboun2}
\end{align}}
\onecol{
\begin{align}
&E_{k_s+\delta}
{\le}(1-\eta\mu)^{\delta}\Es
{+}\frac{\sqrt{\eta(L+\mu)}}{\sqrt{2\mu L}}\sigg
{+}\frac{\gamma}{\sqrt{\eta}}\frac{\sqrt{L+\mu}}{\sqrt{2\mu L}}\sigc
+(1{+}(1-\eta\mu)^{\delta})\frac{\eta}{\gamma}\frac{L+\mu}{2}\sqrt{\DNR},
\label{dgdboun2}
\end{align}}
which relates the error after $\delta$ iterations to the error at the start of stage $s$.
The bound in the lemma statement directly follows after using $a+b+c+d\leq4\max\{a,b,c,d\}$.
\end{proof}

\section{Proof of Lemma \ref{lem:fullDGDoptsteps}}
\label{proofoflem:fullDGDoptsteps}
\begin{proof}
We can reformulate \eqref{eq:four_conditions.b}-\eqref{eq:four_conditions.d}
as
\begin{align}
\label{eq2}
\begin{cases}
\eta \le \frac{2 L}{\kappa+1}\Big(\frac{\Es}{4\M\sigg}\Big)^2\triangleq A;\\
\eta\ge\frac{\kappa+1}{2 L} \Big(\frac{4\gamma\M\sigc}{\Es}\Big)^2\triangleq B;\\
\eta\le \frac{2}{L+\mu}\frac{1}{4\M+1}\frac{\gamma\Es}{\sqrt{\DNR} }\triangleq C.
\end{cases}
\end{align}
These conditions must hold together with \eqref{newcond}.
For a given consensus stepsize $0<\gamma\le \frac{1}{2}$, the learning stepsize $\eta$ should be chosen as large as possible. From \eqref{eq2} and \eqref{newcond}, this yields
\[
\eta
=
\min\left\{A,\;C,\;\frac{1}{L+\mu}\right\},
\]
provided that the following feasibility conditions hold:
\begin{align}
\label{feasibility}
B\le A,\qquad
B\le C,\qquad
B\le \frac{1}{L+\mu}.
\end{align}

If any of the conditions in \eqref{feasibility} are violated, then the feasible region for $\eta$ in \eqref{eq2} is empty, resulting in an infeasible stepsize configuration.
Optimizing $\gamma$ therefore amounts to selecting the value of $\gamma$ that maximizes the admissible learning stepsize $\eta$. This leads to the problem
\begin{subequations}
\begin{align}
\max_{\gamma\ge0}\  
&\frac{1}{L{+}\mu}\;\min\Bigg\{
\frac{\mu L}{8\M^2}\Big(\frac{\Es}{\sigg}\Big)^2,
\frac{2\gamma\Es}{(4\M{+}1)\sqrt{\DNR}}
,
1
\Bigg\}
\label{obj}
\\
\text{s.t.}\quad 
&\gamma\leq\frac{L}{\kappa+1}\frac{\Es^2}{8\M^2\sigg\sigc},
\label{c1}\\
&\gamma\leq\frac{\kappa}{(\kappa+1)^2}\frac{\Es^3}{4\M^2(4\M{+}1)\sigc^2\sqrt{\DNR}},
\label{c2}\\
&\gamma\leq
\frac{\sqrt{2\kappa}}{\kappa+1}\frac{\Es}{4\M\sigc},
\label{c3}\\
&\gamma\leq \frac{1}{2},
\label{c4}
\end{align}
\end{subequations}
where the constraints \eqref{c1}-\eqref{c3} correspond to \eqref{feasibility}, and the last one  
corresponds to \eqref{newcond}.
Since the objective \eqref{obj} is non-decreasing in $\gamma$, the optimal $\gamma$ matches the tightest constraint among \eqref{c1}--\eqref{c4}, yielding
\eqref{optgam}.
Substituting $\gammas$ into \eqref{obj} gives the optimal learning stepsize
\[
\etas\triangleq\frac{1}{L+\mu}
\min\Bigg\{
\underbrace{
\Big(\frac{\sqrt{\kappa}}{\kappa+1}\frac{\Es^2}{\sqrt{2}\M(4\M+1)\sigc\sqrt{\DNR}}\Big)^2
}_{\text{(a)}},\]\[
\underbrace{
\frac{L}{\kappa+1}\frac{\Es^3}{4\M^2(4\M+1)\sigg\sigc\sqrt{\DNR}}
}_{\text{(b)}},
\underbrace{
2\mu L\Big(\frac{\Es}{4\M\sigg}\Big)^2}_{\text{(c)}},\]\[
\underbrace{
\frac{\sqrt{\kappa}}{\kappa+1}\frac{\Es^2}{\sqrt{2}\M(4\M+1)\sigc\sqrt{\DNR}}
}_{\text{(d)}},
\underbrace{
\frac{\Es}{(4\M+1)\sqrt{\DNR}}
}_{\text{(e)}},
\underbrace{\vphantom{\frac{\Es}{(4\M+1)\sqrt{\DNR}}}1}_{\text{(f)}}
\Bigg\}.
\]
Note that
$\text{(b)}=\sqrt{\text{(a)}{\cdot}\text{(c)}}$
and $\text{(d)}=\sqrt{\text{(a)}{\cdot}\text{(f)}}$,
hence
$\text{(b)}\ge \min\{\text{(a)},\text{(c)}\}$
and $\text{(d)}\ge \min\{\text{(a)},\text{(f)}\}$.
Therefore, the terms (b) and (d) in the expression of $\etas$ can never be the smallest among the arguments of the minimum,
and can thus  be discarded. This yields \eqref{etaopt} and proves the lemma.
\end{proof}
\vspace{-5mm}
\section{Proof of Lemma \ref{lembudget}}
\label{proofoflembudget}

\begin{proof}
If $\Es\leq(3\M+1)\Eloc$, Local SGD is unfeasible, hence Full DGD is the only feasible option.
Otherwise ($\Es{>}(3\M+1)\Eloc$), the stage-wise budget costs
 under Local SGD or Full DGD are given by \eqref{budgbo} and \eqref{budgbo2}:
\[
\Bg
{=}
\bg\,\left\lceil
\frac{\ln(3\M)}{-\ln(1-\mu\etag)}
\right\rceil,\ \Bd
{=}
\bd\,\left\lceil
\frac{\ln(4\M)}{-\ln(1-\mu\etad)}
\right\rceil,
\]
where $\etag$ and $\etad$ are the optimal stepsizes, given in \eqref{optetalocsgd}
and \eqref{etaopt}, respectively.
Since $\ln(3\M)<\ln(4\M)$ and $\bg\leq\bd$, it follows
\[
\Bg
<
\bd\,\left\lceil
\frac{\ln(4\M)}{-\ln(1-\mu\etag)}
\right\rceil.
\]
Therefore, to prove that $\Bg\leq\Bd$, it suffices to show that
\(
\etad\leq\etag.
\)
Using the expression of $\etag$ in \eqref{optetalocsgd}, this condition is equivalent to
\[
\etad\leq\frac{2}{L+\mu},\ 
\etad\leq\frac{2\mu L}{L+\mu}\left(\frac{\Es}{3\M\sigg}\right)^2.
\]
The first condition holds trivially since $\etad\leq\frac{1}{L+\mu}$ (term (f) in \eqref{etaopt}).
The second one holds trivially since $\etad\leq\frac{2\mu L}{L+\mu}\Big(\frac{\Es}{4\M\sigg}\Big)^2$ (term (c) in \eqref{etaopt}), thus proving the first part of the lemma.

Upon entering Full DGD, 
the error satisfies $\Es\le (3\M+1)\Eloc$.
At the same time,
from \eqref{Einit.c}, operating  in the initialization-dominated regime of Full DGD requires
\(
\Es
\ge
(4\M+1)\sqrt{\DNR}.
\)
A necessary condition is
\(
(3\M+1)\Eloc
\ge
(4\M+1)\sqrt{\DNR}.
\)
Equivalently, 
using \eqref{DNR},
\[
\frac{3\M+1}{4\M+1}
\ge
\frac{2\kappa}{L+\mu}
\frac{\|\nabla f(\xstar)\|}
{(1-\lambda_2)\Eloc}.
\]
However,
$$
\frac{4}{5}
\overset{(a)}{>}
\frac{3\M+1}{4\M+1}
\text{ and }\frac{2\kappa}{L+\mu}\frac{\|\nabla f(\xstar)\|}{(1-\lambda_2)\Eloc}
\twocol{$$$$}
\overset{(b)}{\geq}\frac{4\kappa}{(\kappa+1)(1-\lambda_2)}
\overset{(c)}{\geq}\frac{2}{1-\lambda_2}
\overset{(d)}{\geq}1,
$$
where (a) uses $\M>1$, (b) uses
 \eqref{objheter}, 
(c) uses
$\kappa\geq 1$,
and (d) uses $\lambda_2\geq-1$.
Therefore, $(3\M+1)\Eloc< (4\M+1)\sqrt{\DNR}$, and the initialization-dominated regime of Full DGD is infeasible.
\end{proof}
\vspace{-5mm}
\section{Multi-Stage Budget Bounds}
\label{app:cumbud}

\begin{theorem}
\label{thm:cumbud}
Let $E_{k_s}\le \Es$ be the RMSE error bound at the start of stage $s$, and let $\errt<\Es$ be a target error level. Then, $\E{s+S}\le \errt$ after
\begin{align}
\label{Sdef}
S \triangleq \left\lceil \log_\M\left(\Es/\errt\right) \right\rceil
\end{align}
stages.
Furthermore, 
assume that the stage lengths satisfy
\begin{align}
\label{fghjfsd}
\delta_r \le
\lceil
\nu\;\E{r}^{\,-n}
\rceil,
\qquad r=s,\ldots,s+S-1,
\end{align}
for some constants \(\nu>0\) and \(n\ge0\).
Then,
if each iteration incurs a cost of \(b\),
 the total budget required to reach $E_{k_{s+S}}{\le}\errt$ satisfies
\begin{align}
b\cdot
\sum_{r=s}^{s+S-1} \delta_r
\le
b\cdot\begin{cases}
S+\frac{\nu\M^{n}}{\M^{n}-1}\errt^{-n}, & n>0,\\
S+\frac{\nu}{\ln(\M)}\ln\left(\frac{\M\Es}{\errt}\right), & n=0.
\end{cases}
\label{Kbound}
\end{align}
Furthermore, suppose that $\Es^{\,n}\le 2\nu$ and that there exists $\alpha\ge0$ such that,
$\forall r=s,\dots,s+S-1$,
\begin{align}
\label{alpha}
E_t \le \alpha\,\E{r},
\qquad
\forall t=k_r,\dots,k_{r+1}-1.
\end{align}
Then, for all $t=k_s,\dots,k_{s+S}-1$,
\begin{align}
\label{Ebound}
E_{t}
\leq 
\begin{cases}
\frac{\M\alpha\left(\frac{2\nu}{\M^n-1}\right)^{1/n}}{(t-k_s+1)^{1/n}},
 & n>0 \\
\alpha\Es\cdot e^{-\frac{\ln(\M)}{2\nu}(t-k_s+1)}
, & n=0.
\end{cases}
\end{align}
\end{theorem}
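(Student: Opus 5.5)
The proof splits into three parts: the stage count, the cumulative budget bound \eqref{Kbound}, and the time-domain RMSE bound \eqref{Ebound}.

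\textbf{Stage count.} Since $\E{r}=\Es\M^{-(r-s)}$, the condition $\E{s+S}\le\errt$ is equivalent to $\M^{S}\ge \Es/\errt$. The smallest integer satisfying this is exactly \eqref{Sdef}. From the ceiling we also record the two facts used below, $\M^{S-1}<\Es/\errt$ and $S\le 1+\log_\M(\Es/\errt)$.

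\textbf{Budget bound \eqref{Kbound}.} Apply $\lceil x\rceil\le 1+x$ to \eqref{fghjfsd}, which gives
\[
\sum_{r=s}^{s+S-1}\delta_r\le S+\nu\sum_{r=s}^{s+S-1}\E{r}^{-n}.
\]
For $n>0$ the summands $\E{r}^{-n}=\Es^{-n}\M^{n(r-s)}$ form a geometric series increasing in $r$, so the sum is dominated by its last term. Writing $\E{s+S-1}=\M\,\E{s+S}$, the sum equals
\[
\E{s+S-1}^{-n}\sum_{j\ge0}\M^{-nj}\le \E{s+S-1}^{-n}\frac{\M^n}{\M^n-1}.
\]
The minimality of $S$ gives $\E{s+S-1}>\errt$, so $\E{s+S-1}^{-n}<\errt^{-n}$, and this yields the first case of \eqref{Kbound}. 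For $n=0$ the sum is $\nu S$. Bounding $S\le 1+\log_\M(\Es/\errt)=\ln(\M\Es/\errt)/\ln\M$ gives the second case. Both steps are routine.

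\textbf{RMSE bound \eqref{Ebound}.} This is the main obstacle, because it requires inverting the cumulative stage-length bound into a bound in terms of elapsed time. Fix $t$ in stage $r$, i.e. $k_r\le t<k_{r+1}$, and write $m=r-s$. By \eqref{alpha}, $E_t\le\alpha\E{r}$, so it suffices to upper bound $t-k_s+1$ in terms of $\E{r}$ and then invert.

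We have $t-k_s+1\le k_{r+1}-k_s=\sum_{q=s}^{r}\delta_q$. To absorb the ceiling I use the hypothesis $\Es^n\le 2\nu$. Since $\E{q}\le\Es$, it gives $\nu\E{q}^{-n}\ge 1/2$, hence $\lceil\nu\E{q}^{-n}\rceil\le 1+\nu\E{q}^{-n}\le 2\nu\E{q}^{-n}$. This is the origin of the factor $2\nu$ in \eqref{Ebound}.

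For $n>0$, the same geometric argument as above gives
\[
t-k_s+1\le 2\nu\,\E{r}^{-n}\frac{\M^n}{\M^n-1}=\frac{2\nu\,\M^n}{\M^n-1}\,\E{r}^{-n}.
\]
Solving for $\E{r}$,
\[
\E{r}\le \Big(\frac{2\nu\M^n}{\M^n-1}\Big)^{1/n}(t-k_s+1)^{-1/n}=\M\Big(\frac{2\nu}{\M^n-1}\Big)^{1/n}(t-k_s+1)^{-1/n},
\]
and multiplying by $\alpha$ gives the claimed bound.

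For $n=0$, each stage has length at most $2\nu$, so $t-k_s+1\le 2\nu(m+1)$. Therefore
\[
\E{r}=\Es\M^{-m}\le \Es\M\, e^{-\ln(\M)(t-k_s+1)/(2\nu)}.
\]
This carries an extra factor $\M$ relative to the stated bound. I would try to remove it with a slightly sharper count, treating the $m=0$ case separately. If that fails, the factor $\M$ must be absorbed into $\alpha$.
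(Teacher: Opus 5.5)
Your stage count, the budget bound \eqref{Kbound}, and the $n>0$ case of \eqref{Ebound} are correct, and they follow essentially the paper's route. The ingredients match: $\lceil x\rceil\le 1+x$, a geometric sum dominated by its last term, the minimality of $S$ (your $\E{s+S-1}>\errt$ is the paper's $\M^S/\Es<\M/\errt$), and $\lceil x\rceil\le 2x$ for $x\ge 1/2$ via $\Es^{\,n}\le 2\nu$.

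The only open point is the $n=0$ case, and there your computation is right and the statement is not. The extra factor $\M$ cannot be removed, so the planned ``sharper count'' will fail.

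\begin{itemize}
\item For every $t$ in the first stage ($r=s$), the hypotheses give only $E_t\le\alpha\Es$. The stated bound would assert $E_t\le\alpha\Es\,\M^{-(t-k_s+1)/(2\nu)}<\alpha\Es$, which does not follow.
\item Concretely, take $\nu=1/2$, $\delta_r=1$, $\alpha=1$, and $E_t=\E{r}$ on stage $r$. This satisfies all hypotheses, including $\Es^{\,0}=1\le 2\nu$ and $E_{k_s}\le\Es$.
\item In that example $E_t=\Es\M^{-(t-k_s)}$, while the claimed bound is $\Es\M^{-(t-k_s+1)}$. It is violated at every $t$ by exactly the factor $\M$, and your bound is tight there.
\end{itemize}

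The paper's own proof reaches the stated form only through the step ``$t-k_s+1\le 2\nu(r-s+1)$, equivalently $\M^{-(r-s)}\le\M^{-(t-k_s+1)/(2\nu)}$''. That step is invalid. The inequality gives only $r-s\ge (t-k_s+1)/(2\nu)-1$, which is exactly your bound. It drops the same factor $\M$ that the paper correctly keeps in the $n>0$ case.

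The correct conclusion for $n=0$ is therefore $E_t\le \alpha\M\Es\, e^{-\frac{\ln(\M)}{2\nu}(t-k_s+1)}$. Present it as a corrected statement, i.e., \eqref{Ebound} with $\alpha$ replaced by $\M\alpha$. Do not describe it as ``absorbing'' $\M$ into $\alpha$, since $\alpha$ is fixed by hypothesis \eqref{alpha}. Downstream, the prefactor $1.6$ in \eqref{ngd} becomes roughly $1.6\sqrt{2}\approx 2.3$.
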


\begin{remark}
The regimes investigated in \secref{locsgdan} and \secref{dgdan} satisfy the conditions of Theorem~\ref{thm:cumbud} for specific choices of $\alpha$, $\errt$, $\nu$, and $n$.
\end{remark}

\begin{proof}
Consider a generic stage $s$ starting at iteration $k_s$, with error bound $\Es$. 
We aim to bound the total number of stages $S$ required to ensure $\E{s+S}\leq \errt$, the corresponding number of iterations $K=k_{s+S}-k_s$ and budget cost $b\cdot K$.
Under the conditions of the theorem, the error is reduced by a factor $\M$ at every stage. Hence $S$ is the smallest integer such that
$
\E{s+S}=\Es\cdot \M^{-S}\leq \errt,
$
which yields \eqref{Sdef}.

We now bound the corresponding number of iterations. 
In stage $r=s,\dots, s+S-1$, the stage length is bounded as
 in \eqref{fghjfsd}. Using $\lceil x\rceil\leq 1+x$ and $\E{r}=\M^{-(r-s)}\Es$ yields
$$
\delta_{r}\leq 1+
\nu \M^{(r-s) n}\Es^{-n}.
$$
If $n>0$, we then bound the total number of iterations as
$$
K=\sum_{r=s}^{s+S-1}\delta_{r}\leq
S+\nu\sum_{q=0}^{S-1}\M^{q n}\Es^{-n}
=
S+\frac{\nu}{\Es^n}\frac{\M^{nS}-1}{\M^{n}-1}
\twocol{$$$$}
\leq
S+\frac{\nu}{\M^{n}-1}\Big(\frac{\M^S}{\Es}\Big)^n.
$$
Finally, note that by definition of $S$
$$
\frac{\Es}{\M^{S-1}}>\errt
\quad\text{i.e.,}\quad
\frac{\M^{S}}{\Es}< \frac{\M}{\errt}.
$$
Using this bound, we then  obtain \eqref{Kbound} for the case $n>0$.

By repeating the same steps for the special case $n=0$,
we find $\delta_r\leq 1+\nu$, hence
$
K\leq S+S\nu$.
Using
$\frac{\M^{S}}{\Es}< \M/\errt\Rightarrow S\leq \log_\M(\M\Es/\errt)$,
yields \eqref{Kbound} for the case $n=0$.

Finally, consider $t = k_s,\dots,k_s+K-1$ and let $r$ be the unique integer such that
\begin{align}
\label{ghjkgfd}
\sum_{q=s}^{r-1}\delta_{q}\leq t-k_s\leq\sum_{q=s}^{r}\delta_{q}-1.
\end{align}
Then the $t$-th iteration falls in stage $r$, and therefore
\begin{align}
\label{ghjkgfd2}
E_{t}\le \alpha\,\E{r}= \alpha\Es\,\M^{-(r-s)}
\end{align}
by the condition \eqref{alpha} of the theorem.
To further bound $\M^{-(r-s)}$, note that
\begin{align}
\delta_q \le
\left\lceil
\frac{\nu}{\E{q}^{\,n}}
\right\rceil
=
\left\lceil
\frac{\nu\M^{(q-s)n}}{\Es^{\,n}}
\right\rceil
\leq \frac{2\nu\M^{(q-s)n}}{\Es^{\,n}},
\end{align}
where we used the fact that $\frac{\nu\M^{(r-s)n}}{\Es^{\,n}}\geq\frac{\nu}{\Es^{\,n}}\geq 0.5$
under the condition $\Es^n\leq2\nu$ of the theorem, 
combined with $\lceil x\rceil\leq 2x$ for $x\geq 0.5$.
Continuing from \eqref{ghjkgfd} for the case $n>0$, we obtain
\[t-k_s+1\leq
\sum_{q=s}^{r}\delta_{q}
\leq
2\nu\sum_{q=s}^{r}\frac{\M^{(q-s) n}}{\Es^n}
\leq
\frac{2\nu}{\Es^n}\frac{\M^{n(r-s+1)}}{\M^n-1}.
\]
Equivalently,
\[
\M^{-(r-s)}\leq
\M\Big(\frac{(\M^n-1)\Es^n}{2\nu}(t-k_s+1)\Big)^{-1/n},
\]
so that \eqref{Ebound} 
follows
for the case $n>0$, after replacing this bound into \eqref{ghjkgfd2}.

Repeating the same steps
for $n=0$ gives
\[
t-k_s+1\leq 
\sum_{q=s}^{r}\delta_{q}\leq 2\nu(r-s+1).
\]
Equivalently,
\(
\M^{-(r-s)}\leq \M^{-\frac{t-k_s+1}{2\nu}},
\)
so that \eqref{Ebound} 
follows after replacing this bound into \eqref{ghjkgfd2}.
The theorem is thus proved.
\end{proof}
\vspace{-5mm}
   \bibliographystyle{IEEEtran}
   \twocol{\balance}
\bibliography{IEEEabrv,biblio}

\begin{thebibliography}{10}
\providecommand{\url}[1]{#1}
\csname url@samestyle\endcsname
\providecommand{\newblock}{\relax}
\providecommand{\bibinfo}[2]{#2}
\providecommand{\BIBentrySTDinterwordspacing}{\spaceskip=0pt\relax}
\providecommand{\BIBentryALTinterwordstretchfactor}{4}
\providecommand{\BIBentryALTinterwordspacing}{\spaceskip=\fontdimen2\font plus
\BIBentryALTinterwordstretchfactor\fontdimen3\font minus
  \fontdimen4\font\relax}
\providecommand{\BIBforeignlanguage}[2]{{%
\expandafter\ifx\csname l@#1\endcsname\relax
\typeout{** WARNING: IEEEtran.bst: No hyphenation pattern has been}%
\typeout{** loaded for the language `#1'. Using the pattern for}%
\typeout{** the default language instead.}%
\else
\language=\csname l@#1\endcsname
\fi
#2}}
\providecommand{\BIBdecl}{\relax}
\BIBdecl

\bibitem{6494683}
S.~Kar and J.~M. Moura, ``{Consensus + innovations distributed inference over
  networks: cooperation and sensing in networked systems},'' \emph{IEEE Signal
  Processing Magazine}, vol.~30, no.~3, pp. 99--109, 2013.

\bibitem{Nedic2018}
A.~Nedi\'{c}, J.-S. Pang, G.~Scutari, and Y.~Sun, \emph{Multi-agent
  Optimization}, 1st~ed.\hskip 1em plus 0.5em minus 0.4em\relax Springer, Cham,
  2018.

\bibitem{YANG2019278}
T.~Yang, X.~Yi, J.~Wu, Y.~Yuan, D.~Wu, Z.~Meng, Y.~Hong, H.~Wang, Z.~Lin, and
  K.~H. Johansson, ``{A survey of distributed optimization},'' \emph{Annual
  Reviews in Control}, vol.~47, pp. 278--305, 2019.

\bibitem{10103556}
Y.~Ji, G.~Scutari, Y.~Sun, and H.~Honnappa, ``{Distributed (ATC) Gradient
  Descent for High Dimension Sparse Regression},'' \emph{IEEE Transactions on
  Information Theory}, vol.~69, no.~8, pp. 5253--5276, 2023.

\bibitem{9475989}
Y.~Xiao, Y.~Ye, S.~Huang, L.~Hao, Z.~Ma, M.~Xiao, S.~Mumtaz, and O.~A. Dobre,
  ``{Fully Decentralized Federated Learning-Based On-Board Mission for UAV
  Swarm System},'' \emph{IEEE Communications Letters}, vol.~25, no.~10, pp.
  3296--3300, 2021.

\bibitem{8950073}
S.~Savazzi, M.~Nicoli, and V.~Rampa, ``{Federated Learning With Cooperating
  Devices: A Consensus Approach for Massive IoT Networks},'' \emph{IEEE
  Internet of Things Journal}, vol.~7, no.~5, pp. 4641--4654, 2020.

\bibitem{Nedic2009grad}
A.~Nedi\'{c} and A.~Ozdaglar, ``Distributed subgradient methods for multi-agent
  optimization,'' \emph{{IEEE} Trans. Autom. Control}, vol.~54, no.~1, pp.
  48--61, Jan. 2009.

\bibitem{Yuan2016}
K.~Yuan, Q.~Ling, and W.~Yin, ``{On the Convergence of Decentralized Gradient
  Descent},'' \emph{SIAM Journal on Optimization}, vol.~26, no.~3, pp.
  1835--1854, 2016.

\bibitem{Nedic2009}
A.~Nedi\'{c}, A.~Olshevsky, A.~Ozdaglar, and J.~N. Tsitsiklis, ``On distributed
  averaging algorithms and quantization effects,'' \emph{{IEEE} Trans. Autom.
  Control}, vol.~54, no.~11, pp. 2506--2517, Nov. 2009.

\bibitem{Lian17}
X.~Lian, C.~Zhang, H.~Zhang, C.-J. Hsieh, W.~Zhang, and J.~Liu., ``Can
  decentralized algorithms outperform centralized algorithms? {A} case study
  for decentralized parallel stochastic gradient descent,'' in \emph{Proc. 31st
  NeurIPS}, Dec. 2017.

\bibitem{Kar2009}
S.~Kar and J.~M.~F. Moura, ``Distributed consensus algorithms in sensor
  networks with imperfect communication: Link failures and channel noise,''
  \emph{{IEEE} Trans. Inf. Theory}, vol.~57, no.~1, pp. 355--369, Jan. 2009.

\bibitem{10947567}
E.~G. Larsson and N.~Michelusi, ``Unified analysis of decentralized gradient
  descent: A contraction mapping framework,'' \emph{IEEE Open Journal of Signal
  Processing}, vol.~6, pp. 507--529, 2025.

\bibitem{10680589}
N.~Michelusi, ``Non-coherent over-the-air decentralized gradient descent,''
  \emph{IEEE Transactions on Signal Processing}, vol.~72, pp. 4618--4634, 2024.

\bibitem{michelusi26}
\BIBentryALTinterwordspacing
------, ``Interference-robust non-coherent over-the-air computation for
  decentralized optimization,'' 2026, {IEEE ICC, to appear}. [Online].
  Available: \url{https://arxiv.org/abs/2602.12426}
\BIBentrySTDinterwordspacing

\bibitem{9562482}
R.~Saha, S.~Rini, M.~Rao, and A.~J. Goldsmith, ``Decentralized optimization
  over noisy, rate-constrained networks: Achieving consensus by communicating
  differences,'' \emph{IEEE Journal on Selected Areas in Communications},
  vol.~40, no.~2, pp. 449--467, 2022.

\bibitem{9772390}
Z.~Jiang, G.~Yu, Y.~Cai, and Y.~Jiang, ``{Decentralized Edge Learning via
  Unreliable Device-to-Device Communications},'' \emph{IEEE Transactions on
  Wireless Communications}, vol.~21, no.~11, pp. 9041--9055, 2022.

\bibitem{9716792}
H.~Ye, L.~Liang, and G.~Y. Li, ``{Decentralized Federated Learning With
  Unreliable Communications},'' \emph{IEEE Journal of Selected Topics in Signal
  Processing}, vol.~16, no.~3, pp. 487--500, 2022.

\bibitem{9838891}
E.~Jeong, M.~Zecchin, and M.~Kountouris, ``{Asynchronous Decentralized Learning
  over Unreliable Wireless Networks},'' in \emph{IEEE International Conference
  on Communications}, 2022, pp. 607--612.

\bibitem{8786146}
A.~Reisizadeh, A.~Mokhtari, H.~Hassani, and R.~Pedarsani, ``An exact quantized
  decentralized gradient descent algorithm,'' \emph{IEEE Transactions on Signal
  Processing}, vol.~67, no.~19, pp. 4934--4947, 2019.

\bibitem{Koloskova2019}
A.~Koloskova, S.~U. Stich, and M.~Jaggi, ``Decentralized stochastic
  optimization and gossip algorithms with compressed communication,'' in
  \emph{Proc. 36th ICML}, Jun. 2019.

\bibitem{9782148}
N.~Michelusi, G.~Scutari, and C.-S. Lee, ``{Finite-Bit Quantization for
  Distributed Algorithms With Linear Convergence},'' \emph{IEEE Transactions on
  Information Theory}, vol.~68, no.~11, pp. 7254--7280, 2022.

\bibitem{Shi2015EXTRA}
W.~Shi, Q.~Ling, G.~Wu, and W.~Yin, ``{EXTRA: An Exact First-Order Algorithm
  for Decentralized Consensus Optimization},'' \emph{SIAM J. Optim.}, vol.~25,
  pp. 944--966, May 2015.

\bibitem{8491372}
K.~Yuan, B.~Ying, X.~Zhao, and A.~H. Sayed, ``{Exact Diffusion for Distributed
  Optimization and Learning--Part I: Algorithm Development},'' \emph{IEEE
  Transactions on Signal Processing}, vol.~67, no.~3, pp. 708--723, 2019.

\bibitem{7798263}
G.~Qu and N.~Li, ``Harnessing smoothness to accelerate distributed
  optimization,'' in \emph{2016 IEEE 55th Conference on Decision and Control
  (CDC)}, 2016, pp. 159--166.

\bibitem{7398129}
P.~D. Lorenzo and G.~Scutari, ``Next: In-network nonconvex optimization,''
  \emph{IEEE Transactions on Signal and Information Processing over Networks},
  vol.~2, no.~2, pp. 120--136, 2016.

\bibitem{Tang2018}
H.~Tang, S.~Gan, C.~Zhang, T.~Zhang, and J.~Liu, ``Communication compression
  for decentralized training,'' in \emph{Proc. 32nd NeurIPS}, Dec. 2018.

\bibitem{7405263}
A.~Nedi\'{c} and A.~Olshevsky, ``{Stochastic Gradient-Push for Strongly Convex
  Functions on Time-Varying Directed Graphs},'' \emph{IEEE Transactions on
  Automatic Control}, vol.~61, no.~12, pp. 3936--3947, 2016.

\bibitem{9563232}
H.~Xing, O.~Simeone, and S.~Bi, ``{Federated Learning Over Wireless
  Device-to-Device Networks: Algorithms and Convergence Analysis},'' \emph{IEEE
  Journal on Selected Areas in Communications}, vol.~39, no.~12, pp.
  3723--3741, 2021.

\bibitem{Kovalev2020}
D.~Kovalev, A.~Koloskova, M.~Jaggi, P.~Richt\'{a}rik, and S.~U. Stich, ``A
  linearly convergent algorithm for decentralized optimization: Sending less
  bits for free!'' in \emph{Proc. 24th AISTATS}, Apr. 2021.

\bibitem{Liao2021}
Y.~Liao, Z.~Li, K.~Huang, and S.~Pu, ``A compressed gradient tracking method
  for decentralized optimization with linear convergence,'' \emph{IEEE Trans.
  on Automatic Control}, vol.~67, no.~10, pp. 5622--5629, 2022.

\bibitem{Magnusson2020}
S.~{Magn\'usson}, H.~{Shokri-Ghadikolaei}, and N.~{Li}, ``{On Maintaining
  Linear Convergence of Distributed Learning and Optimization Under Limited
  Communication},'' \emph{{IEEE} Trans. Signal Process.}, vol.~68, pp.
  6101--6116, 2020.

\bibitem{berahas2018balancing}
A.~S. Berahas, R.~Bollapragada, N.~S. Keskar, and E.~Wei, ``{Balancing
  Communication and Computation in Distributed Optimization},'' \emph{IEEE
  Transactions on Automatic Control}, vol.~64, no.~8, pp. 3141--3155, 2019.

\bibitem{berahas2021convergence}
A.~S. Berahas, R.~Bollapragada, and E.~Wei, ``{On the Convergence of Nested
  Decentralized Gradient Methods With Multiple Consensus and Gradient Steps},''
  \emph{IEEE Transactions on Signal Processing}, vol.~69, pp. 4192--4203, 2021.

\bibitem{choi2022convergence}
W.~Choi, D.~Kim, and S.-B. Yun, ``{Convergence results of a nested
  decentralized gradient method for non-strongly convex problems},''
  \emph{Journal of Optimization Theory and Applications}, vol. 195, no.~1, pp.
  172--204, Aug. 2022.

\bibitem{6854643}
V.~Schwarz, G.~Hannak, and G.~Matz, ``{On the convergence of average consensus
  with generalized metropolis-hasting weights},'' in \emph{IEEE International
  Conference on Acoustics, Speech and Signal Processing (ICASSP)}, 2014, pp.
  5442--5446.

\bibitem{7721743}
M.~Centenaro, L.~Vangelista, A.~Zanella, and M.~Zorzi, ``Long-range
  communications in unlicensed bands: the rising stars in the iot and smart
  city scenarios,'' \emph{IEEE Wireless Communications}, vol.~23, no.~5, pp.
  60--67, 2016.

\bibitem{matrixanalysis}
R.~A. Horn and C.~R. Johnson, \emph{Matrix Analysis}.\hskip 1em plus 0.5em
  minus 0.4em\relax Cambridge University Press, 1999.

\bibitem{doi:10.1137/070704277}
A.~Nemirovski, A.~Juditsky, G.~Lan, and A.~Shapiro, ``{Robust Stochastic
  Approximation Approach to Stochastic Programming},'' \emph{SIAM Journal on
  Optimization}, vol.~19, no.~4, pp. 1574--1609, 2009.

\bibitem{6259916}
F.~Iutzeler, P.~Ciblat, and J.~Jakubowicz, ``Analysis of max-consensus
  algorithms in wireless channels,'' \emph{IEEE Transactions on Signal
  Processing}, vol.~60, no.~11, pp. 6103--6107, 2012.

\bibitem{florescu2013handbook}
I.~Florescu and C.~A. Tudor, \emph{Handbook of Probability}.\hskip 1em plus
  0.5em minus 0.4em\relax Wiley, 2013.

\bibitem{rudin1953principles}
W.~Rudin, \emph{Principles of Mathematical Analysis}.\hskip 1em plus 0.5em
  minus 0.4em\relax McGraw-Hill, 1953.

\end{thebibliography}

\end{document}